\documentclass[11pt,a4paper]{article}
\usepackage{bbding}
\fussy
\usepackage[pagebackref=false,citecolor=newblue,colorlinks=true,linkcolor=newblue,bookmarks=false]{hyperref}

\usepackage[font=scriptsize,bf]{caption}
\usepackage{epsfig,amssymb,amsfonts,amsmath,amsthm}
\usepackage[margin=1in]{geometry}

\usepackage[ruled,vlined, linesnumbered]{algorithm2e}
\usepackage[noend]{algpseudocode}

\SetKwInput{KwInput}{Input}                % Set the Input
\SetKwInput{KwOutput}{Output}              % set the Output

\usepackage{amsopn}

\usepackage{xcolor}
\usepackage{rotating}
\usepackage{enumerate}
\usepackage{enumitem}

\usepackage{mdframed}

\usepackage{wrapfig}
\usepackage{subcaption}
%Packages for figures
\usepackage{tikz}
% \usetikzlibrary{shapes,backgrounds}
\usetikzlibrary{shapes.geometric,positioning}
\usetikzlibrary{arrows.meta}

% ORCID ID
% \usepackage{academicons}
% \usepackage{svg}
% \newcommand{\orcid}[1]{\href{https://orcid.org/#1}{\includesvg[width=10pt]{orcid}}}

% Make Orcid icon
\definecolor{lime}{HTML}{A6CE39}
\DeclareRobustCommand{\orcidicon}{%
	\begin{tikzpicture}
	\draw[lime, fill=lime] (0,0) 
	circle [radius=0.16] 
	node[white] {{\fontfamily{qag}\selectfont \tiny ID}};
	\draw[white, fill=white] (-0.0625,0.095) 
	circle [radius=0.007];
	\end{tikzpicture}
	\hspace{-2mm}
}

\foreach \x in {A, ..., Z}{%
	\expandafter\xdef\csname orcid\x\endcsname{\noexpand\href{https://orcid.org/\csname orcidauthor\x\endcsname}{\noexpand\orcidicon}}
}

% Define the ORCID iD command for each author separately. Here done for two authors.

\numberwithin{equation}{section}

%%%%%%%%%%%%%%%%%%%%%%%%%%%%%%%%%%%%%%%%%%%%%%%%%%%%%%%%%%%%%%%%%
%%%%%%%%%%%%%%%%%%%% Colours %%%%%%%%%%%%%%%%%%%%%%%%%%%%%%%%%%%%
\definecolor{ocre}{rgb}{0.72,0,0}
\definecolor{MyMango}{rgb}{1.00, 0.47, 0.20}
\definecolor{brickred}{rgb}{0.8, 0.25, 0.33}
\definecolor{newblue}{rgb}{0.2,0.2,0.6} 

\definecolor{babyblueeyes}{rgb}{0.63, 0.79, 0.95}
\definecolor{newgreen}{rgb}{0.53,0.66,0.42}
\definecolor{newred}{rgb}{0.67,0.16,0}

\definecolor{forestgreen}{rgb}{0.13, 0.55, 0.13}
\definecolor{applegreen}{rgb}{0.55, 0.71, 0.0}
\definecolor{amethyst}{rgb}{0.6, 0.4, 0.8}

%%%%%%%%%%%%%%%%%%%%%%%%%%%%%%%%%%%%%%%%%%%%%%%%%%%%%%%%%%%%%%%%%
%%%%%%%%%%%%%%%%%%%% Probability %%%%%%%%%%%%%%%%%%%%%%%%%%%%%%%%

%%%%%%%%%%%%%%%%%%%%%%%%%%%%%%%%%%%%%%%%%%%%%%%%%%%%%%%%%%%%%%%%%
%%%%%%%%%%%%%%%%%%%% Graph Notation %%%%%%%%%%%%%%%%%%%%%%%%%%%%%
\newcommand{\Deg}{\mathbf{D}}
\newcommand{\Adj}{\mathbf{A}}
\newcommand{\NLap}{\mathbf{\mathcal{L}}}
\newcommand{\davg}{d_{\mathrm{avg}}}
\newcommand{\dmin}{d_{\min}}
\newcommand{\dmax}{d_{\max}}

%%%%%%%%%%%%%%%%%%%%%%%%%%%%%%%%%%%%%%%%%%%%%%%%%%%%%%%%%%%%%%%%%
%%%%%%%%%%%%%%%%%%%%%%%%%% Clusters %%%%%%%%%%%%%%%%%%%%%%%%%%%%%

\newcommand{\vol}{\mathrm{vol}}

\newcommand{\phiIn}{\phi_{\mathrm{in}}}
\newcommand{\phiOut}{\phi_{\mathrm{out}}}

%%%%%%%%%%%%%%%%%%%%%%%%%%%%%%%%%%%%%%%%%%%%%%%%%%%%%%%%%%%%%%%%%
%%%%%%%%%%%%%%%%%%%% Matrix Notation %%%%%%%%%%%%%%%%%%%%%%%%%%%%

%%%%%%%%%%%%%%%%%%%%%%%%%%%%%%%%%%%%%%%%%%%%%%%%%%%%%%%%%%%%%%%%%
%%%%%%%%%%%%%%%%%%%% Greek Letters %%%%%%%%%%%%%%%%%%%%%%%%%%%%%%

%%%%%%%%%%%%%%%%%%%%%%%%%%%%%%%%%%%%%%%%%%%%%%%%%%%%%%%%%%%%%%%%%
%%%%%%%%%%%%%%%%%%%%%% Parenthesis %%%%%%%%%%%%%%%%%%%%%%%%%%%%%%
\newcommand{\lp}{\left (}
\newcommand{\rp}{\right )}

\newcommand{\rsp}{\right ]}

%%%%%%%%%%%%%%%%%%%%%%%%%%%%%%%%%%%%%%%%%%%%%%%%%%%%%%%%%%%%%%%%%
%%%%%%%%%%%% Theorems, Definitions, Remarks %%%%%%%%%%%%%%%%%%%%%
\newtheorem{definition}{Definition}[section]

\newtheorem{lemma}[definition]{Lemma}

\newtheorem{claim}[definition]{Claim}
\newtheorem{theorem}{Theorem}%Theorems have own counter

\newtheorem{example}[definition]{Example}

%%%%%%%%%%%%%%%%%%%%%%%%%%%%%%%%%%%%%%%%%%%%%%%%%%%%%%%%%%%%%%%%%
%%%%%%%%%%%%%%%%%%%% Authors Commands %%%%%%%%%%%%%%%%%%%%%%%%%%%

%%%%%%%%%%%%%%%%%%%%%%%%%%%%%%%%%%%%%%%%%%%%%%%%%%%%%%%%%%%%%%%%%
%%%%%%%%%%%%%%%%%%%% Hierarchical Clustering %%%%%%%%%%%%%%%%%%%%

\newcommand{\leaves}{\mathsf{leaves}}
\newcommand{\OPT}{\mathsf{OPT}}
\newcommand{\cost}{\mathrm{cost}}

\newcommand{\parent}{\mathrm{parent}}

\newcommand{\T}{\mathcal{T}}

\newcommand{\TPrune}{\T_{\mathrm{PM}}}
\newcommand{\core}{\mathrm{core}}

%%%%%%%%%%%%%%%%%%%%%%%%%%%%%%%%%%%%%%%%%%%%%%%%%%%%%%%%%%%%%%%%%
%%%%%%%%%%%%%%%%%%%% Miscellaneous %%%%%%%%%%%%%%%%%%%%%%%%%%%%%%

\newcommand{\poly}{\ensuremath{\mathrm{poly}}}
\newcommand{\barr}{\overline}
\providecommand{\abs}[1]{\lvert#1\rvert}

\newcommand{\ceil}[1]{\left \lceil #1 \right \rceil}
\newcommand{\imin}{i_{\min}}
\newcommand{\imax}{i_{\max}}

%%%%%%%%%%%%%%%%%%%%%%%%%%%%%%%%%%%%%%%%%%%%%%%%%%%%%%%%%%%%%%%%%
%%%%%%%%%%%%%%%%%%%% Algorithms %%%%%%%%%%%%%%%%%%%%%%%%%%%%%%%%%
\newcommand{\SpecPart}{\texttt{Spectral Partitioning} }

%%%%%%%%%%%%%%%%%%%%%%%%%%%%%%%%%%%%%%%%%%%%%%%%%%%%%%%%%%%%%%%%%
%%%%%%%%%%%%%%%%%%%% Conferences, Journals %%%%%%%%%%%%%%%%%%%%%%

\allowdisplaybreaks

\title{Hierarchical Clustering: \\ $O(1)$-Approximation for Well-Clustered Graphs\footnote{A preliminary version of the work  appeared at the 35th Conference on Neural Information Processing Systems~(NeurIPS~'21).}}  

\author{Bogdan-Adrian Manghiuc\footnote{School of Informatics,    University of Edinburgh, UK. \url{b.a.manghiuc@sms.ed.ac.uk}. This work is supported by an EPSRC Doctoral Training Studentship (EP/R513209/1). \orcidA}
\and He Sun\footnote{School of Informatics, University of Edinburgh, UK. \url{h.sun@ed.ac.uk}. This work is supported by an EPSRC Early Career Fellowship~(EP/T00729X/1).}  }
\date{}

\begin{document}
\maketitle
 
\thispagestyle{empty}

\setcounter{page}{0}

\begin{abstract}
Hierarchical clustering  studies a recursive partition of a data set into clusters of successively smaller size, and is a fundamental problem in data analysis. 
In this work we study the cost function for hierarchical clustering introduced by Dasgupta~\cite{dasgupta2016cost}, and present two polynomial-time approximation algorithms:  
Our first result is an $O(1)$-approximation algorithm for  graphs of high conductance. Our simple construction bypasses complicated recursive routines of finding sparse cuts known in the literature~(e.g., \cite{cohen2018hierarchical,charikar2017approximate}).
Our second and main result is an $O(1)$-approximation algorithm for a wide family of graphs that exhibit a well-defined structure of clusters.
This result generalises the previous state-of-the-art~\cite{CAKMT17}, which holds only for graphs generated from  stochastic models. The significance of our work is demonstrated by the   empirical analysis on both synthetic and real-world data sets, on which our presented algorithm outperforms  the previously proposed algorithm for graphs with a well-defined cluster structure \cite{CAKMT17}.
\end{abstract}

\newpage
\thispagestyle{empty}
\setcounter{page}{0}
\tableofcontents
\newpage

\section{Introduction}
Hierarchical clustering~(\textsf{HC}) studies a recursive partition of a data set into clusters of successively smaller size, via an effective binary tree representation. As a basic technique, hierarchical clustering has been employed as a standard package in  data analysis, and has comprehensive applications in practice. 
While traditionally \textsf{HC} trees are constructed through bottom-up (agglomerative) heuristics, which lacked a clearly-defined objective function, Dasgupta~\cite{dasgupta2016cost} has recently introduced a simple objective function to measure the quality of a particular hierarchical clustering and his work has inspired a number of research on  this topic~\cite{alon2020hierarchical,CAKMT17, cohen2018hierarchical,charikar2017approximate,charikar2019better_than_AL, charikar2019euclidean,  moseley2017approximation, roy2017hierarchical}. 
Consequently, there has been a significant interest in studying efficient \textsf{HC} algorithms that not only work in practice, but also have proven theoretical guarantees with respect to Dasgupta's cost function.

\subsection{Our contribution}  

We present two new approximation algorithms for constructing \textsf{HC} trees that can be rigorously analysed with respect to Dasgupta's cost function.
For our first result, we construct an  \textsf{HC} tree of an input graph $G$  \emph{entirely} based on the degree sequence of $V(G)$, and   show that the approximation guarantee of our constructed tree is with respect to the conductance of $G$, which will be defined formally in Section~\ref{sec:pre}. 
The striking fact of this result is that, for any $n$-vertex graph $G$ with $m$ edges and conductance $\Omega(1)$~(a.k.a. expander graph), an $O(1)$-approximate \textsf{HC} tree of $G$ can be very easily constructed in  $O(m + n\log n)$ time, although obtaining such result for general graphs in polynomial time is impossible under the Small Set Expansion Hypothesis~(\textsf{SSEH})~\cite{charikar2017approximate}. 
Our theorem is in line with a sequence of results for problems that are naturally linked to the Unique Games and Small Set Expansion problems: it has been shown that such problems are much easier to solve once the input instance exhibits the high  conductance property~\cite{ABS15,AKKSTV08,Kolla10,Li0Z19}.
However, to the best of our knowledge, our result
is the first of this type for hierarchical clustering, and can be
 informally described as follows:

\begin{theorem}[informal statement of Theorem~\ref{thm:degree}]\label{thm:degree_informal}
Given any graph $G=(V,E,w)$ with constant  conductance  as input, there is an algorithm that  runs in $O(|E| + |V| \log (|V|))$ time and returns an $O(1)$-approximate \textsf{HC} tree of $G$.
\end{theorem}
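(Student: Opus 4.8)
The plan is to exhibit an explicit \textsf{HC} tree built purely from the degree sequence and to bound its Dasgupta cost from above and the optimum from below. First I would recall the standard reformulation of Dasgupta's cost: for a tree $\T$ on the vertices of $G=(V,E,w)$, $\cost_G(\T)=\sum_{\{u,v\}\in E} w(u,v)\cdot |\leaves(\T[u\vee v])|$, where $u\vee v$ is the least common ancestor of $u$ and $v$. The key structural idea is that when $G$ has conductance $\phi=\Omega(1)$, every cut is "balanced enough" that essentially any reasonably balanced tree is near-optimal; the role of the degree sequence is only to make the construction canonical and computable in $O(|E|+|V|\log|V|)$ time. Concretely, I would sort the vertices by degree and build a caterpillar-like or balanced binary tree $\TCat$ in which, at every internal node, the vertex set $S$ is split into two parts whose volumes $\vol(\cdot)$ are as equal as possible (greedily peeling off vertices by decreasing degree until roughly half the volume is removed). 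Computing this split at every level costs $O(|E|+|V|\log|V|)$ total.

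For the upper bound on $\cost_G(\TCat)$, I would charge each edge $\{u,v\}$ to the level at which $u$ and $v$ are separated: an edge separated at a node with leaf set $S$ contributes $w(u,v)\cdot|S|\le w(u,v)\cdot|S|$, and I bound the total over all edges cut at node $S$ by $w(\partial S')\cdot |S|$ where $S'$ is the smaller side. Since the split is volume-balanced, $\vol(S')\ge \vol(S)/3$ (say), and because $G$ is an expander, $w(\partial S')\le \vol(S')\le \vol(S)$ trivially, but more usefully the number of edges internal to $S$ that get cut telescopes nicely: summing $w(\partial S')\cdot|S|$ over a volume-halving recursion gives $O(\log n)$ copies of the total weight at the coarsest relevant scale. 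This yields $\cost_G(\TCat)=O\!\big(\sum_{\{u,v\}\in E} w(u,v)\cdot(\text{depth-weighted size})\big)$, and with volume-balancing the depth is $O(\log n)$ only in the volume sense; the honest bound I expect is $\cost_G(\TCat)\le O(1/\phi)\cdot \sum_{v}\deg(v)\cdot(\text{something})$ — I will need to be careful to extract a clean $O(1)$ rather than $O(\log n)$ factor here, using that each edge is cut exactly once and that the balanced split controls $|S|$ against $\vol(S)$.

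For the lower bound, I would use the well-known fact that $\OPT=\Omega\big(\sum_{\{u,v\}\in E} w(u,v)\big)\cdot \Omega(n)$ is false in general, so instead I invoke the sparsest-cut-based lower bound: for any tree and any internal node with leaf set $S$, the edges crossing the split contribute at least $w(\partial_G S)\cdot \min(|S_1|,|S_2|)$ ideas, but the cleanest route is the spectral lower bound $\OPT \ge \tfrac{1}{\poly}\cdot n \cdot \sum_{\{u,v\}\in E} w(u,v)\cdot \lambda_2/\deg_{\mathrm{avg}}$-type estimate; more directly, since $G$ has conductance $\Omega(1)$, a result of the type "every cut $(T,\bar T)$ has $w(\partial T)\ge \phi\cdot\min(\vol(T),\vol(\bar T))$" forces, in \emph{any} tree, a $\Omega(\vol(V))$ lower bound on the cost contributed near the root, and recursing gives $\OPT=\Omega(\phi\cdot n \cdot m)$ up to the relevant normalisation. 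Comparing with the upper bound and treating $\phi=\Omega(1)$ as constant yields the $O(1)$ ratio. The main obstacle I anticipate is the upper bound's factor: a naive volume-balanced recursion naturally produces an $O(\log n)$ overhead, and removing it requires exploiting expansion a second time — specifically, that within any vertex subset $S$ the induced subgraph still expands well enough that the sub-tree cost is dominated by the top split rather than accumulating over $\log n$ levels. Handling the case where induced subgraphs of an expander need not themselves be expanders (so one cannot recurse on conductance directly) is the delicate point, and I would resolve it by charging against $\vol_G(S)$ rather than $\vol_{G[S]}(S)$ throughout, so that the geometric decay of $\vol_G(S)$ down the tree kills the $\log n$ and leaves a clean $O(1/\phi)=O(1)$ bound.
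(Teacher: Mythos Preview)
Your proposal has a genuine gap at its foundation: the intuition that ``essentially any reasonably balanced tree is near-optimal'' on an expander, with the degree sequence playing only a cosmetic role, is false, and so is the lower bound $\OPT=\Omega(\phi\cdot n\cdot m)$ on which you intend to rely. Here is a concrete counterexample. Take a constant-degree expander on $n$ vertices, choose a subset $S$ of size $n^{2/3}$, add all edges inside $S$ (a clique), and connect each vertex of $S$ to a private group of about $n^{1/3}$ of the remaining vertices. The resulting graph still has $\Phi_G=\Omega(1)$, has $\vol(G)=\Theta(n^{4/3})$, and one checks $\OPT=\Theta(n^2)$ (separate $S$ from $V\setminus S$ at the root, then handle the clique). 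Thus $\phi\cdot n\cdot m=\Theta(n^{7/3})\gg \OPT$, so your lower bound is off by a factor $n^{1/3}$. Worse, your proposed \emph{volume}-balanced split at the root peels high-degree vertices until half the volume is gone, i.e.\ it bisects $S$; this cuts $\Theta(n^{4/3})$ clique edges at size $n$, giving root cost $\Theta(n^{7/3})$, which is already $\Theta(n^{1/3})\cdot\OPT$. So neither the algorithm nor the analysis you sketch can yield an $O(1)$ bound.

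The paper's argument is structurally different from a direct upper/lower comparison. The algorithm splits by \emph{count} (the top $2^{\lfloor\log_2(n-1)\rfloor}$ vertices by degree), not by volume, and the analysis does not try to bound $\cost(\T_{\deg})$ from above against a closed-form lower bound on $\OPT$. Instead it starts from an optimal tree $\T^*$, identifies its \emph{dense branch} (the root-to-node path along children of larger volume until the volume drops below $\vol(G)/2$), proves the lower bound $\cost(\T^*)\ge \tfrac{\Phi_G}{2}\sum_i |A_{i-1}|\vol(B_i)$ along that branch, and then rewrites $\T^*$ into $\T_{\deg}$ through four explicit tree surgeries (regularise node sizes to powers of two, compress, replace off-branch subtrees, and swap misplaced vertices into degree order), each costing at most an $O(1/\Phi_G)$ factor against the dense-branch lower bound. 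The degree ordering is not cosmetic here: it is precisely what makes the final ``sorting'' step controllable, because every swap moves a vertex of higher degree inward and the extra cost is dominated by the dense-branch sum. If you want to repair your approach, the first step is to find a lower bound on $\OPT$ that is sensitive to the degree profile (the paper's is $\tfrac{2\Phi_G}{9}\max\{\vol(G)^2/d_{\max},\, n^2 d_{\min}\}$), and then to design the tree so its cost matches that quantity rather than $n\cdot m$.
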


While our first result presents an interesting theoretical fact, we further study whether we can extend this $O(1)$-approximate construction to a much wider family of graphs occurring in practice. 
Specifically, we look at \emph{well-clustered graphs}, i.e.,  the graphs in which vertices within each cluster are better connected than vertices between different clusters  and the  total number of clusters is constant. This includes a wide range of graphs occurring in practice with a clear cluster-structure, and have been extensively studied over the past two decades~(e.g., \cite{GT14,KannanVV04,PengSZ17, SZ19, Luxburg07}). 
As our second and main result, we present an approximation   algorithm   for  well-clustered graphs, and our result is informally described as follows:

\begin{theorem}[informal statement of Theorem~\ref{thm:main_k_clusters}]\label{thm:main_k_clusters_informal}
Let $G=(V, E, w)$ be a well-clustered graph with $O(1)$ clusters. Then, there is a polynomial-time algorithm that constructs an $O(1)$-approximate \textsf{HC} tree of $G$.
\end{theorem}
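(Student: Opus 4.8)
The plan is to reduce the well-clustered case to the expander case (Theorem~\ref{thm:degree}) by a two-level construction. Suppose $G$ has $k = O(1)$ clusters $P_1,\dots,P_k$ witnessing the well-clustered property, meaning each induced subgraph $G[P_i]$ has large inner conductance $\phiIn$ while the cut between any $P_i$ and the rest has small outer conductance $\phiOut$, with $\phiIn/\phiOut$ bounded below by a sufficiently large constant (or polylogarithmic, depending on the exact hypothesis). First I would show that such a partition can be (approximately) recovered in polynomial time --- this is exactly what spectral clustering algorithms on well-clustered graphs provide (e.g.\ the guarantees of \cite{PengSZ17,GT14}), so we may assume access to a partition that is close to the optimal one in symmetric difference. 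The \textsf{HC} tree we output then has two regimes: internally to each piece $P_i$ we paste in the tree produced by the degree-based algorithm of Theorem~\ref{thm:degree} applied to $G[P_i]$ (which is an expander, so that theorem gives an $O(1)$-approximation of the optimal cost \emph{restricted to edges inside $P_i$}); and above these $k$ subtrees we attach an arbitrary binary ``caterpillar'' tree on the $k$ roots, since $k=O(1)$ means the shape of the top tree affects the cost of the inter-cluster edges only by a constant factor.

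The key steps, in order, are: (1) fix the notation for Dasgupta's cost, split $\OPT(G)$ into the contribution of intra-cluster edges and the contribution of inter-cluster edges, and observe these lower-bound the two parts of our cost separately; (2) for the intra-cluster edges, invoke Theorem~\ref{thm:degree} on each $G[P_i]$ --- here one must check that the cost of an intra-$P_i$ edge in our global tree equals its cost in the local tree $\T_i$ (true because the least common ancestor of two vertices of $P_i$ lies inside $\T_i$), and that $\sum_i \OPT(G[P_i]) = O(\OPT(G))$, which follows because the optimal global tree, restricted to $P_i$, is a valid tree for $G[P_i]$; (3) for the inter-cluster edges, bound the cost of each such edge in our tree by $|V|$ times its weight (trivial) and bound $\OPT(G)$ from below on these edges by using the small outer conductance: an edge between $P_i$ and $P_j$ is ``cheap'' to cut high up, but the total weight of inter-cluster edges is small compared to $\vol(P_i)$, and a sparsest-cut-type lower bound on $\OPT$ (of the form $\OPT(G) \geq \Omega(\sum_{e} w(e)\cdot(\text{something})$) forces $\OPT(G)$ to already pay $\Omega(|V|\cdot w(\partial P_i)/\phiIn)$ or similar inside the dense clusters; (4) add the two bounds and conclude the overall ratio is $O(1)$ since $k=O(1)$ and $\phiIn/\phiOut = \Omega(1)$.

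The main obstacle I expect is step~(3): controlling the inter-cluster edges. Paying $|V|\cdot w(e)$ per inter-cluster edge is only affordable if $\OPT(G)$ is itself large, and the clean lower bound $\OPT(G) \geq \frac{1}{3}\sum_{\{u,v\}\in E} w(u,v)\cdot(\deg u + \deg v)$-type estimates are not obviously tight enough when the clusters are very unbalanced in volume. The fix should be a more careful charging: for the tree's internal nodes that separate one cluster from another, use that each $G[P_i]$ is an expander to argue that \emph{any} tree --- in particular the optimal one --- must, when it first separates a constant fraction of $P_i$, still have $\Omega(\vol(P_i))$ volume hanging below that node, so the boundary edges of $P_i$ each incur cost $\Omega(\vol(P_i))$ in $\OPT$; summing over $i$ and using $w(\partial P_i) \le \phiOut \cdot \vol(P_i)$ with $\phiOut$ small then dominates our $O(|V|)\cdot w(\partial P_i)$ charge. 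Making the quantifiers in this ``expanders force large subtrees'' argument precise --- ideally reusing the machinery already developed for Theorem~\ref{thm:degree} --- is where the real work lies; everything else is bookkeeping with the $O(1)$ factors coming from $k$ and from the approximate recovery of the partition.
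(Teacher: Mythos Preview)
Your two-level construction --- build $\T_{\deg}(G[P_i])$ inside each cluster and join the $k$ roots by an arbitrary caterpillar --- is exactly the na\"{i}ve approach the paper rules out in Example~\ref{ex:Well-Clustered Bad Example}. Take two copies $P_1,P_2$ of the expander-with-planted-clique from Example~\ref{ex:Expander Bad Example} (clique $S_i\subset P_i$ of size $n^{2/3}$), and add $\Theta(n^{1.1})$ unit-weight edges between $S_1$ and $S_2$. Each $P_i$ has $\Phi_{G[P_i]}=\Omega(1)$ and $\Phi_G(P_i)=O(n^{-0.23})$, so the instance is well-clustered with $k=2$. Your tree cuts every $S_1$--$S_2$ edge at the root, paying $\Theta(n)\cdot\Theta(n^{1.1})=\Theta(n^{2.1})$. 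But the tree that first peels off $P_1\setminus S_1$, then $P_2\setminus S_2$, and only then merges $S_1\cup S_2$, pays $O(n^2)$ total (the dominant term is the internal clique cost $\Theta((n^{2/3})^3)$). Hence your construction is already an $\Omega(n^{0.1})$-approximation on this instance.

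Your proposed fix in step~(3) cannot rescue this. You want to argue that in $\T^*$ the boundary edges of $P_i$ are each charged $\Omega(\text{something large})$, via ``when $\T^*$ first separates a constant fraction of $P_i$, $\Omega(\vol(P_i))$ volume hangs below''. But Dasgupta's cost counts \emph{leaves}, not volume: in the example above, $\T^*$ separates the crossing edges at the node $S_1\cup S_2$, which has only $\Theta(n^{2/3})$ leaves even though its volume is $\Theta(n^{4/3})$. So those $\Theta(n^{1.1})$ crossing edges contribute only $\Theta(n^{1.77})$ to $\OPT_G$, far below the $\Theta(n^{2.1})$ you need to absorb. More generally, the bound $w(\partial P_i)\le \phiOut\cdot\vol(P_i)$ together with any expander-based lower bound on $\OPT_G$ in terms of $\vol(P_i)$ gives you a factor that scales like $n/|P_i|$ or $\davg/\dmin$ in the worst case, not $O(1)$.

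What the paper does instead is refuse to keep each $P_i$ intact. It introduces \emph{critical nodes} (Definition~\ref{def:Critical Nodes}) --- the siblings along the dense branch of $\T_{\deg}(G[P_i])$ --- and a strengthened decomposition (Lemma~\ref{lem:Improved Decomposition}) guaranteeing for every critical node $N$ that $w(N,V\setminus P_i)\le O(k)\cdot\vol_{G[P_i]}(N)$. The algorithm \texttt{PruneMerge} then checks, cluster by cluster, whether the crossing-edge cost can be charged to the internal cost; whenever it cannot, it \emph{prunes} off the offending critical node as a separate subtree. All pruned pieces and surviving cores are finally merged in a caterpillar \emph{ordered by size}, which (via Lemma~\ref{lem:Parent size pruned nodes}) keeps each piece's parent size within $O(k)$ of its parent in $\T_i$. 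The missing idea in your plan is precisely this per-critical-node accounting and the willingness to dismantle a cluster: without it, the crossing edges concentrated on a small high-degree subset of $P_i$ (the clique $S_i$) are uncontrollable.
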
 

Given that the class of well-clustered graphs includes graphs with clusters of different sizes and  asymmetrical internal structure, our result significantly improves the previous state-of-the-art \cite{CAKMT17}, which only holds for graphs generated from stochastic models.
At the technical level, the design of our algorithm is based on the graph decomposition algorithm presented in \cite{GT14}, which is designed to find a good partition of a well-clustered graph.
However, our analysis suggests that, in order to obtain an $O(1)$-approximation algorithm, directly applying their decomposition is not sufficient for our purpose. To overcome this bottleneck, we refine their output decomposition via a \emph{pruning} technique, and carefully merge the refined parts to construct our final $\textsf{HC}$ tree.
In our point of view, our presented stronger graph decomposition procedure might have applications in other settings as well.

To demonstrate the significance  of our work, we  compare our algorithm against the previous state-of-the-art with similar approximation guarantee~\cite{CAKMT17}  and well-known linkage heuristics on both synthetic and real-world data sets. Although our algorithm's performance is marginally better than \cite{CAKMT17} for the graphs generated from the stochastic block models~(\textsf{SBM}), the cost of our algorithm's output is  up to  $50\%$ lower than the one from \cite{CAKMT17} when the clusters of the input graph have different sizes and some cliques are embedded into a cluster. 

\subsection{Related work}

Our work fits in a line of research initiated by Dasgupta~\cite{dasgupta2016cost}, who introduced a cost function to measure the quality of an \textsf{HC} tree.  
Dasgupta proved that a recursive application of the algorithm for the Sparsest Cut problem
can be used to construct an $O(\log^{3/2}n)$-approximate \textsf{HC} tree. The approximation factor was first improved to $O(\log n)$ by 
Roy and Pokutta~\cite{roy2017hierarchical}.
Charikar and Chatziafratis~\cite{charikar2017approximate} improved  Dasgupta's analysis of the recursive sparsest cut algorithm by establishing the following black-box connection: an $\alpha$-approximate algorithm  for the Sparsest Cut problem can be used to construct  an $O(\alpha)$-approximate \textsf{HC} tree according to Dasgupta's cost function. Hence,  an $O(\sqrt{\log n})$-approximate \textsf{HC} tree can be computed in polynomial time by using the celebrated result of \cite{ARV09}.
For general input instances, it is known to be \textsf{NP}-hard to find an optimal \textsf{HC} tree~\cite{dasgupta2016cost}, and \textsf{SSEH}-hard to achieve an $O(1)$-approximation with respect to Dasgupta's cost function~\cite{roy2017hierarchical, charikar2017approximate}.

Cohen-Addad et al.~\cite{cohen2018hierarchical} analysed the performance of several linkage heuristics~(e.g.,~\texttt{Average Linkage}) for constructing \textsf{HC} trees, and showed that such algorithms could  output an \textsf{HC}  tree of   high cost in the worst case. 
Moving beyond the worst-case scenario, Cohen-Addad et al.~\cite{CAKMT17} studied a hierarchical extension of the \textsf{SBM} and showed that, for   graphs generated    according to this model, a certain SVD projection algorithm~\cite{mcsherry2001spectral} together with several linkage heuristics can be applied to construct  a $(1+o(1))$-approximate \textsf{HC} tree with high probability. We emphasise that our notion of well-clustered graphs generalises the \textsf{SBM} variant studied in \cite{CAKMT17}, and does not assume the rigid hierarchical structure of the clusters.

For another line of related work,  Moseley and Wang~\cite{moseley2017approximation} studied the dual objective function and proved that \texttt{Average Linkage} achieves a $(1/3)$-approximation for the new  objective. Notice that, although this  has received significant  attention very recently~\cite{alon2020hierarchical,  charikar2019better_than_AL, charikar2019euclidean, chatziafratis2020bisect,  vainstein2021hierarchical}, achieving an $O(1)$-approximation is tractable under this alternative objective. This suggests the fundamental difference on the hardness of the problem under different objective functions,  and is our reason  to entirely focus on Dasgupta's cost function in this work. 

\subsection{Organisation}

The remaining part of the paper is organised as follows: we introduce the necessary notation about graphs and matrices, the basis of hierarchical clustering and graph partitioning in Section~\ref{sec:pre}. 
In Section~\ref{sec:One Expander}, we present and analyse the algorithm of constructing an \textsf{HC} tree for graphs of high conductance. 
We present and analyse our main algorithm, i.e., the algorithm for constructing $O(1)$-approximate \textsf{HC} trees for well-clustered graphs in Section~\ref{sec:Well-Clustered Graphs}.
In Section~\ref{sec:Experiments} we present the experimental analysis where we compared our developed algorithm against other algorithms in the literature. We end the paper with several concluding remarks and directions of future work in Section~\ref{sec:Conclusion}.  

\section{Preliminaries}\label{sec:pre}

%%%%%%%%%%%%%%%%%%%%%%%%%%%%%%%%%%%%%%%%%%%%%%%%%%%%%%%%%%%%%%%%%%%
%%%%%%%%%%%%%%%%%%%%%%%%%%%%%%%%%%%%%%%%%%%%%%%%%%%%%%%%%%%%%%%%%%%
%%%%%%%%%%%%%%%%%%%%%%%%%%%%%%%%%%%%%%%%%%%%%%%%%%%%%%%%%%%%%%%%%%%

Throughout the paper, we always assume that $G=(V,E,w)$ is an undirected graph with $|V| = n$ vertices, $|E| = m$ edges and weight function $w: V\times V\rightarrow \mathbb{R}_{\geq 0}$. 
For any edge $e = \{u, v\} \in E$, we write $w_e$ or $w_{uv}$ to indicate the \emph{similarity} weight between $u$ and $v$. 
For a vertex $u \in V$, we denote its  \emph{degree}  by $d_u \triangleq \sum_{v \in V} w_{uv}$ and we assume that $w_{\mathrm{max}}/w_{\mathrm{min}} = O(\poly(n))$, where $w_{\mathrm{min}}(w_{\mathrm{max}})$ is the minimum (maximum) edge weight.  
We will use $\dmin, \dmax$, and $\davg$ for the minimum, maximum and average degrees in $G$ respectively, where $\davg \triangleq \sum_{u \in V} d_u / n$.
For a nonempty subset $S\subset V$, we define $G[S]$ to be the induced subgraph on $S$ and we denote by $G\{S\}$ the subgraph $G[S]$, where self loops are added to vertices $v \in S$ such that their degrees in $G$ and $G\{S\}$ are the same.
For any two subsets $S, T \subset V$, we define the \emph{cut value} $w(S, T) \triangleq \sum_{e \in E(S, T)} w_e$, where $E(S, T)$ is the set of edges between $S$ and $T$.
For any $G=(V,E, w)$ and  set $S \subseteq V$, the \emph{volume} of $S$ is $\vol_G(S)\triangleq \sum_{u\in S} d_u$,
and we write $\vol(G)$ when referring to $\vol_G(V)$. Sometimes we drop the subscript $G$ when it is clear from the context.  For any nonempty subset $S \subseteq V$, 
we define the \emph{conductance} of $S$ by 
\[
\Phi_G(S) \triangleq \frac{w(S, V\setminus S )}{\vol(S)}.
\]
 Notice that $\Phi_G(V) = 0$ and we conventionally choose $\Phi_G(\emptyset) = 1$, where $\emptyset$ is the empty set.
 Furthermore, we define the conductance of the graph $G$ by
\[
 \Phi_G\triangleq \min_{\substack{S\subset V\\ \vol(S) \leq \vol(V)/2}} \Phi_G(S),
\]
and we call $G$ an \emph{expander} graph if $\Phi_G = \Omega(1)$.

For a graph $G = (V, E, w)$,  let $\Deg\in\mathbb{R}^{n\times n}$ be the diagonal matrix defined by $\Deg_{uu} = d_u$ for all $u \in V$. We  denote by  $\Adj\in\mathbb{R}^{n\times n}$  the  \emph{adjacency matrix}  of $G$, where $\Adj_{uv} = w_{uv}$ for all $u, v \in V$. The \emph{normalised Laplacian matrix} of $G$ is defined as $\NLap \triangleq \mathbf{I} - \Deg^{-1/2} \Adj \Deg^{-1/2}$, where $\mathbf{I}$ is the $n \times n$ identity matrix.  The normalised Laplacian $\NLap$ is symmetric and real-valued, therefore it has $n$ real eigenvalues which we will write as  $\lambda_1 \leq \ldots \leq \lambda_n$. It is known that $\lambda_1 = 0$ and $\lambda_n \leq 2$~\cite{chung1997spectral}.

%%%%%%%%%%%%%%%%%%%%%%%%%%%%%%%%%%%%%%%%%%%%%%%%%%%%%%%%%%%%%%%%%%%
%%%%%%%%%%%%%%%%%%%%%%%%%%%%%%%%%%%%%%%%%%%%%%%%%%%%%%%%%%%%%%%%%%%
%%%%%%%%%%%%%%%%%%%%%%%%%%%%%%%%%%%%%%%%%%%%%%%%%%%%%%%%%%%%%%%%%%%
\subsection{Hierarchical clustering}

A \emph{hierarchical clustering (\textsf{HC}) tree} of a given graph $G$ is a binary tree $\T$ with $n$ leaf nodes such that each leaf corresponds to exactly one vertex $v \in V(G)$.
Let $\T$ be an \textsf{HC} tree of some graph $G = (V, E, w)$, and  $N \in \T$ be an arbitrary internal node\footnote{We consider any non-leaf node of $\mathcal{T}$ an \emph{internal node}. We will always use the term \emph{node(s)} for the nodes of $\T$ and the term \emph{vertices} for the elements of the vertex set $V$.} of $\T$. 
We write $N \rightarrow (N_1, N_2)$ to indicate that $N_1$ and $N_2$ are the children of $N$. 
We denote $\T[N]$ to be the subtree of $\T$ rooted at $N$,  $\leaves \lp \T[N]\rp$ to be the set of leaf nodes of $\T[N]$ and  $\parent_{\T}(N)$ to be the parent of node $N$ in $\T$.
 In addition, each internal node $N\in \T$   induces a unique vertex set $C \subseteq V$ formed by the vertices corresponding to $\leaves(\T[N])$.  
For the ease of presentation, we will sometimes abuse the notation,   and write $N \in \T$ for both the internal node of $\T$ and the corresponding subset of vertices in $V$.

To measure the quality of an \textsf{HC} tree $\T$ with similarity weights, Dasgupta \cite{dasgupta2016cost} introduced the   cost function defined by
\[
    \mathrm{cost}_{G}(\T) \triangleq \sum_{e = \{u,v\} \in E} w_e \cdot \abs{\leaves \lp \T[u \vee v]\rp},
\]
where $u \vee v$ is the lowest common ancestor of $u$ and $v$ in $\T$. 
Trees that achieve a better hierarchical clustering have a lower cost, and the objective of \textsf{HC} is to construct trees with the minimum cost based on the following consideration: for any pair of vertices $u, v \in V$ that corresponds to an edge of high weight $w_{uv}$ (i.e., $u$ and $v$ are highly similar) a ``good'' \textsf{HC} tree would  separate $u$ and $v$ lower in the tree, thus reflected in a small size $|\leaves(\T[u \vee v])|$. 
We denote by $\OPT_G$ the minimum cost of any \textsf{HC} tree of $G$, i.e., $\OPT_G = \min_{\T} \mathrm{cost}_G(\T)$, and use the notation $\T^*$ to refer to  an \emph{optimal} tree achieving the minimum cost. We say that an \textsf{HC} tree $\T$ is an \emph{$\alpha$-approximate} tree if $\cost_G(\T)\leq \alpha\cdot \OPT_G$, for some $\alpha \geq 1$.

 Sometimes, it is convenient to consider the cost of an edge $e = \{u, v\} \in E$ in $\T$ as  
\[
    \cost_{\T}(e) \triangleq w_e \cdot \abs{\leaves(\T[u \vee v])},
\]
so that we can write 
\[
    \cost_{G}(\T) = \sum_{e\in E} \cost_{\T}(e).
\]

Alternatively, as observed by Dasgupta~\cite{dasgupta2016cost}, the cost function can be expressed with respect to all cut values induced at every internal node.

\begin{lemma}[\cite{dasgupta2016cost}]
    The cost function of an \textsf{HC} tree $\T$ of $G$ can be written as
    \[
        \cost_G(\T) = \sum_{\substack{N \in \T \\ N \rightarrow (N_1, N_2)}} \abs{\leaves(\T[N])} \cdot w(\leaves(\T[N_1]), \leaves(\T[N_2])),
    \]
    where the sum is taken over all internal nodes $N$.
\end{lemma}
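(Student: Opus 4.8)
The plan is to establish the identity by exchanging the order of summation in the definition of $\cost_G(\T)$, grouping the edges of $G$ according to the internal node of $\T$ at which their two endpoints are first separated. Recall that $\cost_G(\T)=\sum_{e=\{u,v\}\in E} w_e\cdot\abs{\leaves(\T[u\vee v])}$, and note that for every edge $e=\{u,v\}$ the lowest common ancestor $u\vee v$ is a single, well-defined internal node of $\T$ (it is never a leaf, since $u\neq v$). Hence the assignment $e\mapsto u\vee v$ partitions $E$ into classes indexed by the internal nodes $N$ of $\T$, and I would rewrite $\cost_G(\T)$ as a double sum, outer over internal nodes $N$ and inner over the edges assigned to $N$.

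First I would fix an internal node $N\in\T$ with $N\rightarrow(N_1,N_2)$ and identify the class of edges assigned to it. Since $\leaves(\T[N_1])$ and $\leaves(\T[N_2])$ partition $\leaves(\T[N])$, an edge $e=\{u,v\}$ satisfies $u\vee v=N$ precisely when $u$ and $v$ both lie in $\leaves(\T[N])$ but belong to different children's leaf sets; equivalently, $e\in E(\leaves(\T[N_1]),\leaves(\T[N_2]))$. The only point meriting care is this equivalence, i.e. that ``$u\vee v=N$'' holds iff ``$N$ is the split separating $u$ from $v$''; this is immediate from the definition of the lowest common ancestor in a binary tree, together with the fact that the vertex sets $\{\leaves(\T[N])\}_N$ are laminar so that each edge is counted for exactly one internal node.

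Then I would substitute $\abs{\leaves(\T[u\vee v])}=\abs{\leaves(\T[N])}$ uniformly within each class and use the definition $w(S,T)=\sum_{e\in E(S,T)}w_e$, obtaining
\[
\cost_G(\T)=\sum_{\substack{N\in\T\\ N\rightarrow(N_1,N_2)}}\ \sum_{e\in E(\leaves(\T[N_1]),\leaves(\T[N_2]))} w_e\cdot\abs{\leaves(\T[N])}=\sum_{\substack{N\in\T\\ N\rightarrow(N_1,N_2)}}\abs{\leaves(\T[N])}\cdot w(\leaves(\T[N_1]),\leaves(\T[N_2])),
\]
which is exactly the claimed formula. There is no real obstacle in this argument; it is a double-counting identity, and the characterisation of the edge classes in the second step is the only part that is not pure bookkeeping.
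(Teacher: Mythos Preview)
Your argument is correct: the identity follows by partitioning $E$ according to the lowest common ancestor $u\vee v$ of each edge's endpoints, observing that $u\vee v=N$ iff the edge crosses between $\leaves(\T[N_1])$ and $\leaves(\T[N_2])$, and then summing. The paper does not supply its own proof of this lemma---it is stated with a citation to Dasgupta~\cite{dasgupta2016cost}---so there is nothing to compare against; your proof is the standard one.
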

 
 The following lemma presents a simple upper bound on the cost of any \textsf{HC} tree $\T$:
\begin{lemma}\label{lem:Trivial bound}
 It holds for    any \textsf{HC} tree $\T$ of $G$   that 
    \[
        \mathrm{cost}_G(\T) \leq \frac{n \cdot \vol(G)}{2}.
    \]
\end{lemma}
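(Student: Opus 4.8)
The plan is to use the edge-based form of Dasgupta's cost function directly. First I would recall that
\[
    \cost_G(\T) = \sum_{e = \{u,v\} \in E} w_e \cdot \abs{\leaves\lp \T[u \vee v]\rp}.
\]
Since $\T$ has exactly $n$ leaf nodes, every internal node $N \in \T$ satisfies $\abs{\leaves(\T[N])} \leq n$; in particular, for each edge $e = \{u,v\} \in E$, the lowest common ancestor $u \vee v$ is an internal node of $\T$ and hence $\abs{\leaves(\T[u\vee v])} \leq n$. Substituting this bound term by term into the expression above gives $\cost_G(\T) \leq n \sum_{e \in E} w_e$.

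Next I would evaluate $\sum_{e \in E} w_e$ in terms of $\vol(G)$. By the definition $d_u = \sum_{v \in V} w_{uv}$, the sum $\sum_{u \in V} d_u$ counts the weight of each edge exactly twice, so $\sum_{e \in E} w_e = \tfrac12 \sum_{u \in V} d_u = \tfrac12 \vol(G)$. Combining this with the previous inequality yields
\[
    \cost_G(\T) \leq n \cdot \frac{\vol(G)}{2},
\]
which is the claimed bound.

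There is essentially no obstacle here: the only points requiring a moment's care are the factor of two from the handshake identity, and the (immediate) observation that $u \vee v$ is a legitimate internal node whose leaf set is a subset of $V$ and therefore has size at most $n$. One could equally well argue from the node-based formula in the preceding lemma, bounding $\abs{\leaves(\T[N])} \leq n$ at each internal node and noting that $\sum_{N \to (N_1,N_2)} w(\leaves(\T[N_1]), \leaves(\T[N_2]))$ sums to $\sum_{e \in E} w_e$, since every edge is cut at exactly one internal node; the edge-based route above is the most direct.
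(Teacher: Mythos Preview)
Your proof is correct and follows exactly the same route as the paper: bound each $|\leaves(\T[u\vee v])|$ by $n$ in the edge-based cost formula, then use the handshake identity $\sum_{e\in E} w_e = \vol(G)/2$. The paper's version is simply the one-line compression of what you wrote.
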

\begin{proof}
    For any \textsf{HC} tree $\T$ of $G$, we  have that 
    \[
    \cost_G(\T) = \sum_{e \in E} \cost_{\T}(e) \leq \sum_{e \in E} n \cdot w_e \leq \frac{n \cdot \vol(G)}{2}, 
    \]
    which proves the statement.
\end{proof}

%%%%%%%%%%%%%%%%%%%%%%%%%%%%%%%%%%%%%%%%%%%%%%%%%%%%%%%%%%%%%%%%%%%
%%%%%%%%%%%%%%%%%%%%%%%%%%%%%%%%%%%%%%%%%%%%%%%%%%%%%%%%%%%%%%%%%%%
%%%%%%%%%%%%%%%%%%%%%%%%%%%%%%%%%%%%%%%%%%%%%%%%%%%%%%%%%%%%%%%%%%%
\subsection{Graph partitioning}

The following results on graph partitioning will be used in our analysis, and we list them here for completeness.

\begin{lemma}[Cheeger Inequality, \cite{Alon/86}]\label{lem:Cheeger's ineq}
It holds for any graph $G$ that 
    \[
        \frac{\lambda_2}{2} \leq \Phi_G \leq \sqrt{2\lambda_2}.
    \]
    Furthermore, there is a  nearly-linear time algorithm\footnote{We say that a graph algorithm runs in nearly-linear time if the algorithm runs in $O(m \cdot\mathrm{poly}\log n)$ time, where $m$ and $n$ are the number of edges and vertices of the input graph.} (i.e., the \SpecPart algorithm) that finds a set $S$ such that $\vol(S) \leq \vol(V)/2$, and $\Phi_G(S) \leq 2\cdot \sqrt{ \Phi_G}$. 
\end{lemma}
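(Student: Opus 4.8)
The statement collects two classical facts, so the plan is to reprove each using the variational characterisation of $\lambda_2$ in terms of the Rayleigh quotient of $\NLap$. Writing any nonzero vector as $\Deg^{1/2}g$, the vectors $\Deg$-orthogonal to the bottom eigenvector $\Deg^{1/2}\mathbf 1$ of $\NLap$ are exactly those with $\sum_{u\in V} d_u g_u = 0$, and hence
\[
\lambda_2 \;=\; \min_{\substack{g\neq 0\\ \sum_{u} d_u g_u = 0}} \frac{\sum_{\{u,v\}\in E} w_{uv}(g_u-g_v)^2}{\sum_{u\in V} d_u g_u^2}.
\]
For the easy direction $\lambda_2/2\le\Phi_G$, I would let $S$ with $\vol(S)\le\vol(V)/2$ attain $\Phi_G$ and plug into this quotient the test vector $g_u = 1/\vol(S)$ for $u\in S$ and $g_u=-1/\vol(V\setminus S)$ otherwise. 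A direct computation gives $\sum_u d_u g_u = 0$, numerator $w(S,V\setminus S)\bigl(\tfrac1{\vol(S)}+\tfrac1{\vol(V\setminus S)}\bigr)^2$ and denominator $\tfrac1{\vol(S)}+\tfrac1{\vol(V\setminus S)}$, so the quotient equals $w(S,V\setminus S)\bigl(\tfrac1{\vol(S)}+\tfrac1{\vol(V\setminus S)}\bigr)\le 2\Phi_G(S)=2\Phi_G$, using $\vol(S)\le\vol(V)/2$.

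For the hard direction $\Phi_G\le\sqrt{2\lambda_2}$, let $g$ be a minimiser of the quotient (so $\Deg^{1/2}g$ is a $\lambda_2$-eigenvector). I would first shift $g$ by a volume-weighted median $c$ of its entries; since $\sum_u d_u g_u=0$ the shift only enlarges the denominator, so $g-c\mathbf 1$ still has quotient at most $\lambda_2$ and now its positive and negative parts have supports of volume at most $\vol(V)/2$ each. Since those supports are disjoint, one of the two parts, call it the nonnegative vector $h$, still satisfies $\sum_{\{u,v\}} w_{uv}(h_u-h_v)^2 \le \lambda_2 \sum_u d_u h_u^2$ and $\vol(\supp h)\le\vol(V)/2$. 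Then I would run the sweep cut: order the vertices by non-increasing $h$ and for $t\ge0$ set $S_t=\{u:h_u^2>t\}\subseteq\supp(h)$, so $\Phi_G\le\Phi_G(S_t)$. Writing $|h_u^2-h_v^2|=\int_0^\infty \mathbf 1[\{u,v\}\text{ crosses }S_t]\,dt$ and $h_u^2=\int_0^\infty \mathbf 1[u\in S_t]\,dt$ yields $\sum_{\{u,v\}} w_{uv}|h_u^2-h_v^2|=\int_0^\infty w(S_t,V\setminus S_t)\,dt$ and $\sum_u d_u h_u^2=\int_0^\infty \vol(S_t)\,dt$, hence $\min_t\Phi_G(S_t)\le \bigl(\sum_{\{u,v\}} w_{uv}|h_u^2-h_v^2|\bigr)/\bigl(\sum_u d_u h_u^2\bigr)$. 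Finally Cauchy--Schwarz gives
\[
\sum_{\{u,v\}\in E} w_{uv}|h_u^2-h_v^2| \;\le\; \Bigl(\sum_{\{u,v\}} w_{uv}(h_u-h_v)^2\Bigr)^{1/2}\Bigl(\sum_{\{u,v\}} w_{uv}(h_u+h_v)^2\Bigr)^{1/2} \;\le\; \Bigl(\lambda_2\sum_u d_u h_u^2\Bigr)^{1/2}\Bigl(2\sum_u d_u h_u^2\Bigr)^{1/2},
\]
using $(h_u+h_v)^2\le 2(h_u^2+h_v^2)$ and $\sum_{\{u,v\}} w_{uv}(h_u^2+h_v^2)=\sum_u d_u h_u^2$. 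This gives $\Phi_G\le\min_t\Phi_G(S_t)\le\sqrt{2\lambda_2}$.

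For the algorithmic part I would not compute $\lambda_2$ exactly: running $\poly\log n$ iterations of the power method on $\Deg^{-1/2}\Adj\Deg^{-1/2}$ after deflating the top eigenvector $\Deg^{1/2}\mathbf 1$ (each iteration is a sparse matrix--vector product, $O(m)$ time) produces, with high probability, a vector whose Rayleigh quotient is $(1+o(1))\lambda_2$. Sorting vertices by this vector costs $O(n\log n)$, and sweeping through the $n-1$ prefix cuts while maintaining cut values incrementally costs $O(m)$; the \SpecPart algorithm outputs the best of these cuts $S$. The hard-direction analysis applied to the approximate vector gives $\Phi_G(S)\le\sqrt{2\lambda_2}$ (up to the negligible $(1+o(1))$ factor), and the easy direction $\lambda_2\le 2\Phi_G$ then yields $\Phi_G(S)\le 2\sqrt{\Phi_G}$, as claimed.

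The delicate step is the hard direction: making the reduction to a \emph{nonnegative} vector supported on at most half the volume rigorous — in particular the volume-weighted median shift and the check that it does not raise the Rayleigh quotient — and then passing from the ``integral over all thresholds'' back to a single discrete prefix cut via the elementary averaging bound $\bigl(\int a\bigr)/\bigl(\int b\bigr)\ge \min(a/b)$ with $b>0$. The test-vector computation for the easy direction and the running-time bookkeeping are routine.
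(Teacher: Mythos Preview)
The paper does not prove this lemma at all: it is stated in the preliminaries as a classical result and attributed to~\cite{Alon/86}, with no argument given. Your write-up is the standard textbook proof of the Cheeger inequality (test vector for the easy direction; median shift, positive/negative part, sweep cut plus Cauchy--Schwarz for the hard direction; power method plus sweep for the algorithm), and it is correct. The only cosmetic point is the constant in the algorithmic conclusion: from an approximate eigenvector with Rayleigh quotient $(1+o(1))\lambda_2$ the sweep gives $\Phi_G(S)\le \sqrt{2(1+o(1))\lambda_2}\le 2\sqrt{(1+o(1))\Phi_G}$, which is $2\sqrt{\Phi_G}$ only up to a $(1+o(1))$ factor; this is harmless for how the lemma is used in the paper.
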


One can generalise the notion of conductance, and for any $k\geq 2$ define the \emph{$k$-way expansion} of $G$ by
\begin{equation*}
    \rho(k) \triangleq \min_{\mathrm{disjoint \:} S_1, \dots, S_k} \max_{1\leq i \leq k} \Phi_G(S_i).
\end{equation*}

\begin{lemma}[Higher-Order Cheeger Inequality,  \cite{higherCheeg}]\label{lem:Higher Cheeger}
It holds for any graph $G$ and $k \geq 2$  that
    \[
        \frac{\lambda_k}{2} \leq \rho(k) \leq O(k^2) \sqrt{\lambda_k}.
    \] 
\end{lemma}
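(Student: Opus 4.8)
The plan is to prove the two inequalities by completely different means, since the lower bound is soft linear algebra whereas the upper bound is the substantive part.

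\emph{Lower bound $\lambda_k/2\le\rho(k)$.} I would start from disjoint sets $S_1,\dots,S_k$ witnessing $\rho(k)=\phi$, so $\Phi_G(S_i)\le\phi$ for all $i$, and build the normalised indicator vectors $\psi_i\triangleq\Deg^{1/2}\mathbf{1}_{S_i}/\sqrt{\vol(S_i)}$. Because the supports are disjoint the $\psi_i$ are orthonormal, so their span $W$ is $k$-dimensional, and a short computation gives $\langle\psi_i,\NLap\psi_i\rangle=\Phi_G(S_i)\le\phi$ together with $\langle\psi_i,\NLap\psi_j\rangle=-w(S_i,S_j)/\sqrt{\vol(S_i)\vol(S_j)}\le 0$ for $i\neq j$. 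For a general $\psi=\sum_i c_i\psi_i\in W$ the diagonal terms contribute at most $\phi\sum_i c_i^2=\phi\|\psi\|^2$, and I would bound the off-diagonal terms with the weighted inequality $|c_ic_j|\le\tfrac12\bigl(c_i^2\sqrt{\vol(S_j)/\vol(S_i)}+c_j^2\sqrt{\vol(S_i)/\vol(S_j)}\bigr)$, which after using $\sum_{j\neq i}w(S_i,S_j)\le w(S_i,V\setminus S_i)\le\phi\,\vol(S_i)$ also collapses to $\phi\|\psi\|^2$; hence $\langle\psi,\NLap\psi\rangle\le 2\phi\|\psi\|^2$ throughout $W$. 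The Courant--Fischer min--max formula for the $k$-th eigenvalue then gives $\lambda_k\le 2\phi=2\rho(k)$. The factor $2$ is exactly the price of the AM--GM step and cannot be removed in general (it is already tight up to constants for $K_n$).

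\emph{Upper bound $\rho(k)\le O(k^2)\sqrt{\lambda_k}$.} For this direction I would follow the spectral-embedding route of \cite{higherCheeg}. Let $f_1,\dots,f_k$ be eigenfunctions of $\NLap$ for $\lambda_1,\dots,\lambda_k$ that are orthonormal in the degree-weighted inner product, and form the embedding $F\colon V\to\mathbb{R}^k$ with $F(v)=(f_1(v),\dots,f_k(v))$. Two properties are all one uses: every direction $v\mapsto\langle x,F(v)\rangle$ with $x$ a unit vector has Rayleigh quotient at most $\lambda_k$, and $\sum_v d_v\|F(v)\|^2=k$, so the masses $d_v\|F(v)\|^2$ are spread over the vertices rather than concentrated. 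The heart of the argument is a geometric \emph{localisation} (``separated sets'') lemma: exploiting the radial structure of the embedded points one carves out $k$ pairwise well-separated regions, each carrying an $\Omega(1/k)$ fraction of the mass, and multiplies $F$ by Lipschitz cut-offs adapted to these regions to obtain $k$ functions $\eta_1,\dots,\eta_k$ with \emph{disjoint supports} and Rayleigh quotients $\mathcal{R}(\eta_i)\le\poly(k)\cdot\lambda_k$ — here the separation radius $r=\Theta(1/k)$ and the resulting $1/r^2$ loss from the cut-offs are what produce the polynomial factor in $k$. I would finish by rounding each $\eta_i$ to a vertex set through the sweep-cut / threshold argument that proves Cheeger's inequality (the local version of Lemma~\ref{lem:Cheeger's ineq}), which yields $S_i\subseteq\supp(\eta_i)$ with $\Phi_G(S_i)\le O(\sqrt{\mathcal{R}(\eta_i)})$; disjointness of the supports makes the $S_i$ disjoint, so $\rho(k)\le\max_i\Phi_G(S_i)\le O(k^2)\sqrt{\lambda_k}$ once the constants are tracked.

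The main obstacle sits entirely in the localisation lemma of the upper bound: one must show that any spectrally smooth embedding into $\mathbb{R}^k$ splits into $k$ well-separated, heavy pieces, with degenerate configurations (such as a single vertex holding a constant fraction of the mass) peeled off by hand, and all of this while losing only a $\poly(k)$ factor in the Dirichlet energy. Since in the present paper Lemma~\ref{lem:Higher Cheeger} is used purely as a black box, the realistic plan is to carry out the lower bound directly and simply to cite \cite{higherCheeg} for the upper bound rather than reproduce its geometric core.
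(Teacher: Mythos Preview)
Your proposal is correct, and in fact it goes well beyond what the paper does: in the paper Lemma~\ref{lem:Higher Cheeger} is listed among the preliminaries and is simply quoted from \cite{higherCheeg} with no proof whatsoever. You already anticipated this in your final paragraph.

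Your lower-bound argument is clean and self-contained; the weighted AM--GM step is exactly the right way to handle the off-diagonal cross terms, and the computation $\langle\psi_i,\NLap\psi_i\rangle=\Phi_G(S_i)$ checks out. Your sketch of the upper bound accurately describes the Lee--Oveis~Gharan--Trevisan argument (spectral embedding, localisation into well-separated regions with $\Theta(1/k)$ radial separation, Lipschitz cut-offs, then Cheeger sweeps on the disjointly supported test functions), and you correctly identify the localisation lemma as the technical core that one would not reproduce here. Since the paper only needs the statement as a black box, your plan --- prove the easy direction and cite \cite{higherCheeg} for the hard one --- is entirely appropriate and is more than the paper itself provides.
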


\begin{lemma}[Lemma~1.13, \cite{GT14}]\label{lem:Upperbound eig induced graphs}
    There is a universal constant $c_0 > 1$ such that for any $k\geq 2$ and any partitioning of $V$ into $r$ sets $P_1, \dots P_{r}$ of $V$, where $r \leq k - 1$, we have that 
    \[
        \min_{1 \leq i \leq r} \lambda_2(G[P_i]) \leq 2 c_0 \cdot k^6 \cdot \lambda_{k}.
    \]
\end{lemma}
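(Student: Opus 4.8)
The plan is to combine the Courant--Fischer characterisation of $\lambda_k$ with a dimension-counting argument that extracts, from the bottom-$k$ eigenspace of $\NLap$, a single test vector that simultaneously witnesses a small second eigenvalue on each induced subgraph $G[P_i]$. Write $\mathcal{E}_H(g)=\sum_{\{u,v\}\in E(H)} w_{uv}(g_u-g_v)^2$ for the Dirichlet form of a graph $H$, and recall that (identifying $g$ with $\Deg^{1/2}g$ inside $\NLap$)
\[
\lambda_k \;=\; \min_{\substack{U\subseteq\mathbb{R}^V\\ \dim U=k}}\; \max_{\substack{g\in U\\ g\neq 0}}\; \frac{\mathcal{E}_G(g)}{\sum_{u\in V} d_u\, g_u^2}.
\]
Let $U$ be the span of eigenvectors for $\lambda_1\le\dots\le\lambda_k$, so that $\mathcal{E}_G(g)\le \lambda_k\sum_{u\in V} d_u g_u^2$ for every $g\in U$.

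Since $r\le k-1<k=\dim U$, the $r$ linear functionals $g\mapsto \sum_{u\in P_i} d^{(i)}_u g_u$ on $U$, where $d^{(i)}_u$ is the degree of $u$ in $G[P_i]$, have a common kernel of dimension at least $k-r\ge 1$; fix a non-zero $g^{*}$ in it. Then each restriction $g^{*}|_{P_i}$ is orthogonal to the all-ones vector in the degree inner product of $G[P_i]$, hence is admissible in the variational formula for $\lambda_2(G[P_i])$ and gives
\[
\lambda_2(G[P_i]) \;\le\; \frac{\mathcal{E}_{G[P_i]}(g^{*}|_{P_i})}{\sum_{u\in P_i} d^{(i)}_u (g^{*}_u)^2}.
\]
Summing the numerators and discarding the edges cut by the partition yields $\sum_{i} \mathcal{E}_{G[P_i]}(g^{*}|_{P_i}) \le \mathcal{E}_G(g^{*}) \le \lambda_k \sum_{u\in V} d_u (g^{*}_u)^2$, so by the elementary inequality $\min_i a_i/b_i \le (\sum_i a_i)/(\sum_i b_i)$ we obtain
\[
\min_{1\le i\le r}\lambda_2(G[P_i]) \;\le\; \lambda_k\cdot\frac{\sum_{u\in V} d_u (g^{*}_u)^2}{\sum_{u\in V} d^{(i(u))}_u (g^{*}_u)^2},
\]
where $i(u)$ denotes the part containing $u$.

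Everything up to here is routine; the remaining step --- which I expect to be the main obstacle --- is to bound the displayed ratio by $O(k^6)$, i.e.\ to show that the degree-weighted mass of $g^{*}$ does not collapse when each $d_u$ is replaced by the possibly much smaller induced degree $d^{(i(u))}_u$. For an \emph{arbitrary} vector this can fail badly: it fails precisely when the mass of $g^{*}$ sits on ``leaky'' vertices --- vertices most of whose incident weight crosses the partition --- on which $g^{*}$ is moreover near-constant across the cut, and this is where the polynomial-in-$k$ overhead enters. Resolving it, which is the technical core of~\cite{GT14}, uses crucially that $g^{*}$ is not arbitrary but lies in the span of the $k$ lowest eigenfunctions and has Rayleigh quotient at most $\lambda_k$; exploiting this structure --- via a careful choice and thresholding of the test vector and a comparison of the Dirichlet forms on $G$, on the subgraph of cut edges, and on the $G[P_i]$'s --- yields the explicit constant $2c_0 k^6$.

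For comparison, any approach that routes through the higher-order Cheeger inequality (Lemma~\ref{lem:Higher Cheeger}) --- for instance producing $k$ disjoint low-conductance sets and working inside a part containing two of them --- loses a square root and yields only a bound of order $\poly(k)\sqrt{\lambda_k}$; the spectral argument above instead stays linear in the Dirichlet form, and hence in $\lambda_k$, which is exactly what the statement requires.
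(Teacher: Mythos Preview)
The paper does not prove this lemma at all: it is quoted verbatim as Lemma~1.13 of \cite{GT14} and used as a black box, so there is no ``paper's own proof'' to compare against.

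As for your write-up, the variational setup (dimension-count inside the bottom-$k$ eigenspace to find a single $g^{*}$ that is degree-orthogonal to constants on every part, then apply $\min_i a_i/b_i\le(\sum a_i)/(\sum b_i)$) is the natural first move and is correct as far as it goes. But you have not given a proof: you explicitly stop at the step that matters, namely bounding
\[
\frac{\sum_{u} d_u\,(g^{*}_u)^2}{\sum_{u} d^{(i(u))}_u\,(g^{*}_u)^2}
\]
by $O(k^6)$, and instead describe in words what \cite{GT14} does. That ratio is \emph{not} controlled for an arbitrary $g^{*}$ in the bottom-$k$ eigenspace satisfying your $r$ linear constraints---mass can concentrate on vertices whose incident weight is almost entirely cut by the partition---so ``careful choice and thresholding'' is doing all the work, and you have not supplied it. In other words, what you have written is a correct reduction of the lemma to the statement ``there exists such a $g^{*}$ with bounded degree-ratio,'' together with a pointer to \cite{GT14} for that statement; this is strictly more than the paper itself offers, but it is a sketch, not a proof. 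Your final paragraph (that going through higher-order Cheeger would lose a square root) is a valid and useful remark.
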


\section{Hierarchical clustering for graphs of high conductance}\label{sec:One Expander}
In this section we study hierarchical clustering for graphs with high conductance and prove that, for any input graph $G$ with $\Phi_G=\Omega(1)$, an $O(1)$-approximate \textsf{HC} tree of $G$ can be simply constructed based on the degree sequence of $G$. This section is organised as follows: in Section~\ref{sec:Deg Upper Bound}, we give an upper bound for the cost of any \textsf{HC} tree $\T$ based on the degree distribution and conductance of $G$. In Section~\ref{sec:Alg for expanders}, we present an algorithm  for  constructing an \textsf{HC} tree for graphs of high conductance, whose analysis is presented in Section~\ref{sec:Analysis expanders}.

\subsection{Upper bounding  $\cost_G(\T)$ with respect to the degrees of $G$}\label{sec:Deg Upper Bound}

As a starting point, we show that   $\cost_G(\T)$ for any $\T$ can be upper bounded with respect to $\Phi_G$ and the degree distribution of $V(G)$.

\begin{lemma}\label{lem:simplebound}
It holds for any    \textsf{HC} tree $\T$ of graph $G$ that 
$
    \mathrm{cost}_{G}(\T) \leq  \frac{9}{4\Phi_G} \cdot \min \left\{\frac{d_{\mathrm{avg}}}{d_{\min}}, \frac{d_{\max}}{d_{\mathrm{avg}}} \right\} \cdot \OPT_G.$
\end{lemma}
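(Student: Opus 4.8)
The plan is to relate $\cost_G(\T)$ to $\OPT_G$ by sandwiching both quantities between expressions involving $\vol(G)$, $n$, and the degree extremes. For the upper bound on $\cost_G(\T)$, I would invoke Lemma~\ref{lem:Trivial bound}, which already gives $\cost_G(\T) \le n \cdot \vol(G)/2$ for \emph{any} \textsf{HC} tree. The real work is a matching lower bound on $\OPT_G$: I need to show that $\OPT_G \ge \frac{2\Phi_G}{9} \cdot \max\left\{\frac{\dmin}{\davg}, \frac{\davg}{\dmax}\right\} \cdot \frac{n\cdot\vol(G)}{2}$, which combined with the trivial upper bound yields the claim after rearranging. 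Equivalently, since $\vol(G) = n\cdot\davg$, it suffices to prove $\OPT_G \ge \frac{\Phi_G}{9}\cdot\min\{n\cdot\dmin\cdot n,\ n^2\davg^2/\dmax\}$ — I would keep whichever normalisation makes the bookkeeping cleanest.

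To lower bound $\OPT_G$, I would use the cut-based expression for the cost from Dasgupta's lemma: $\cost_G(\T^*) = \sum_{N\to(N_1,N_2)} |\leaves(\T^*[N])| \cdot w(N_1, N_2)$. The key observation is that at \emph{every} internal node $N$ with $N\to(N_1,N_2)$, one of the two children, say $N_1$, satisfies $\vol(N_1) \le \vol(N)/2$; by the definition of $\Phi_G$ (applied to the set $N_1$, whose volume is at most $\vol(V)/2$ when $N$ is the root, and more generally bounding via the induced structure) we get $w(N_1, N_2) \ge w(N_1, V\setminus N_1) - (\text{edges from } N_1 \text{ leaving } N) \ge \Phi_G \cdot \vol(N_1)$ — care is needed here because the cut $w(N_1,N_2)$ inside $\T^*$ is not the same as the global cut $w(N_1, V\setminus N_1)$; I would restrict attention to the root split or to a carefully chosen chain of nodes to make the conductance bound apply cleanly. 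A cleaner route: consider the path in $\T^*$ from the root following the heavier-volume child at each step; along this path the volumes halve-ish, and summing $\Phi_G\cdot\vol$ over the lighter siblings with the correct leaf-count weights telescopes to something like $\Phi_G \cdot \vol(G) \cdot n / \mathrm{const}$.

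The main obstacle I anticipate is handling the discrepancy between the \textbf{internal} cut $w(N_1,N_2)$ at a node of $\T^*$ and the \textbf{global} conductance $\Phi_G$, which only controls $w(S,V\setminus S)/\vol(S)$ for $S\subseteq V$. One standard fix is to charge each edge $e=\{u,v\}$ to the node $u\vee v$ and argue that, because $G$ has conductance $\Phi_G$, any set $S$ with $\vol(S)\le\vol(V)/2$ "leaks" at least a $\Phi_G$ fraction of its volume — then a global summation over the laminar family of cuts in $\T^*$, weighted by leaf counts, produces the desired bound; the factor $9/4$ and the $\min$ of the two degree ratios should fall out of replacing $\vol(S)$ by $\dmin\cdot|S|$ or $|S|\cdot\davg$ at the appropriate place and a constant-factor slack from the halving argument (each edge gets counted with a leaf-count weight that is within a factor of $2$ of what the clean telescoping would give). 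I would double-check the constant by testing the bound on the clique $K_n$, where $\OPT_G$ and $\frac{n\vol(G)}{2}$ are both computable in closed form and $\Phi_G = \frac{n}{2(n-1)} \approx \frac12$, $\dmin=\dmax=\davg$, so the inequality should read $\cost \le \frac{9}{4}\cdot\frac{2(n-1)}{n}\cdot\OPT \approx \frac{9}{2}\OPT$ — consistent with the known fact that every tree on $K_n$ is within a constant of optimal.
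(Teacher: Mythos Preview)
Your overall architecture is right: combine the trivial upper bound $\cost_G(\T)\le n\cdot\vol(G)/2$ with a conductance-based lower bound on $\OPT_G$, then divide. You also correctly isolate the heavy-path idea (follow the larger child from the root). But the execution has a genuine gap.

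The difficulty you flag---that the cut $w(N_1,N_2)$ at an internal node is \emph{not} the global cut $w(N_1,V\setminus N_1)$, so $\Phi_G$ does not control it directly---is real, and your proposed fixes do not resolve it. Summing $\Phi_G\cdot\vol$ over all lighter siblings along the heavy path, or ``charging edges to a laminar family,'' runs exactly into this problem at every internal node; there is no clean telescoping, because an edge leaving $N_1$ may land outside $N$ entirely and thus not be counted in $w(N_1,N_2)$. What the paper does is simpler: it uses the heavy path to locate a \emph{single} balanced cut. Walk from the root following the child of larger volume and stop at the first node $A_k$ with $\vol(A_k)\le \tfrac{2}{3}\vol(G)$. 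Set $A=A_k$, $B=V\setminus A_k$; then $\vol(A),\vol(B)\ge\vol(G)/3$, and now $w(A,B)$ \emph{is} a global cut, so $w(A,B)\ge\Phi_G\cdot\vol(G)/3$. The point is that every edge $e\in E(A,B)$ has its LCA among $A_0,\dots,A_{k-1}$, each of which has volume exceeding $\tfrac{2}{3}\vol(G)$ and hence size exceeding $\tfrac{2\vol(G)}{3\dmax}$. This yields $\OPT_G\ge \tfrac{2\vol(G)}{3\dmax}\cdot\Phi_G\cdot\tfrac{\vol(G)}{3}=\tfrac{2\Phi_G}{9}\cdot\vol(G)^2/\dmax$, which is one of the two terms inside the $\max$.

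The second term $\tfrac{2\Phi_G}{9}\cdot\dmin\cdot n^2$ does \emph{not} fall out of the same walk; you need a second walk that follows the child of larger \emph{size} (rather than volume), stopping when $|A_k|\le 2n/3$. Then both sides have size at least $n/3$, the LCAs all have size exceeding $2n/3$, and $w(A,B)\ge\Phi_G\cdot\min\{\vol(A),\vol(B)\}\ge\Phi_G\cdot\dmin\cdot n/3$. Your proposal only sketches the volume walk and hopes the $\dmin$ factor will appear by ``replacing $\vol(S)$ by $\dmin\cdot|S|$ at the appropriate place''; that substitution goes the wrong way on the volume walk and genuinely requires the second argument.
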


The proof is based on a combination of Lemma~\ref{lem:Trivial bound} and the following technical result:
  
\begin{lemma}\label{lem:Lower Bound Cost Expander}
    It holds for any optimal \textsf{HC} tree $\T^*$ of graph $G$ that
\[
    \mathrm{cost}_G(\T^*) \geq \frac{2\Phi_G}{9} \cdot \max \left\{\frac{\vol(G)^2}{d_{\max}} , d_{\min} \cdot n^2 \right\} = \frac{2\Phi_G}{9} \cdot n\cdot \vol(G) \cdot \max \left\{ \frac{d_{\mathrm{avg}}}{d_{\max}}, \frac{d_{\min}}{d_{\mathrm{avg}}}\right\}.
\]
\end{lemma}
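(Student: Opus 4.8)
The plan is to show that any HC tree $\T^*$ must pay a lot because $G$ has high conductance, so no internal split of $\T^*$ can be too unbalanced in volume, and whenever the two sides of a split have comparable volume the cut value between them is forced to be large. Concretely, I would use the cut-based expression for the cost from Dasgupta's lemma, $\cost_G(\T^*) = \sum_{N\to(N_1,N_2)} |\leaves(\T^*[N])|\cdot w(\leaves(\T^*[N_1]),\leaves(\T^*[N_2]))$, and focus on a cleverly chosen subset of internal nodes — namely, for each vertex $v$, the unique "maximal" internal node $N_v$ on the root-to-$v$ path whose induced vertex set has volume at most $\vol(G)/2$ while its parent's induced set has volume more than $\vol(G)/2$.

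First I would set up this collection of nodes and observe that, by a standard walk-down-from-the-root argument (each step the volume at most roughly halves, or one child already has volume $\le \vol(G)/2$), it is well-defined; then I would extract a laminar/charging structure so that I can lower bound $\cost_G(\T^*)$ by a sum over the distinct such nodes. For each such node $N$ with children $N_1, N_2$, one of the children (say $N_1$) has volume at most $\vol(G)/2$, so $\Phi_G(N_1)\ge \Phi_G$ by definition of graph conductance, which gives $w(N_1, V\setminus N_1)\ge \Phi_G\cdot \vol(N_1)$. I then need to relate $w(N_1, V\setminus N_1)$ to the within-split cut $w(\leaves(\T^*[N_1]),\leaves(\T^*[N_2]))$; this is where one pays attention to the factor $9/2$ in the statement — the edges leaving $N_1$ that go outside of $N$ rather than into $N_2$ have to be accounted for, and this is handled by summing over the nested nodes and carefully double-counting, which should produce the constant $2/9$ (equivalently lose a factor of at most $9/2$). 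Multiplying the cut lower bound by the size factor $|\leaves(\T^*[N])|$ and using that each such $N$ has $|\leaves(\T^*[N])|$ at least a constant fraction of $n$ (its parent has more than half the volume, hence — via $\dmax$, $\dmin$ — at least a constant fraction of the vertices), I get a per-node contribution of order $\Phi_G \cdot n \cdot \vol(N_1)$.

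The last step is to sum these contributions and convert $\sum \vol(N_1)$ into $\vol(G)$: since the maximal small-volume nodes partition (or cover a constant fraction of) $V$, the volumes telescope to $\Omega(\vol(G))$, yielding $\cost_G(\T^*) \gtrsim \Phi_G\cdot n\cdot\vol(G)$. Finally I would rewrite $n\cdot\vol(G)$ in the two equivalent forms claimed: $n\cdot\vol(G)\cdot (\davg/\dmax) = \vol(G)^2/\dmax$ using $\vol(G)=n\davg$, and $n\cdot\vol(G)\cdot(\dmin/\davg) = \dmin\cdot n^2$ likewise, so the stated $\max\{\vol(G)^2/\dmax,\, \dmin n^2\}$ drops out. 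I expect the main obstacle to be the bookkeeping in the charging argument: making sure that when I restrict to the maximal small-volume nodes and invoke $\Phi_G(N_1)\ge\Phi_G$, the boundary edges of $N_1$ that leave $N$ entirely are not over- or under-counted across different levels of the tree, and that this loses only the constant factor $9/2$ rather than something growing with $n$ — getting the clean constant $2/9$ is the delicate part, whereas the order-of-magnitude bound is comparatively routine once the node collection is fixed.
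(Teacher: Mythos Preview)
Your proposal contains a genuine error, not just a missing detail. You claim to establish $\cost_G(\T^*) \gtrsim \Phi_G\cdot n\cdot\vol(G)$ and then ``rewrite'' this as the two stated forms by multiplying through by $\davg/\dmax$ or $\dmin/\davg$. But those factors are at most $1$, so what you are really claiming is the stronger bound $\cost_G(\T^*)\ge c\,\Phi_G\cdot n\cdot\vol(G)$, and that bound is \emph{false} in general: combined with the trivial upper bound $\cost_G(\T)\le n\cdot\vol(G)/2$ (Lemma~\ref{lem:Trivial bound}), it would imply that \emph{every} \textsf{HC} tree is an $O(1/\Phi_G)$-approximation, which Example~\ref{ex:Expander Bad Example} explicitly refutes (there $\Phi_G=\Omega(1)$ yet the ratio can be $\Theta(n^{1/3})$). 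The step that breaks is the sentence ``its parent has more than half the volume, hence --- via $\dmax$, $\dmin$ --- at least a constant fraction of the vertices'': a set of volume $>\vol(G)/2$ has size at least $\vol(G)/(2\dmax)=n\cdot\davg/(2\dmax)$, which is \emph{not} a constant fraction of $n$ unless $\dmax/\davg=O(1)$. The degree ratios in the statement are exactly the price you pay for converting volume to size (or vice versa), not a cosmetic rewriting.

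The paper's proof is both simpler and structurally different: it runs \emph{two separate} single-cut arguments rather than a charging scheme over a collection of nodes. For the $\vol(G)^2/\dmax$ bound, walk from the root toward the child of larger \emph{volume} until the current node $A$ has $\vol(A)\le 2\vol(G)/3$; then both $A$ and $V\setminus A$ have volume at least $\vol(G)/3$, every edge in $E(A,V\setminus A)$ has its LCA at a node of volume $>2\vol(G)/3$ hence size $>2\vol(G)/(3\dmax)$, and $w(A,V\setminus A)\ge\Phi_G\cdot\vol(G)/3$. For the $\dmin\cdot n^2$ bound, repeat the walk but toward the child of larger \emph{size} until $|A'|\le 2n/3$; now LCAs have size $>2n/3$ directly, and the smaller side has volume at least $\dmin\cdot n/3$. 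One balanced cut per bound, no telescoping, no double-counting of boundary edges --- the constant $2/9=(2/3)\cdot(1/3)$ falls out immediately.
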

 We remark that, as a corollary, the lower bound above holds for any \textsf{HC} tree $\T$ of $G$.  
 
\begin{proof}[Proof of Lemma~\ref{lem:Lower Bound Cost Expander}]
    We will give two lower bounds for $\mathrm{cost}_G(\T^*)$. Let $A_0$ be the root of $\T^*$. For the first lower bound, we start with the root node $A_0$ and travel along $\T^*$ recursively as follows: at every intermediate node $A_i$, we travel down to the node of higher \emph{volume} among its two children. This process stops when we reach node $A_k$ such that $\vol(A_k) \leq \frac{2 \vol(G)}{3}$. We denote $ \lp A_0, \dots, A_k \rp$ to be the path in $\T^*$ from the root $A_0$ to $A_k$ and we define $A\triangleq A_k$ as well as $B\triangleq V\setminus A_k$. By construction, it holds that $\vol(A)> \vol(G)/3$ and $\vol(B)\geq\vol(G)/3$. We will show that the cut $(A,B)$ has significant contribution to $\cost_G\left(\T^*\right)$.

    By the stopping criteria, we know that $\vol(A_i) > \frac{2\vol(G)}{3}$ for every $0\leq i < k$. 
    For any edge $e =\{u, v\}$ in the cut $(A,B)$, if $A_i = u \vee v$ for some $0 \leq i < k$, then we have that 
    \[
        \abs{\leaves \lp \T^*[ u \vee v] \rp} = |A_i| = \frac{\vol(A_i)}{d_\mathrm{avg}(A_i)} > \frac{2\cdot \vol(G)}{3\cdot d_{\mathrm{max}}}.
    \]
    Therefore, we have that 
    \begin{align*}
        \mathrm{cost}_G(\T^*) 
        &= \sum_{e = \{u, v\}} w_e \cdot \abs{\leaves\lp \T^* [u \vee v]\rp}
        \geq \sum_{\substack{ e\in E(A, B) \\ e=\{u,v\}}} w_e \cdot \abs{\leaves\lp \T^* [u \vee v]\rp}\\
        &\geq \frac{2}{3}\cdot \frac{\vol(G)}{d_\mathrm{max}} \cdot w(A, B)\geq \frac{2}{3}\cdot \frac{\vol(G)}{d_\mathrm{max}} \cdot \Phi_{G} \cdot \min\{\vol(A), \vol(B)\}\\
        &\geq \frac{2}{3}\cdot \frac{\vol(G)}{d_\mathrm{max}} \cdot \Phi_G \cdot \frac{\vol(G)}{3}\\
        &= \frac{2\Phi_G \cdot \vol(G)^2}{9\cdot d_\mathrm{max}}.
    \end{align*} 
    
    The second lower bound is proven in a similar way: we start with the root node $A_0$, and travel along $\T^*$ recursively as follows: at every intermediate node $A_i'$ we travel down to the node of larger \emph{size}; this process stops when we reach node $A'_{\ell}$ such that $|A^{'}_{\ell}|\leq 2n/3$. We denote $(A_0=A_0',\ldots, A^{'}_{\ell})$ to be the path in $\T^*$ from the root to $A_\ell'$, and we define $A^{'}\triangleq A^{'}_{\ell}$ as well as  $B^{'}\triangleq V\setminus A^{'}_{\ell}$. By construction, it holds that $|A^{'}|> n/3$, and $|B^{'}|> n/3$. We will show that the cut $(A^{'}, B^{'})$  has significant contribution to $\cost_{G}(\T^{*})$. 
    
Similar to the analysis in the first case,  by the stopping criteria we have that $|A'_i| > \frac{2n}{3}$, for all $0 \leq i < \ell$. Moreover, for any $e = (u,v) \in E(A', B')$, we have that $|\leaves(\T^*[u \vee v])| > \frac{2n}{3}$. Hence, it holds that 
    \begin{align*}
        \mathrm{cost}_G(\T^*) 
        &= \sum_{e = \{u, v\}} w_e \cdot \abs{\leaves\lp \T^* [u \vee v]\rp}
        \geq \sum_{e = \{u, v\} \in E(A', B')} w_e \cdot \abs{\leaves\lp \T^* [u \vee v]\rp}\\
        &\geq \frac{2n}{3} \cdot w(A, B)
        \geq \frac{2n}{3} \cdot \Phi_{G} \cdot \min\{\vol(A), \vol(B)\}\\
        &\geq \frac{2n}{3} \cdot \Phi_G \cdot d_{\min} \cdot \frac{n}{3}\\
        &= \frac{2\Phi_G\cdot n^2 \cdot d_{\min}}{9}.
    \end{align*}
  Combining the two cases above gives us that 
    \[
        \mathrm{cost}_G(\T^*) \geq \frac{2\Phi_G}{9} \cdot \max \left\{\frac{\vol(G)^2}{d_{\max}} , d_{\min} \cdot n^2 \right\},
    \]
    which proves the statement. 
\end{proof}

\begin{proof}[Proof of Lemma~3.1]
    Let $\T$ be an arbitrary $\textsf{HC}$ tree, and $\T^*$ be an optimal tree. Combining Lemmas~\ref{lem:Trivial bound} and \ref{lem:Lower Bound Cost Expander},  we have that
    \begin{align*}
    \frac{\cost_G(\T)}{\cost_G(\T^*)} & \leq \frac{n\cdot\vol(G)}{2}\cdot \frac{9}{2\Phi_{G}\cdot n \cdot \vol(G)\cdot \max\left\{\frac{d_{\mathrm{avg}}}{d_{\max}}, \frac{d_{\min}}{d_{\mathrm{avg}}}\right\}} \\
    & = \frac{9}{4\cdot \Phi_{G}}\cdot \min\left\{\frac{d_{\mathrm{avg}}}{d_{\min}}, \frac{d_{\max}}{d_{\mathrm{avg}}} \right\}, 
    \end{align*}
    which proves the statement.
\end{proof}

\subsection{The algorithm for graphs of high conductance }\label{sec:Alg for expanders}
While Lemma~\ref{lem:simplebound} holds for  any graph $G$, it implies  some interesting facts for expander graphs: first of all, when $G=(V,E,w)$ satisfies $d_{\max}/d_{\min}=O(1)$ and $\Phi_G=\Omega(1)$, Lemma~\ref{lem:simplebound} shows that any \textsf{HC} tree $\T$ is an $O(1)$-approximate tree.  
In addition, although $\Phi_G$ plays a crucial role in analysing $\cost_G(\T)$ as for many other graph problems, Lemma~\ref{lem:simplebound} indicates that the degree distribution of $G$ might also have a significant impact. One could naturally ask the extend to which the degree distribution of $V(G)$ would influence the construction of $\OPT_G$.
To answer this question, we study the following example.

\begin{example}\label{ex:Expander Bad Example}
    We study the following graph $G$, in which  all the edges have unit weight:
    \begin{enumerate}
        \item Let $G_1=(V, E_1)$ be a constant-degree expander graph of $n$ vertices with $\Phi_{G_1}=\Omega(1)$, e.g., the ones presented in \cite{HooLinWig06}; 
        \item We choose $\lfloor n^{2/3} \rfloor$ vertices from $V$ to form $S$, and let $K=(S, S\times S)$ be a complete graph defined on $S$; 
        \item Partition the vertices of $V \setminus S$ into $\lfloor n^{2/3} \rfloor$ groups of roughly the same size, associate each group to a unique vertex in $S$, and let $E_2$ be the set of edges formed by connecting every vertex in $S$ with all the vertices in its associated group;
        \item We define $G\triangleq (V, E_1\cup (S\times S) \cup E_2)$, see Figure~\ref{Fig:Expande_Clique_Example}(a) for illustration.
    \end{enumerate}
    By construction, we know that $\Phi_G =\Omega(1)$, and the degrees of $G$ satisfy $d_{\max}=\Theta(n^{2/3})$, $d_{\min}=\Theta(1)$, and $d_{\mathrm{avg}}=\Theta(n^{1/3})$ as $\vol(G)=\Theta(n^{4/3})$. 
    Therefore, the ratio between $\cost_G(\T)$ for any \textsf{HC} tree $\T$ and $\OPT_G$ could be as high as $\Theta(n^{1/3})$. 
    On the other side, it is not difficult to show that the tree $\T^*$ illustrated in Figure~\ref{Fig:Expande_Clique_Example}(b), which first separates the set $S$ of high-degree vertices from $V\setminus S$ at the top of the tree, 
    actually $O(1)$-approximates $\OPT_G$. To see this, notice that $\cost(\T^*)~\leq~\cost(\mathcal{\T^*}[S]) + n \cdot\vol(G_1) + n^2 = \Theta(n^2)$, as the complete subgraph $G[S]$ induces a cost of $\Theta((n^{2/3})^3)$ \cite{dasgupta2016cost}. The existence of the subgraph $G[S]$ also implies that $\OPT_G = \Omega(n^2)$.
\end{example}
 %    \begin{figure}[htb]
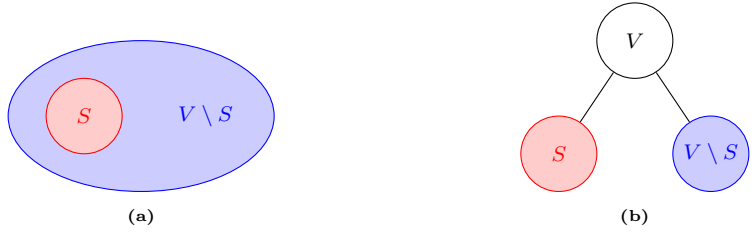
\begin{figure}[h]
    \centering
    \begin{subfigure}[t]{.4\textwidth}
        \centering
        % \begin{tikzpicture}[scale=0.3]
        \begin{tikzpicture}[scale = 0.5]
            \draw[color=blue, fill=blue!20] (0,0) circle [x radius=3.5cm, y radius=2cm] node[anchor=south east, right=10pt]{\scriptsize{$V \setminus S$}};
            \draw[color=red, fill=red!20] (-1.5, 0) circle [radius=1cm] node {\scriptsize{$S$}};
        \end{tikzpicture}
        \caption{}
    \end{subfigure}
    \begin{subfigure}[t]{.4\textwidth}
        \centering
        % \begin{tikzpicture}[scale=0.7, sibling distance=2cm]
        \begin{tikzpicture}[sibling distance=2cm]
            \node [draw, circle,inner sep=1pt, minimum size=1cm](first) {\scriptsize{$V$}}
            child { node [draw, circle, color = red,fill=red!20,inner sep=1pt, minimum size=1cm] {\scriptsize{$S$}} 
            %child { node { \vdots }}
            }
            child { node [draw, circle, color = blue,fill=blue!20,inner sep=1pt, minimum size=1cm] {\scriptsize{$V\setminus S$}} 
            %child { node { \vdots }}
            };
        \end{tikzpicture}
        \caption{}
    \end{subfigure}
    \caption{\small{(a) Our constructed graph $G$; (b) the tree that separates the vertices of high degrees from the others achieves $O(1)$-approximation.}}
    \label{Fig:Expande_Clique_Example}
    % \vspace{-0.3cm}
\end{figure}
 

This example suggests that grouping vertices of similar degrees  first would potentially help reduce $\cost_G(\T)$ for our constructed $\T$. This motivates us to design the following   Algorithm~\ref{algo:degree} to construct an \textsf{HC} tree, and the algorithm's performance is summarised in Theorem~\ref{thm:degree}. 
We highlight that the output of  Algorithm~\ref{algo:degree} is uniquely  determined by the ordering of the vertices of $G$ according to their degrees, which can be computed in $O\left(n\cdot\log n\right)$ time.

\begin{algorithm}[H]
\DontPrintSemicolon
  \KwInput{$G=(V,E,w)$ with the ordered vertices such that  $d_{v_1}\geq\ldots\geq d_{v_{|V|}}$;}
  \KwOutput{An HC tree $\T_{\deg}(G)$;}
  \If{$|V|=1$}
    {
        \Return the single vertex in $V$ as the tree;    
    }
    \Else 
    {
    	$i_{\max}:=\lfloor \log_2(|V|-1) \rfloor$;
    	 $r := 2^{\imax}$;
        $A:=\left\{v_1,\ldots, v_{r}\right\}$;
        $B:= V\setminus A$; 
    
        Let $\T_1 := \mathrm{\texttt{HCwithDegrees}}(G\{A\}); 
        \T_2 := \mathrm{\texttt{HCwithDegrees}}(G\{B\})$;
        
      \Return $\T_{\deg}$ with $\T_1$ and $\T_2$ as the two children.
    }
\caption{\texttt{HCwithDegrees}$(G\{V\})$ \label{algo:degree}}
\end{algorithm}

\begin{theorem}\label{thm:degree}
Given any graph $G=(V,E,w)$ with  conductance $\Phi_G$ as input, Algorithm~\ref{algo:degree} runs in $O(m + n \log n)$ time, and returns an HC tree $\T_{\deg}$ of $G$ that satisfies 
$\mathrm{cost}_G(\T_{\deg}) = O\left(1/\Phi_G^4\right)\cdot\OPT_G$.
\end{theorem}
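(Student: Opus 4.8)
The plan is to (i) read off the shape of the tree $\T_{\deg}$ produced by Algorithm~\ref{algo:degree}, (ii) turn this into an upper bound on $\cost_G(\T_{\deg})$ purely in terms of the degree sequence, and (iii) match this against $\OPT_G$ using two complementary lower bounds — Lemma~\ref{lem:Lower Bound Cost Expander}/Lemma~\ref{lem:simplebound} on one hand, and a ``per-vertex caterpillar'' lower bound on the other. The running time is routine: computing the degree order costs $O(m+n\log n)$, and the recursion of \texttt{HCwithDegrees} makes $O(n)$ calls, each performing only $O(1)$ arithmetic plus a split of size proportional to its input.

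\textbf{Shape of $\T_{\deg}$ and the degree-based upper bound.} Since the self-loops added in $G\{A\}$ and $G\{B\}$ preserve all degrees, the recursive orderings are consistent, and the leaves of $\T_{\deg}$ read left to right are $v_1,\dots,v_n$ in non-increasing degree order. I would then prove
\[
\abs{\leaves\lp\T_{\deg}[v_i\vee v_j]\rp}\ \le\ 2\max\{i,j\}\qquad\text{for all }i\ne j,
\]
by induction on $|V|$. At the top the vertex set splits into $A=\{v_1,\dots,v_r\}$ with $r=2^{\imax}$ and $B$ the remaining $n-r$ vertices, so $r\le n-1<2r$. If $v_i,v_j\in B$, apply the inductive hypothesis to $G\{B\}$ (indices shift down by $r$, and $2(\max\{i,j\}-r)\le 2\max\{i,j\}$); if $v_i,v_j\in A$, note $|A|=2^{\imax}$ is a power of two, so the subtree on $A$ is the complete binary tree, in which the lowest common ancestor of leaves $i,j$ spans a dyadic block of length $2^{\ell}$ containing both, and minimality of this block forces $2^{\ell}<2\max\{i,j\}$; and if $i\le r<j$ the lowest common ancestor is the root, with $n\le 2r<2j$ leaves. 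Plugging this into Dasgupta's edge formulation of the cost and using $\sum_{i<j,\ \{v_i,v_j\}\in E}w_{v_iv_j}\le d_{v_j}$ gives
\[
\cost_G(\T_{\deg})\ =\ \sum_{\{v_i,v_j\}\in E,\ i<j}w_{v_iv_j}\cdot\abs{\leaves\lp\T_{\deg}[v_i\vee v_j]\rp}\ \le\ 2\sum_{m=1}^{n} m\cdot d_{v_m}.
\]

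\textbf{Matching $\OPT_G$.} It then remains to show $\sum_{m}m\,d_{v_m}=O(1/\Phi_G^4)\cdot\OPT_G$, and here I would split into two regimes. If $\min\{\davg/\dmin,\ \dmax/\davg\}\le c/\Phi_G^3$ for a suitable constant $c$, then Lemma~\ref{lem:simplebound} applied to $\T_{\deg}$ already gives $\cost_G(\T_{\deg})\le\tfrac{9}{4\Phi_G}\cdot\tfrac{c}{\Phi_G^3}\cdot\OPT_G=O(1/\Phi_G^4)\OPT_G$, and we are done. The interesting regime is when the degree sequence is genuinely spread (both ratios $>c/\Phi_G^3$): there Lemma~\ref{lem:Lower Bound Cost Expander} is too weak (for instance, for weighted ``dense-core'' degree profiles $\OPT_G$ can exceed the bound of Lemma~\ref{lem:Lower Bound Cost Expander} by a $\Theta(\log n)$ factor, while $\sum_m m\,d_{v_m}$ grows with it). For this case I would establish a complementary lower bound: in an optimal tree $\T^*$, for a fixed vertex $u$ the nodes $u\vee v$ all lie on the root-path from $u$ and are therefore totally ordered by inclusion, so if $u$'s neighbours are listed $v^{(1)},v^{(2)},\dots$ with $u\vee v^{(1)}\subseteq u\vee v^{(2)}\subseteq\cdots$, then $\abs{\leaves(\T^*[u\vee v^{(\ell)}])}\ge\ell$; summing over $u$, applying the rearrangement inequality to the sorted edge weights $w_{u,(1)}\ge w_{u,(2)}\ge\cdots$ incident to $u$, and dividing by $2$ for double-counted edges yields a bound of roughly the form $\OPT_G\ge\tfrac12\sum_{u}\sum_{\ell}\ell\cdot w_{u,(\ell)}$. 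The final step is to argue that, in the spread regime and under $\Phi_G\ge\Phi$, this quantity is within $\poly(1/\Phi)$ of $\sum_m m\,d_{v_m}$: the conductance bound says no vertex set $S$ with $\vol(S)\le\vol(G)/2$ has internal weight exceeding a $(1-\Phi)/2$ fraction of $\vol(S)$, which prevents each vertex's weight profile from being too concentrated, so that $\sum_{\ell}\ell\,w_{u,(\ell)}$ is large relative to $d_u$ and $u$'s degree-rank, and the spread hypothesis then lets one charge $\sum_m m\,d_{v_m}$ against $\sum_u\sum_\ell\ell\,w_{u,(\ell)}$ with only a $\poly(1/\Phi)$ factor lost. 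Combining the two regimes gives $\cost_G(\T_{\deg})\le 2\sum_m m\,d_{v_m}=O(1/\Phi_G^4)\OPT_G$.

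\textbf{Expected main obstacle.} The structural leaf-count bound and the degree-based upper bound on $\cost_G(\T_{\deg})$ are the clean parts. The hard part will be the lower bound on $\OPT_G$ in the spread-degree regime: one genuinely needs a second lower bound, sensitive to the per-vertex weight profiles, since Lemma~\ref{lem:Lower Bound Cost Expander} alone loses a $\log n$ factor there; and making the ``conductance forbids concentration'' argument quantitatively tight enough to absorb an arbitrary spread degree sequence into only $\poly(1/\Phi_G)$ loss — correctly interleaving it with the case that is handled by Lemma~\ref{lem:simplebound} — is the delicate step, and is presumably where the exponent $4$ in $O(1/\Phi_G^4)$ originates.
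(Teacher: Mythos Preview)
Your approach is genuinely different from the paper's, and the first half --- the structural bound $|\leaves(\T_{\deg}[v_i\vee v_j])|\le 2\max\{i,j\}$ and hence $\cost_G(\T_{\deg})\le 2\sum_m m\,d_{v_m}$ --- is correct and a nice observation. The gap is exactly where you yourself flag it: the ``spread-degree regime'' lower bound is not a proof, and I do not see how to complete it along the lines you sketch. Your caterpillar-plus-rearrangement bound $\OPT_G\ge \tfrac12\sum_u\sum_\ell \ell\,w_{u,(\ell)}$ is valid, but it measures the \emph{per-vertex weight profile}, whereas your upper bound depends only on the \emph{degree sequence}; these two quantities are not in general comparable up to $\poly(1/\Phi_G)$. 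The sentence ``conductance forbids concentration'' is not yet an argument: conductance is a statement about \emph{sets}, while what you would need is that, for vertices $u$ of large degree-rank $m_u$, the incident weights are spread over $\Omega(m_u\cdot\Phi_G^{c})$ neighbours. Nothing in the conductance hypothesis prevents a single vertex from having most of its weight on one edge (take $S=\{u\}$: $\Phi_G(S)=1$ regardless), and the two-vertex cut $\{u,v\}$ only helps if the partner $v$ is also concentrated on the same edge. Pushing this to a global inequality relating $\sum_m m\,d_{v_m}$ to $\sum_u\sum_\ell \ell\,w_{u,(\ell)}$ in the spread regime looks at least as hard as the theorem itself.

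For comparison, the paper does not try to upper-bound $\cost_G(\T_{\deg})$ directly at all. Instead it starts from an optimal tree $\T_0=\T^*$ and produces a chain $\T_0\to\T_1\to\T_2\to\T_3\to\T_4=\T_{\deg}$ of explicit tree surgeries (``Regularisation'', ``Compression'', ``Matching'', ``Sorting''), each increasing cost by a factor $O(1/\Phi_G)$; the product of the four factors yields $O(1/\Phi_G^4)$. The workhorse is the notion of the \emph{dense branch} of a tree (the root-to-node path along children of higher volume until both children have $\le\vol(G)/2$), together with Lemma~\ref{lem:Cost light nodes}, which says that $\cost_G(\T)\ge\tfrac{\Phi_G}{2}\sum_i|A_{i-1}|\vol(B_i)$ and $\cost_G(\T)\ge\tfrac{\Phi_G}{2}|A_k|\vol(A_k)$ along that branch. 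Every surgery introduces extra cost that can be charged to one of these two sums, so the increase is $O(1/\Phi_G)\cdot\cost_G(\T)$ per step. This bypasses entirely the need for a separate closed-form lower bound on $\OPT_G$ beyond what the dense-branch lemma gives for the tree currently being modified --- which is why the paper never runs into the degree-sequence/weight-profile mismatch that stalls your approach.
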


Theorem~\ref{thm:degree} shows that, when the input   $G$ satisfies  $\Phi_G=\Omega(1)$, the output $\T_{\deg}$ of Algorithm~\ref{algo:degree}  achieves an $O(1)$-approximation.
 It is important to notice that, while it is known \cite{charikar2017approximate} that  there is no polynomial-time algorithm  that $O(1)$-approximates   $\mathsf{OPT}_G$ for a general graph $G$ under the \textsf{SSEH}, our result shows that an $O(1)$-approximate \textsf{HC} tree can be constructed in polynomial-time for expander graphs. Our result  is in line with a sequence of research showing that this type of problems become easier when the input graphs exhibit a good expansion property~(e.g., \cite{ABS15,AKKSTV08,Kolla10,Li0Z19}). 
 To the best of our knowledge, Theorem~\ref{thm:degree} is the first such result for hierarchical clustering on graphs of high expansion conductance. 
Moreover, as the high-conductance property can be determined in nearly-linear time by computing $\lambda_2(\mathcal{L}_G)$ 
 and applying the Cheeger inequality, Algorithm~\ref{algo:degree} presents a very simple construction of an $O(1)$-approximate \textsf{HC} tree once the input $G$ is known to have high conductance.

\subsection{Analysis of the algorithm}\label{sec:Analysis expanders}
In this subsection we analyse Algorithm~\ref{algo:degree}, and   prove Theorem~\ref{thm:degree}. 

\subsubsection{The dense branch and its properties}
Our analysis is crucially based on the notion of \emph{dense branch}, which can be informally described as follows: for any given $\T$,
we perform a traversal in $\T$ starting at its root node $A_0$ and sequentially travel to the child of \emph{higher} volume. The process stops whenever we reach a node $A_k$, for some $k \in \mathbb{Z}_{\geq 0}$, such that $\vol(A_k) > \vol(G)/2$ and both of its children have volume at most $\vol(G)/2$. The sequence of  visited nodes in this process is the dense branch of $\T$. Formally, we define the dense branch as follows:

\begin{definition}[Dense branch]
    Given a graph $G$ with an \textsf{HC} tree $\T$, the \emph{dense branch} is the path $(A_0, A_1, \dots, A_k)$ in $\T$, for some $k \in \mathbb{Z}_{\geq 0}$, such that the following hold:
    \begin{enumerate}
        \item $A_0$ is the root of $\T$;
        \item $A_k$ is the node such that $\vol(A_k) > \vol(G)/2$ and both of its children have volume at most $\vol(G)/2$.
    \end{enumerate}
\end{definition}

It is important to note that the dense branch of $\T$ is unique, and consists of all the nodes $A_i$ with $\vol(A_i) > \vol(G)/2$. Moreover, for every pair of consecutive nodes $A_i, A_{i+1}$ on the dense branch, $A_{i+1}$ is the child of $A_i$ of the \emph{higher} volume. 
Now we will present some properties of the dense branch, which will be used extensively in our analysis.

\begin{lemma}[Lower bound of $\cost_G(\T)$ based on the dense branch]\label{lem:Cost light nodes}
    Let $G$ be a graph of conductance $\Phi_G$, and let $\T$ be an arbitrary \textsf{HC} tree of $G$. Suppose $(A_0, \dots, A_k)$ is the dense branch of $\T$, for some $k \in \mathbb{Z}_{\geq 0}$, and suppose each node $A_i$ has sibling $B_i$, for all $1 \leq i \leq k$. Then,
    the following lower bounds of 
    $\cost_G(\T)$ hold:
    \begin{enumerate}
        \item $\cost_G(\T) \geq \frac{\Phi_G}{2}\sum_{i=1}^k |A_{i-1}| \cdot \vol(B_i)$;
        \item $\cost_G(\T) \geq \frac{\Phi_G}{2} \cdot |A_k| \cdot \vol(A_k)$.
    \end{enumerate}
\end{lemma}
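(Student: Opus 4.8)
The plan is to bound $\cost_G(\T)$ from below by isolating, for each edge cut along the dense branch, the contribution it makes to Dasgupta's cost function. Recall from the first definition of $\cost_G$ that every edge $e = \{u,v\}$ contributes $w_e \cdot |\leaves(\T[u\vee v])|$. For each $1 \le i \le k$, the cut separating $A_i$ from its sibling $B_i$ at the node $A_{i-1}$ contributes at least $|A_{i-1}| \cdot w(A_i, B_i)$ to the total cost, since for every edge $e$ crossing this cut we have $u \vee v = A_{i-1}$ (or $u\vee v$ is an ancestor of $A_{i-1}$, which only makes $|\leaves(\T[u\vee v])|$ larger), hence $|\leaves(\T[u\vee v])| \ge |A_{i-1}|$. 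Crucially, the cuts $(A_i, B_i)$ for distinct $i$ are edge-disjoint: an edge crossing $(A_i, B_i)$ has exactly one endpoint in $B_i$, and the sets $B_1, \dots, B_k$ are pairwise disjoint (each $B_i$ is the "off-branch" child at level $i$), so no edge is counted twice. This gives the master inequality
\[
    \cost_G(\T) \ge \sum_{i=1}^k |A_{i-1}| \cdot w(A_i, B_i).
\]

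For part (1), I would then lower-bound each $w(A_i, B_i)$ using the conductance of $G$. Since $B_i$ is an off-branch node, $\vol(B_i) \le \vol(G)/2$, so $B_i$ is a valid candidate set in the definition of $\Phi_G$, and therefore
\[
    w(A_i, B_i) \ge w(B_i, V\setminus B_i) \ge \Phi_G \cdot \vol(B_i),
\]
where the first inequality holds because $E(A_i, B_i) \subseteq E(B_i, V\setminus B_i)$. Wait — I should be careful here: $w(A_i,B_i)$ is the number of edges between $A_i$ and $B_i$, whereas $w(B_i, V\setminus B_i)$ counts all edges leaving $B_i$, which could be strictly larger, so the inequality $w(A_i,B_i) \ge \Phi_G\vol(B_i)$ needs $A_i \supseteq V \setminus B_i$, which is false in general. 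The clean fix: apply conductance directly to the decomposition at node $A_{i-1}$. Actually the simplest route is: edges out of $B_i$ within the subtree rooted at $A_{i-1}$ all go to $A_i$, but edges out of $B_i$ to vertices outside $A_{i-1}$ also exist. Since all edges leaving $B_i$ in the whole graph either go to $A_i$ or leave $A_{i-1}$ entirely, and... hmm, this is exactly the subtlety. The honest statement is $w(B_i, V \setminus B_i) \ge \Phi_G \vol(B_i)$, and $w(B_i, V\setminus B_i) \ge w(A_i, B_i)$ goes the wrong way. So instead I would sum the master inequality differently or use $w(A_i,B_i) = w(B_i, V\setminus B_i) - w(B_i, V \setminus A_{i-1})$ and telescope — but the cleanest is: the factor $|A_{i-1}|$ is also an upper bound on $|\leaves(\T[u\vee v])|$ is not what we need. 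Let me reconsider: actually we only need $\cost_G(\T) \ge \sum_i |A_{i-1}| w(A_i,B_i)$, and then observe that $w(A_i, B_i) \ge \Phi_G \vol(B_i) - (\text{edges from } B_i \text{ leaving } A_{i-1})$, and these latter edges cross some higher cut $(A_j, B_j)$ with $j < i$... this suggests a more careful telescoping, but I suspect the intended argument simply uses that $B_i$'s only neighbours among $A_0 \setminus B_i$ pass through the subtree, which does hold when $B_i$ is off the dense branch only if $A_{i-1} = V$ or the ancestor cuts are accounted — I would reexamine whether in fact $\vol(B_i)\le \vol(G)/2$ already suffices via $\Phi_G(B_i) = w(B_i,V\setminus B_i)/\vol(B_i) \ge \Phi_G$ combined with the bound $\cost_G(\T) \ge \sum_i |A_{i-1}| \cdot w(B_i, V\setminus B_i)/(\text{overcount factor})$; the overcounting of edges leaving $B_i$ to ancestors is at most a constant factor absorbed into the $\Phi_G/2$.

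For part (2), I would apply the same edge-counting idea but only to the single cut at node $A_k$: by definition $A_k$ has both children $A_{k+1}, B_{k+1}$ of volume at most $\vol(G)/2$, so $\Phi_G(A_{k+1}) \ge \Phi_G$ (taking the child of smaller volume, WLOG), giving $w(A_{k+1}, B_{k+1}) \ge \Phi_G \min\{\vol(A_{k+1}), \vol(B_{k+1})\} \ge \Phi_G \vol(A_k)/2$ — wait, $\min$ of the two could be much smaller than $\vol(A_k)/2$; I instead want $w(A_{k+1},B_{k+1}) \ge \Phi_G \cdot \vol(A_{k+1})$ where $A_{k+1}$ is the smaller child, which is $\ge \Phi_G \vol(A_k)/ $ something. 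Hmm — the stated bound is $\frac{\Phi_G}{2}|A_k|\vol(A_k)$, so the factor $|A_k|$ must come from $|\leaves(\T[u\vee v])| = |A_k|$ for edges crossing the cut inside $A_k$, and the factor $\vol(A_k)$ from $w(A_{k+1},B_{k+1}) \ge \Phi_G \vol(\text{smaller child})$ — but that's not $\vol(A_k)$. The resolution is surely to use the cut $(A_k, V \setminus A_k)$ itself: edges leaving $A_k$ have $u \vee v$ equal to some ancestor $A_j$ with $|\leaves| \ge |A_k|$... no, they'd have $|\leaves| \ge |A_j| \ge |A_{k-1}| > |A_k|$. And $w(A_k, V\setminus A_k) \ge \Phi_G \min\{\vol(A_k), \vol(V\setminus A_k)\} = \Phi_G \vol(V\setminus A_k)$ since $\vol(A_k) > \vol(G)/2$; but $\vol(V\setminus A_k)$ need not be $\ge \vol(A_k)/2$. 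OK — I think the actual argument for (2) is: $A_k$'s smaller child, say $B_{k+1}$, satisfies $\vol(B_{k+1}) \ge \vol(A_k) - \vol(G)/2 \ge 0$, which is useless, OR one argues $\vol(B_{k+1}) + \vol(A_{k+1}) = \vol(A_k) + (\text{self-loops})$ and since $A_{k+1}$ has volume $\le \vol(G)/2 < \vol(A_k)$...

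The main obstacle, then, is precisely pinning down the conductance-to-cut conversions in (1) and (2) so that the constant $\Phi_G/2$ (rather than, say, $\Phi_G/3$ or $\Phi_G$) comes out, while correctly handling the fact that $A_i$ and $V\setminus B_i$ differ. I expect the clean path is: for (1), note $w(A_i, B_i) = w(B_i, V \setminus B_i) - w(B_i, A_0 \setminus A_{i-1})$, but since $B_i \subseteq A_{i-1}$ and the tree structure means $w(B_i, A_0\setminus A_{i-1}) = 0$ when $A_{i-1}$ itself has no off-branch edges — which is false unless $A_{i-1}$ is the root. So in fact the honest bound is $\cost_G(\T) \ge \sum_i |A_{i-1}| w(A_i,B_i)$ and separately $w(B_i, V\setminus B_i) \ge \Phi_G \vol(B_i)$ with $w(B_i, V\setminus B_i)$ double-counted across at most... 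I would need to think about whether $|A_{i-1}| \ge |A_{j-1}|$ for $j > i$ lets the overcounted edge mass be reabsorbed, which it does since along the dense branch sizes are decreasing. Concretely: $\sum_i |A_{i-1}| w(B_i, V\setminus B_i) \le \sum_i |A_{i-1}| \big( w(A_i,B_i) + \sum_{j<i} w(B_i, B_j)\big) \le 2 \sum_i |A_{i-1}| w(A_i, B_i) \le 2\cost_G(\T)$, where the middle step uses that each edge $B_i$–$B_j$ ($j<i$) is charged once with coefficient $|A_{i-1}|$ on the "out of $B_i$" side and is the same edge counted in $w(A_j, B_j)$ with the larger coefficient $|A_{j-1}| \ge |A_{i-1}|$. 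Combining with $w(B_i,V\setminus B_i)\ge \Phi_G\vol(B_i)$ yields $\cost_G(\T) \ge \frac{\Phi_G}{2}\sum_i |A_{i-1}|\vol(B_i)$, which is part (1). Part (2) should follow from a nearly identical but simpler one-cut version applied at $A_k$, using $\vol(A_k) > \vol(G)/2 \ge \vol(B_{k+1})$ to see the smaller child's contribution, and I would fill in which child plays which role so the $\vol(A_k)$ factor emerges — likely by bounding $w(A_k, V\setminus A_k) \ge \Phi_G\min\{\vol(A_k),\vol(V\setminus A_k)\}$ and noting the edges leaving $A_k$ have LCA-leaf-count $\ge |A_k|$, combined with a symmetric bound from inside $A_k$, with the factor $1/2$ splitting the two cases.
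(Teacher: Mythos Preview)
Your argument for part~(1), after the meandering, is correct and is essentially the paper's proof in disguise. The paper organizes it more cleanly: set $\mathcal{S}=\{A_k,B_1,\dots,B_k\}$, which is a partition of $V$; then
\[
\cost_G(\T)\ \ge\ \sum_{\{X,Y\}\subset \mathcal{S},\,X\ne Y}\ \sum_{e\in E(X,Y)}\cost_\T(e)\ =\ \frac12\sum_{X\in\mathcal{S}}\sum_{e\in E(X,V\setminus X)}\cost_\T(e)\ \ge\ \frac12\sum_{i=1}^k |A_{i-1}|\cdot w(B_i,V\setminus B_i),
\]
since every edge leaving $B_i$ has its LCA at $A_{i-1}$ or above. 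Your charging argument (each $B_i$--$B_j$ edge with $j<i$ is charged to its appearance in $E(A_j,B_j)$ with the larger coefficient $|A_{j-1}|$) is exactly the unfolding of this factor-$\tfrac12$ double-count, so the two routes coincide.

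For part~(2) there is a genuine gap: you never identify the correct cut. You try the smaller-volume child of $A_k$ (but then $\min\{\vol(A_{k+1}),\vol(B_{k+1})\}$ can be tiny) and the cut $(A_k,V\setminus A_k)$ (but $\vol(V\setminus A_k)$ need not be comparable to $\vol(A_k)$). The missing idea is to take the \emph{larger}-volume child of $A_k$, call it $A_{k+1}$. By the definition of the dense branch, both children of $A_k$ have volume at most $\vol(G)/2$, so $A_{k+1}$ is a valid set for the conductance bound, giving $w(A_{k+1},V\setminus A_{k+1})\ge \Phi_G\cdot\vol(A_{k+1})$. At the same time, since $A_{k+1}$ is the larger child, $\vol(A_{k+1})\ge \vol(A_k)/2$. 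Every edge leaving $A_{k+1}$ has its LCA at $A_k$ or an ancestor of $A_k$, hence leaf-count at least $|A_k|$. Putting these together yields
\[
\cost_G(\T)\ \ge\ |A_k|\cdot w(A_{k+1},V\setminus A_{k+1})\ \ge\ |A_k|\cdot \Phi_G\cdot \vol(A_{k+1})\ \ge\ \frac{\Phi_G}{2}\,|A_k|\cdot\vol(A_k),
\]
which is exactly the claimed bound. The point you were circling around but never landed on is that ``larger-volume child'' simultaneously gives you $\vol\le\vol(G)/2$ (from the dense-branch stopping condition) and $\vol\ge\vol(A_k)/2$ (from being the larger half).
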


\begin{proof}
    Let $\mathcal{S} = \{A_k, B_1, \dots, B_k\}$. We   focus on the edges $e$ crossing different pairs $(X, Y)$ of $X, Y \in \mathcal{S}$, and  have that \begin{align*}
        \cost_G(\T) 
        &\geq \sum_{\substack{(X, Y) \in \mathcal{S}^2 \\ X \neq Y}} \sum_{e \in E(X, Y)} \cost_{\T}(e)
        = \frac{1}{2}\cdot \sum_{X \in \mathcal{S}} \sum_{e \in E(X, V \setminus X)} \cost_{\T}(e)\\
        &\geq \frac{1}{2} \cdot \sum_{i=1}^k \sum_{e \in E(B_i, V \setminus B_i)} \cost_{\T}(e)
        \geq \frac{1}{2} \cdot \sum_{i=1}^k \sum_{e \in E(B_i, V\setminus B_i)}
        w_e \cdot |A_{i-1}| \\
        &= \frac{1}{2} \cdot \sum_{i=1}^k |A_{i-1}| \cdot w(B_i, V\setminus B_i)
        \geq \frac{\Phi_G}{2} \cdot \sum_{i=1}^k |A_{i-1}| \cdot \vol(B_i),
    \end{align*}
        where the third inequality holds by the fact that  $A_{i-1}$ is the parent of $B_i$ and any edge $e$ with  exactly one endpoint in $B_i$ satisfies $\cost_{\T}(e) \geq w_e \cdot |A_{i-1}|$, and the last inequality holds by the fact that   $G$ has conductance $\Phi_G$ and $\vol(B_i) \leq \vol(G)/2$ by the definition of the dense branch.

    For the second bound, let $A_{k+1}, B_{k+1}$ be the children of $A_k$ such that $\vol(A_{k+1}) \geq \vol(B_{k+1})$. Therefore, we have that 
    \begin{align*}
        \cost_G(\T) 
        &\geq \sum_{e \in E(A_{k+1}, V \setminus A_{k+1})} \cost_T(e)
        \geq \sum_{e \in E(A_{k+1}, V \setminus A_{k+1})} w_e \cdot |A_k| \\
        &\geq |A_k| \cdot w(A_{k+1}, V \setminus A_{k+1})
        \geq |A_k| \cdot \Phi_G \cdot \vol(A_{k+1})\\
        &\geq \frac{\Phi_G}{2} \cdot |A_k| \cdot \vol(A_k).\qedhere
    \end{align*}
\end{proof}

Next, we prove that, if the dense branch of an optimal tree $\T^*$ consists of a single node, i.e., the root of $\T^*$, then any tree $\T$ achieves a $(1/\Phi_G)$-approximation.
 If we assume this corner case occurs, we are able to prove a stronger result over Theorem~\ref{thm:degree}, and this result is presented in the following lemma.

\begin{lemma}\label{lem:Short dense branch}
    Let $G$ be a graph of conductance $\Phi_G$, and let $\T^*$ be any optimal \textsf{HC} tree of $G$ such that its dense branch only consists of the root of $\T^*$. Then, it holds  for any \textsf{HC} tree $\T$ of $G$ that
    \[
        \cost_G(\T) \leq \frac{\cost_G(\T^*)}{\Phi_G}.
    \]
    In particular, it implies that $\cost_G(\T_{\deg}) \leq 1/\Phi_G \cdot \OPT_G$.
\end{lemma}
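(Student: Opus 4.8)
The plan is to pair the trivial upper bound of Lemma~\ref{lem:Trivial bound} with a matching lower bound on $\OPT_G = \cost_G(\T^*)$ coming from the top split of $\T^*$; the dense-branch hypothesis is exactly what makes that split volume-balanced, and hence makes the conductance bound strong enough to absorb the $1/\Phi_G$ loss.

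First I would unpack the hypothesis. Writing $A_0 = V$ for the root of $\T^*$ and $A_0 \rightarrow (N_1, N_2)$ for its two children, the assumption that the dense branch is $(A_0)$ means precisely, by the definition of the dense branch, that $\vol(N_1) \le \vol(G)/2$ and $\vol(N_2) \le \vol(G)/2$. Since $N_1, N_2$ partition $V$ we have $\vol(N_1) + \vol(N_2) = \vol(G)$, so in fact $\vol(N_1) = \vol(N_2) = \vol(G)/2$. Applying the definition of $\Phi_G$ to $N_1$ (which has volume at most $\vol(G)/2$) then gives $w(N_1, N_2) = w(N_1, V \setminus N_1) \ge \Phi_G \cdot \vol(N_1) = \Phi_G \cdot \vol(G)/2$.

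Next I would lower-bound $\cost_G(\T^*)$ by keeping only the edges crossing the top cut: for every $e = \{u,v\} \in E(N_1, N_2)$ the lowest common ancestor $u \vee v$ in $\T^*$ is the root $A_0$, so $\cost_{\T^*}(e) = w_e \cdot |\leaves(\T^*[A_0])| = w_e \cdot n$. Summing over such edges yields $\cost_G(\T^*) \ge n \cdot w(N_1, N_2) \ge \Phi_G \cdot \frac{n \cdot \vol(G)}{2}$. Finally, for any \textsf{HC} tree $\T$ of $G$, Lemma~\ref{lem:Trivial bound} gives $\cost_G(\T) \le \frac{n \cdot \vol(G)}{2} \le \frac{1}{\Phi_G} \cdot \cost_G(\T^*)$, which is the claimed inequality; taking $\T = \T_{\deg}$ and recalling $\OPT_G = \cost_G(\T^*)$ yields the ``in particular'' statement.

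I do not expect a genuine obstacle here: the argument is a two-line sandwich. The only point that needs care is the very first step — recognising that ``dense branch $=$ root'' forces both children of the root to have volume exactly $\vol(G)/2$, since this balancedness is what turns the single-cut conductance bound into a $\Theta(\Phi_G \cdot n \cdot \vol(G))$ lower bound rather than something weaker.
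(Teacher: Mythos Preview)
Your proposal is correct and follows essentially the same argument as the paper: deduce from the dense-branch hypothesis that both children of the root have volume exactly $\vol(G)/2$, lower-bound $\cost_G(\T^*)$ by $n\cdot w(N_1,N_2)\geq \Phi_G\cdot n\cdot\vol(G)/2$ via the conductance, and combine with the trivial upper bound of Lemma~\ref{lem:Trivial bound}.
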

 
\begin{proof}
     Suppose the dense branch of $\T^*$ only consists of the root node $A_0$, and  let $A_1, B_1$ be the two children of $A_0$ such that $\vol(A_1) \geq \vol(B_1)$. Since $A_1$ does not belong to the dense branch, we know that $\vol(A_1) \leq \vol(G)/2$ and hence $\vol(A_1) = \vol(B_1) = \vol(G)/2$. Therefore, we have that 
    \[
        \cost_G(\T^*) 
        \geq \sum_{e \in E(A_1, B_1)} \cost_{\T^*}(e)
        = n \cdot w(A_1, B_1) 
        \geq n \cdot \Phi_G \cdot \frac{\vol(G)}{2},
    \]
    where the last inequality uses that $G$ has conductance $\Phi_G$ and $\vol(A_1) = \vol(B_1) = \vol(G)/2$. On the other hand,   by  Lemma~\ref{lem:Trivial bound}  it holds for any tree $\T$ that
    \[
        \cost_G(\T) \leq n \cdot \frac{\vol(G)}{2}.
    \]
    Combining the two inequalities above proves the claimed statement.
\end{proof}

 Finally, we prove that if the size of the last node on the dense branch of an optimal tree $\T^*$ is significantly large, then any tree $\T$ achieves an $O(1/\Phi_G)$-approximation. Again, if we assume this corner case occurs, we prove a stronger result over Theorem~\ref{thm:degree} and we present the result in the following lemma.
\begin{lemma}\label{lem:Only large nodes on dense branch}
    Let $G$ be a graph of conductance $\Phi_G$, and let $\T^*$ be any optimal \textsf{HC} tree of $G$ whose dense branch is $(A_0, \dots, A_k)$, for some $k \in \mathbb{Z}_{\geq 0}$. If $|A_k| \geq (n-1) / 2$, then it holds for any \textsf{HC} tree $\T$ of $G$ that 
    \[
        \cost_G(\T) \leq \frac{8\cdot \cost_G(\T^*)}{\Phi_G}.
    \]
    In particular, it implies that $\cost_G(\T_{\deg}) \leq 8/\Phi_G \cdot \OPT_G$.
\end{lemma}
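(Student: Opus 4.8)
The plan is to combine the universal upper bound on $\cost_G(\T)$ from Lemma~\ref{lem:Trivial bound} with the second lower bound on $\cost_G(\T^*)$ from Lemma~\ref{lem:Cost light nodes}, which is tailored exactly to the last node $A_k$ of the dense branch. First I would dispose of the degenerate case $n=1$, where every \textsf{HC} tree has cost $0$ and the inequality is trivial. For $n\geq 2$, I would invoke Lemma~\ref{lem:Trivial bound} to obtain $\cost_G(\T)\leq n\cdot\vol(G)/2$ for the arbitrary tree $\T$ on the left-hand side.

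Next I would apply the second bound of Lemma~\ref{lem:Cost light nodes} to the optimal tree $\T^*$, giving $\cost_G(\T^*)\geq \frac{\Phi_G}{2}\cdot|A_k|\cdot\vol(A_k)$. At this point I would use the two structural facts available about $A_k$: by hypothesis $|A_k|\geq (n-1)/2$, and since $A_k$ lies on the dense branch of $\T^*$, the definition of the dense branch gives $\vol(A_k)>\vol(G)/2$. Substituting these yields $\cost_G(\T^*)\geq \frac{\Phi_G}{2}\cdot\frac{n-1}{2}\cdot\frac{\vol(G)}{2}=\frac{\Phi_G(n-1)\vol(G)}{8}$, and since $n-1\geq n/2$ for $n\geq 2$, this is at least $\frac{\Phi_G\, n\,\vol(G)}{16}$.

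Finally I would chain the two estimates: $\cost_G(\T)\leq \frac{n\vol(G)}{2}=\frac{8}{\Phi_G}\cdot\frac{\Phi_G\, n\,\vol(G)}{16}\leq \frac{8}{\Phi_G}\cdot\cost_G(\T^*)$, which is the claimed bound, and specialising to $\T=\T_{\deg}$ with $\T^*$ optimal gives $\cost_G(\T_{\deg})\leq 8/\Phi_G\cdot\OPT_G$. I do not expect a genuine obstacle here: the argument is essentially a two-line combination of already-established bounds. The only point requiring a little care is checking that the hypothesis $|A_k|\geq (n-1)/2$ (rather than the cleaner $n/2$) still leaves enough room; it does, precisely because the trivial upper bound of Lemma~\ref{lem:Trivial bound} carries the factor $n/2$ and $n-1\geq n/2$ once $n\geq 2$, so the constant $8$ absorbs the loss.
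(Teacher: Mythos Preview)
Your proposal is correct and follows essentially the same argument as the paper: both combine the trivial upper bound $\cost_G(\T)\leq n\cdot\vol(G)/2$ with the second inequality of Lemma~\ref{lem:Cost light nodes} applied to $\T^*$, then use $|A_k|\geq (n-1)/2$, $\vol(A_k)>\vol(G)/2$, and $n-1\geq n/2$ to reach the factor~$8$. Your explicit handling of the $n=1$ case is a harmless addition the paper omits.
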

\begin{proof}
    We apply the second property of  Lemma~\ref{lem:Cost light nodes} for the tree $\T^*$, and have that
     \[
        \cost_G(\T^*) 
        \geq \frac{\Phi_G}{2} \cdot |A_k| \cdot \vol(A_k)
        \geq \frac{\Phi_G}{2} \cdot \frac{n-1}{2} \cdot \frac{\vol(G)}{2}
        \geq \frac{\Phi_G}{8} \cdot \frac{n \cdot \vol(G)}{2},
     \]
    where the second inequality uses that $\vol(A_k) > \vol(G)/2$ and our assumption on the size of $A_k$. On the other hand, by Lemma~\ref{lem:Trivial bound} it holds for any tree $\T$ that
    \[
        \cost_G(\T) \leq n \cdot \frac{\vol(G)}{2}.
    \]
    Combining the two inequalities above proves the claimed statement.
\end{proof}

\subsubsection{Proof of Theorem~\ref{thm:degree}}
 
% \st{\paragraph{Dense branch.} The key notion employed in our proof is that of the \emph{dense branch} defined as follows: for any \textsf{HC} tree $\T$ of $G$, the dense branch is the maximal path in $\T$, $(A_0, A_1, \dots, A_k)$ for some $k\in\mathbb{Z}_{\geq 0}$, such that the following hold:}
% \begin{enumerate}
%     \item \st{$A_0$ is the root of $\T$;} 
%     \item \st{For any $1 \leq i \leq k$, $A_i$ is the child of $A_{i-1}$ with the \emph{higher} volume;} 
%     \item \st{The final node $A_k$ has the property that $\vol(A_k) > \vol(G)/2$ and both children of $A_k$ have volume at most $\vol(G)/2$.} 
% \end{enumerate}

% \orange{Notice that, by definition, the dense branch of $T$ is the maximal path from the root, in which every $A_i$ has volume at least $\vol(G)/2$. } 

% \orange{so the dense branch of any $T$ is unique and well-defined.   Moreover, any node $B$ that isn't along the dense branch satisfies that $\vol(B) \leq \vol(G) /2$.}
 Now we  prove the main result of this section, i.e., Theorem~\ref{thm:degree}. To sketch the main proof ideas,  we start with  an optimal \textsf{HC} tree $\T_0$ of $G$,  and construct trees $\T_1, \T_2, \T_3$ and $\T_4$ with the following properties: 
\begin{enumerate}
    \item  $\cost_G(\T_i) $ can be upper bounded with respect to $\cost_G\left(\T_{i-1}\right)$ for every $1\leq i\leq 4$;
    \item The final constructed tree $\T_4$ is exactly the tree $\T_{\deg}(G)$, the output of Algorithm~\ref{algo:degree}. 
\end{enumerate}
% (i) $\cost_G(\T_i) $ can be upper bounded with respect to $\cost_G\left(\T_{i-1}\right)$ for every $1\leq i\leq 4$, 
% (ii) the final constructed   $\T_4$ is exactly the tree $\T_{\deg}(G)$, the output of Algorithm~\ref{algo:degree}. 
Combining these two facts allows us to upper bound $\cost_G(\T_{\deg})$ with respect to $\OPT_G$. In the remaining part of the section, we assume that the dense branch of the optimal tree $\T_0$ contains at least two nodes, since otherwise  one can apply Lemma~\ref{lem:Short dense branch} to obtain a stronger result than Theorem~\ref{thm:degree}.  Moreover, if $A_{k_0}$ is the last node on the dense of $\T_0$ we assume that $|A_{k_0}| < (n-1)/2$, as otherwise by Lemma~\ref{lem:Only large nodes on dense branch} we again obtain a stronger result than Theorem~\ref{thm:degree}.

%%%%%%%%%%%%%%%%%%%%%%%%%%%%%%%%%%%%%%%%%%%%%%%%%%%%%%%%%%%%%%%%%%%%%
% Step 1
%%%%%%%%%%%%%%%%%%%%%%%%%%%%%%%%%%%%%%%%%%%%%%%%%%%%%%%%%%%%%%%%%%%%%
\paragraph{Step~1: Regularisation.}

Let $(A_0, \dots, A_{k_0})$, for some $k_0 \in \mathbb{Z}_{+}$, be the dense branch of $\T_0$, and let $B_i$ be the sibling of $A_i$, for all $1\leq i\leq k_0$. Let $\imin = \lfloor \log_2 |A_{k_0}| \rfloor$,  $i_{\max}=\lfloor \log_2(|V|-1) \rfloor$,  and notice that  $\imin < \imax$\footnote{This follows from the assumption that $|A_{k_0}| < (|A_0| - 1) / 2$.}. 
% and that $|A_1| \geq 2^{\imax}$, which ensures that $\imin \leq \imax$.
% In general, those assumptions are likely to happen for any optimal $\T_0$, but for completeness we will deal with them in the Appendix. 
The goal of this step is to adjust the dense branch of $\T_0$, such that the resulting tree $\T_1$ satisfies the following conditions:
\begin{enumerate}
    \item $|A_1| \geq 2^{\imax}$;
    \item For all  $i\in (\imin, \imax]$, there is a node of size exactly $2^i$ along the dense branch of $\T_1$;
    \item  If $|A_{k_0}| > 2^{\imin}$, then $A_{k_0}$ has a child of size $2^{\imin}$. 
\end{enumerate}

% for all $i\in [\imin, \imax]$, there is a node of size exactly $2^i$ along the dense branch of $\T_1$.
 We deal with each condition individually, and we start with the first one. If the first condition is not already satisfied,  we perform an initial adjustment to the tree $\T_0$ to ensure that, in the resulting tree $\T_0'$ the second node on its dense branch has size $2^{\imax}$, as illustrated in Figure~\ref{fig:Adjustment 1}(a). 
 Specifically, we consider an \emph{arbitrary} partition of $B_1=B_1^1 \cup B_1^2$ such that $|A_1| + |B_1^1| = 2^{i_{\max}}$. We adjust $\T_0$ as follows: we set the two children of $A_0$ in $\T_0'$ as some newly created node $A'_1$ and $B_1^2$, and set the two children of $A_1'$ as $A_1$ and $B_1^1$; the remaining part of $\T_0'$ is the same as $\T_0$. As such, the dense branch of the new tree $\T_0'$   becomes $(A_0, A_1', A_1,\ldots, A_{k_0})$, and $|A_1'|=2^{i_{\max}}$ by construction.

 To obtain the tree satisfying the second condition in the mean time,  we apply a sequence of  adjustments, each of which creates a new node of exact size $2^i$ for some suitable $i$.
Specifically, let  $i\in(\imin,\imax]$  be the largest integer  such that there is no node of size $2^i$ on the dense branch of $\T_0'$.
Since $|A_{k_0}| \geq 2^{\imin}$, there is some  node $A_j$ on the dense branch such that $|A_j| > 2^i$ and $|A_{j+1}| < 2^i$. 
We adjust the branch at $A_j$ as follows: 
(i) we consider a partition of $B_{j+1} = B_{j+1}^1 \cup B_{j+1}^2$ such that $|A_{j+1}| + |B_{j+1}^2| = 2^i$;
(ii) we replace the node $A_j$ by some newly created node $A_j^1$ that has children $B_{j+1}^1$ and a new node $A_j^2$; 
(iii) the two children of $A_j^2$ will be $A_{j+1}$ and $B_{j+1}^2$. This adjustment is illustrated in Figure~\ref{fig:Adjustment 1}(b), and we repeat this process until no such $i$ exists anymore. 

 To ensure that the third condition is satisfied (assuming $|A_{k_0}| > 2^{\imin}$), we adjust the dense branch at $A_{k_0}$ in a similar way as before. Specifically, let $A_{k_0 + 1}, B_{k_0 + 1}$ be the two children of $A_{k_0}$ such that $A_{k_0 + 1}$ is the child of smaller size $|A_{k_0 + 1}| < 2^{\imin}.$ We consider an arbitrary partition of $B_{k_0 + 1} = B_{k_0 + 1}^1 \cup B_{k_0 + 1}^2$ such that $|A_{k_0 + 1}| + |B_{k_0 + 1}^2| = 2^{\imin}$. We replace the node $A_{k_0}$ with a new node $A_{k_0}^1$ that has children $B_{k_0 + 1}^1$ and a new node $A_{k_0}^2$. The two children of $A_{k_0}^2$ are $A_{k_0 + 1}$ and $B_{k_0 + 1}^2$. After this transformation, one of the following two conditions happens: 
(i) the dense branch increases its size by one, having the final node $A_{k_0}^2$ of size $|A_{k_0}^2| = 2^{\imin}$, or 
(ii) $A_{k_0}^1$ is the final node on the dense branch having size $|A_{k_0}^1| = |A_{k_0}| > 2^{\imin}$ and $A_{k_0}^1$ has one child $A_{k_0}^2$ of size $|A_{k_0}^2| = 2^{\imin}$. In both cases, the third condition is satisfied.
 
We call the resulting tree $\T_1$, and the following lemma gives an upper bound of $\cost_G(\T_1)$.

% \begin{wrapfigure}{r}{.5\textwidth}
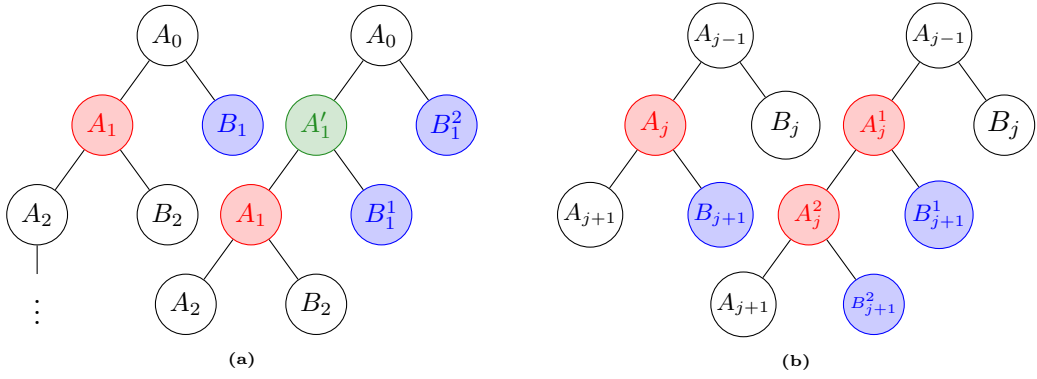
\begin{figure}[ht]
  \centering
  \begin{subfigure}[h]{0.45\textwidth}
    \centering
    \begin{tikzpicture}[scale=0.66, level distance = 1.8cm, sibling distance=2.6cm]
    % \begin{tikzpicture}
      \node [draw, circle,inner sep=1pt, minimum size=0.8cm](first) {\small{$A_{0}$}}
        child { node [draw, circle, color = red,fill=red!20,inner sep=1pt, minimum size=0.8cm] {\small{$A_{1}$}}
            child { node [draw, circle,inner sep=1pt, minimum size=0.8cm] {\small{$A_{2}$}} 
                child{ node { \vdots } }    
            }
            child { node [draw, circle,inner sep=1pt, minimum size=0.8cm] {\small{$B_{2}$}} }
        }
        child { node [draw, circle, color = blue,,fill=blue!20,inner sep=1pt, minimum size=0.8cm] {\small{$B_1$}}};
        \node[draw, circle, right=of first, right = 2cm,inner sep=1pt, minimum size=0.8cm]{ \small{$A_{0}$}}
            child { node [draw, circle, color = forestgreen, fill=forestgreen!20,inner sep=1pt, minimum size=0.8cm] {\small{$A_1'$}} 
                child { node [draw, circle, color = red, fill=red!20,inner sep=1pt, minimum size=0.8cm] {\small{$A_1$} } 
                    child { node [draw, circle,inner sep=1pt, minimum size=0.8cm] {\small{$A_{2}$}} }
                    child { node [draw, circle,inner sep=1pt, minimum size=0.8cm]{\small{$B_2$}} }
                }
                child { node [draw, circle, color = blue,fill=blue!20,inner sep=1pt, minimum size=0.8cm] {\small{$B_{1}^1$}} }
            }         
            child{ node [draw, circle, color = blue,fill=blue!20,inner sep=1pt, minimum size=0.8cm] {\small{$B_1^2$}}}; 
    \end{tikzpicture}
    \caption{}
  \end{subfigure}
  \begin{subfigure}[h]{0.45\textwidth}
  \centering
        \begin{tikzpicture}[nodes ={draw, circle}, scale=0.66, level distance = 1.8cm, sibling distance=2.6cm]
        % \begin{tikzpicture}[nodes = {draw, circle}]
            \node [inner sep=1pt, minimum size=0.8cm] (first) {\footnotesize{$A_{j-1}$}}
            child { node [color = red, fill=red!20,inner sep=1pt, minimum size=0.8cm] {\small{$A_{j }$}} 
                child { node [inner sep=1pt, minimum size=0.8cm] {\footnotesize{$A_{j+1}$}} }
                child { node [color = blue, fill=blue!20,inner sep=1pt, minimum size=0.8cm] {\footnotesize{$B_{j+1}$}} }
            }
            child { node [minimum size=0.8cm] {\small{$B_j$}} };
        
            \node[right=of first, right = 2cm,inner sep=1pt, minimum size=0.8cm]{ \footnotesize{$A_{j-1}$}}
                child { node [color = red, fill=red!20,inner sep=1pt, minimum size=0.8cm] {\footnotesize{$A_j^1$}} 
                    child { node [color = red,fill=red!20,inner sep=1pt, minimum size=0.8cm] {\footnotesize{$A_j^2$} } 
                        child { node [inner sep=1pt, minimum size=0.8cm] {\footnotesize{$A_{j+1}$}} }
                        child { node [color = blue, fill=blue!20,inner sep=1pt, minimum size=0.8cm]{\tiny{$B_{j+1}^2$}} }
                    }
                    child { node [color = blue,fill=blue!20,inner sep=1pt, minimum size=0.8cm] {\footnotesize{$B_{j+1}^1$}} }
                }         
                child{ node [inner sep=1pt, minimum size=8pt, minimum size=0.8cm] {$B_j$}}; 
        \end{tikzpicture} 
        \caption{}
  \end{subfigure}
  \caption{\small{(a) With a proper partition of $B_1$, we have a new node $A_1'$ in $\T_0'$ such that   $|A_1'|=2^{i_{\max}}$; (b)~With a proper partition of $B_{j+1}$, we have a new node $A_j^2$ of size $2^i$.}   \label{fig:Adjustment 1} }

% \end{wrapfigure}
\end{figure}

\begin{lemma}\label{lem:Step 1}
Our constructed tree   $\T_1$ satisfies $
        \cost_G(\T_1) \leq \lp 1 +  4/\Phi_G \rp \cdot \cost_G(\T_0)$.
\end{lemma}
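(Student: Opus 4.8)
\textbf{Proof plan for Lemma~\ref{lem:Step 1}.}
The strategy is to track how the cost changes under each of the three types of adjustment performed in Step~1, and to bound the total increase by the cost of the dense branch of $\T_0$ itself, which Lemma~\ref{lem:Cost light nodes} already lower-bounds $\cost_G(\T_0)$ in terms of. First I would isolate a single elementary operation: replacing a node $A_j$ on the dense branch (with sibling $B_j$ and children $A_{j+1}, B_{j+1}$) by the gadget $A_j^1 \to (B_{j+1}^1, A_j^2)$, $A_j^2 \to (A_{j+1}, B_{j+1}^2)$, where $B_{j+1}=B_{j+1}^1 \cup B_{j+1}^2$. I would argue that this only affects the cost of edges incident to $B_{j+1}$: an edge with an endpoint in $B_{j+1}^1$ has its lowest-common-ancestor size change from $|A_j|$ to at most $|A_{j-1}|$ (the parent of $A_j$ is unchanged), so its cost increases by at most $w_e\cdot(|A_{j-1}|-|A_j|)\le w_e\cdot|A_{j-1}|$; an edge with an endpoint in $B_{j+1}^2$ can only have its LCA size decrease (it now sits below $A_j^2$ whose size is smaller than $|A_j|$), so its cost does not increase; all other edges are untouched. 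Summing over edges incident to $B_{j+1}$ gives an increase of at most $|A_{j-1}|\cdot w(B_{j+1}, V\setminus B_{j+1})$.

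Next I would observe that each adjustment in the construction of $\T_1$ — the initial one producing $A_1'$, the iterated ones producing nodes of size $2^i$, and the final one at $A_{k_0}$ — is (up to relabelling) an instance of this elementary operation applied at a distinct node of the dense branch, and crucially the relevant "sibling being split" is always a set $B_i$ (or a sub-branch hanging off $A_i$), so the quantities appearing are exactly the $|A_{i-1}|\cdot w(B_i, V\setminus B_i)$ terms. I would therefore bound the total cost increase by $\sum_{i=1}^{k_0} |A_{i-1}|\cdot w(B_i, V\setminus B_i)$ (being a little careful that even when several adjustments happen near the same $A_j$ the telescoping of LCA-size changes keeps the total at $\le |A_{j-1}|\cdot w(B_{j+1},V\setminus B_{j+1})$, since the net effect is that each $B_{j+1}^1$-type edge moves up from $|A_j|$ to at most $|A_{j-1}|$). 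Using $w(B_i,V\setminus B_i)\le w(B_i,V\setminus B_i)$ trivially and then the first bound of Lemma~\ref{lem:Cost light nodes}, namely $\cost_G(\T_0)\ge \frac{\Phi_G}{2}\sum_{i=1}^{k_0}|A_{i-1}|\cdot\vol(B_i) \ge \frac{\Phi_G}{2}\sum_{i=1}^{k_0}|A_{i-1}|\cdot w(B_i,V\setminus B_i)$ (since $w(B_i,V\setminus B_i)\le \vol(B_i)$), I get that the total increase is at most $\frac{2}{\Phi_G}\cdot\cost_G(\T_0)$.

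This would give $\cost_G(\T_1)\le \cost_G(\T_0) + \frac{2}{\Phi_G}\cdot\cost_G(\T_0)$, which is slightly stronger than the claimed $(1+4/\Phi_G)$; the factor $4$ rather than $2$ presumably absorbs the fact that the initial adjustment at $A_1$ splits $B_1$ twice (once for condition~1, possibly interacting with condition~2) and the new node $A_1'$ itself may later be moved, or simply provides slack — so I would not fight for the optimal constant and would present the bound with the stated constant $4$. The main obstacle I anticipate is the bookkeeping in the combined argument: verifying that after the first adjustment (which inserts $A_1'$ and changes the dense-branch labels) the subsequent size-$2^i$ adjustments and the final $A_{k_0}$ adjustment still correspond to disjoint elementary operations whose increases sum without double-counting, and confirming that no edge incident to the newly created nodes $B_1^1, B_1^2$ gets charged twice. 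Making the "only edges incident to the split sibling change cost, and they change by a controlled amount" claim fully rigorous across all three adjustment types — ideally by a single lemma about the elementary operation and then a clean induction on the sequence of operations — is where the real work lies.
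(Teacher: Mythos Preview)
Your identification of which edges change cost, and by how much, is incorrect, and this is where the argument breaks.

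Look again at the gadget $A_j^1 \to (B_{j+1}^1, A_j^2)$, $A_j^2 \to (A_{j+1}, B_{j+1}^2)$. The node $A_j^1$ has exactly the same leaf set as $A_j$, so $|A_j^1| = |A_j|$. Consequently:
\begin{itemize}
\item an edge from $B_{j+1}^1$ to a vertex in $A_{j+1}$ has LCA $A_j$ before and $A_j^1$ after --- same size, cost unchanged;
\item an edge from $B_{j+1}^1$ to a vertex outside $A_j$ has LCA at some proper ancestor of $A_j$, which is untouched --- cost unchanged;
\item an edge from $B_{j+1}^2$ to $A_{j+1}$ has its LCA drop from $A_j$ to $A_j^2$ --- cost decreases;
\item an edge in $E(B_{j+1}^1, B_{j+1}^2)$ had LCA inside the old $\T[B_{j+1}]$ (size $\le |B_{j+1}|$) and now has LCA $A_j^1$ of size $|A_j|$ --- cost \emph{increases}.
\end{itemize}
So the only edges whose cost can go up are the \emph{internal} edges of $B_{j+1}$ that cross the split, not the boundary edges $E(B_{j+1}, V\setminus B_{j+1})$ you are summing over; and the new LCA size is $|A_j|$, not $|A_{j-1}|$. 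Your bound $|A_{j-1}|\cdot w(B_{j+1}, V\setminus B_{j+1})$ therefore controls the wrong set of edges and cannot account for the actual increase. The paper's bound for this adjustment is $|A_j|\cdot w(B_{j+1}^1, B_{j+1}^2)$, which it then relaxes to $|A_{i-1}|\cdot \vol(B_i)$ (with $i=j+1$) and sums.

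You also miss why the constant is $4$ rather than $2$. The final adjustment at $A_{k_0}$ splits $B_{k_0+1}$, contributing a term $|A_{k_0}|\cdot \vol(B_{k_0+1})$ that is \emph{not} covered by the first bound of Lemma~\ref{lem:Cost light nodes} (that sum stops at $i=k_0$). One uses $\vol(B_{k_0+1}) \le \vol(A_{k_0})$ together with the \emph{second} bound of Lemma~\ref{lem:Cost light nodes}, namely $\cost_G(\T_0)\ge \tfrac{\Phi_G}{2}|A_{k_0}|\cdot\vol(A_{k_0})$, to control this extra term; adding the two $\tfrac{2}{\Phi_G}\cost_G(\T_0)$ contributions gives the stated $4/\Phi_G$. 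Your speculation about double-splitting $B_1$ or $A_1'$ being moved again is not the reason.
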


\begin{proof}
 Let $F$ be the set of edges whose cost increases due to our  adjustment to $\T_0$, i.e., $ F = \{ e : \cost_{\T_0}(e) \leq \cost_{\T_1}(e) \}$. Then, we know that 
    \begin{align}
        \cost(\T_1) 
        &= \sum_{e\in E} \cost_{\T_1}(e) 
        = \sum_{e \in F} \cost_{\T_1}(e) + \sum_{e \in E \setminus F} \cost_{\T_1}(e) \nonumber \\
        &\leq \sum_{e \in F} \cost_{\T_1}(e) + \sum_{e \in E \setminus F} \cost_{\T_0}(e)
        \leq  \lp \sum_{e \in F} \cost_{\T_1}(e)\rp + \cost(\T_0). \label{eq:step1}
    \end{align}
    Hence, it suffices to show that 
    \[
        \sum_{e \in F} \cost_{\T_1}(e) \leq \frac{4}{\Phi_G} \cdot \cost(\T_0),
    \]
    and we sometimes refer to $\sum_{e \in F} \cost_{\T_1}(e)$ as the \emph{additional cost} of the transformation from $\T_0$ to $\T_1$.
    We look at  the additional cost introduced by one of our adjustments, say at node $A_j$, as illustrated in Figure~\ref{fig:Adjustment 1}(b). 
    It is not difficult to show  that the cost can increase only for edges $e \in E(B_{j+1}^1, B_{j+1}^2)$, for each of which the additional cost is  exactly $\cost_{\T_1}(e) = w_e \cdot |A_{j}^1| = w_e \cdot |A_j|$. Hence,  the total additional cost of the adjustment for $A_j$ is at most $|A_j| \cdot w(B_{j+1}^1, B_{j+1}^2)$, and after all adjustments, including the initial adjustment to transform $\T_0$ in $\T_0'$    and the final adjustment at node $A_{k_0}$, the total additional cost is upper bounded by  
    \begin{align}
        \sum_{e \in F} \cost_{\T_1}(e) 
        &\leq \sum_{i=1}^{k_0 + 1} \sum_{e \in E(B_i, B_i)} w_e \cdot |A_{i-1}| 
         \leq \sum_{i=1}^{k_0 + 1}  |A_{i-1}| \cdot \vol(B_i) \nonumber\\
        &\leq  \lp \sum_{i=1}^{k_0}  |A_{i-1}| \cdot \vol(B_i) \rp   + |A_{k_0}| \cdot \vol(A_{k_0})
        \leq \frac{4}{\Phi_G} \cdot \cost(\T_0),\label{eq:step2}
    \end{align}
    where the last inequality follows by Lemma~\ref{lem:Cost light nodes}. Combining (\ref{eq:step1}) with (\ref{eq:step2}) proves the statement.
\end{proof}

%%%%%%%%%%%%%%%%%%%%%%%%%%%%%%%%%%%%%%%%%%%%%%%%%%%%%%%%%%%%%%%%%%%%%
% Step 2
%%%%%%%%%%%%%%%%%%%%%%%%%%%%%%%%%%%%%%%%%%%%%%%%%%%%%%%%%%%%%%%%%%%%%

\paragraph{Step~2: Compression.}
With potential relabelling of the nodes, let $(A_0, \dots, A_{k_1})$ be the dense branch of $\T_1$, for some $k_1 \in \mathbb{Z}_{+}$ satisfying $|A_{k_1}| \geq 2^{\imin}$, and  $B_i$ be the sibling of $A_i$. The objective of this step is to ensure that, by a sequence of adjustments, all the nodes along the dense branch are of size equal to some power of $2$. In this step we perform two similar types of adjustments: the first type concerns all nodes $A_j$ of size at least $2^{\imin + 1}$, and the second type concerns the nodes $A_j$ of size $2^{\imin} \leq |A_j| < 2^{\imin + 1}$.
We  begin by describing how an adjustment of the first type is performed, and refer the reader to Figure~\ref{fig:Adjustment 2} for illustration.
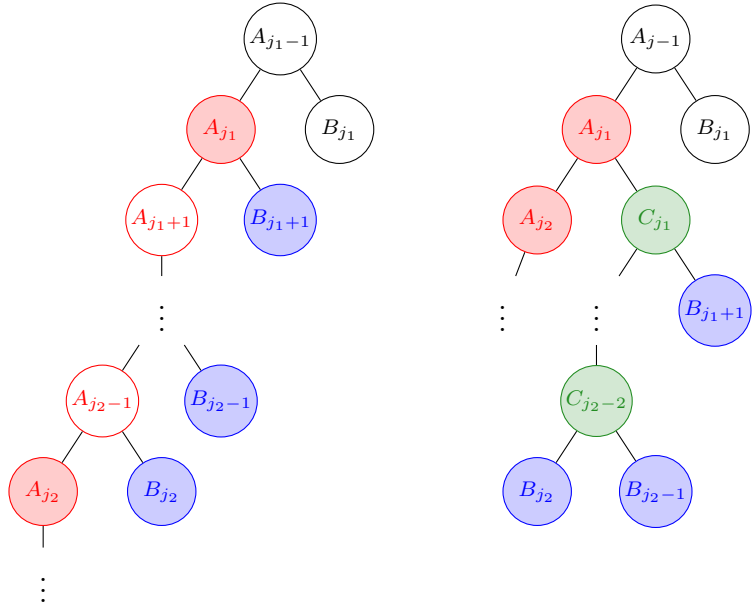
\begin{figure}
    \centering
    \begin{tikzpicture}[scale=0.6, level distance = 2cm, sibling distance=2.6cm,minimum size=0.9cm,inner sep=1pt]
          \node [draw, circle, inner sep=1pt ](first) { \scriptsize{$A_{{j_1}-1}$} }
            child { node [draw, circle, color = red, fill=red!20,inner sep=1pt ] {\scriptsize{$A_{j_1}$}} 
                child { node [draw,  circle, color = red,inner sep=1pt ] {\scriptsize{$A_{j_1+1}$}} 
                    child { node { \vdots } 
                        child { node [draw, circle, color = red,inner sep=1pt] {\scriptsize{$A_{j_2 -1}$}} 
                            child { node [draw, circle, color = red, fill=red!20, inner sep=1pt] {\scriptsize{$A_{j_2}$}}
                                child { node {\vdots} }
                            }
                            child { node [draw, circle, color = blue, fill=blue!20, inner sep=1pt] {\scriptsize{$B_{j_2}$}} }
                        }
                        child { node [draw, circle, color = blue,color = blue, fill=blue!20, inner sep=1pt] {\scriptsize{$B_{j_2 - 1}$}} }
                    }
                }
                child { node [draw, circle, color = blue,color = blue, fill=blue!20, inner sep=1pt] {\scriptsize{$B_{j_1+1}$} }}
            }
            child { node [draw, circle,inner sep=1pt] {\scriptsize{$B_{j_1}$}} };
            
            \node[right=of first, draw, circle, right = 4cm,inner sep=1pt]{ \scriptsize{$A_{j-1}$}}
                child { node [draw, circle, color = red, fill=red!20,inner sep=1pt] {\scriptsize{$A_{j_1}$}} 
                    child { node [draw, circle, color = red, fill=red!20,inner sep=1pt] {\scriptsize{$A_{j_2}$}}
                        child[left = 0.1cm,inner sep=1pt] {node { \vdots}}
                    }
                    child { node [draw, circle, color = forestgreen,fill=forestgreen!20, inner sep=1pt] {\scriptsize{$C_{j_1}$}}
                        % child { node [draw, circle, color = forestgreen, fill=forestgreen!20, inner sep=1pt] {\scriptsize{$C_{j_1 + 1}$}}
                        %     child {node [draw, circle, color = forestgreen, fill=forestgreen!20, inner sep=1pt] {\scriptsize{$C_{j_1 + 2}$}} 
                        child{ node {\vdots}
                            child { node [draw, circle, color = forestgreen,fill=forestgreen!20,inner sep=1pt]{\scriptsize{$C_{j_2 - 2}$}} 
                                child{node [draw, circle, color = blue,color = blue, fill=blue!20, inner sep=1pt] {\scriptsize{$B_{j_2}$ }}}
                                child{node [draw, circle,color = blue, fill=blue!20,   inner sep=1pt] {\scriptsize{$B_{j_2 - 1}$} }}
                            }
                            % }
                        % }
                            % child {node [draw, circle, color = blue, fill=blue!20, inner sep=1pt] {\scriptsize{$B_{j_1 + 2}$}}}
                        }
                        child { node [draw, circle, color = blue,color = blue, fill=blue!20, inner sep=1pt] {\scriptsize{$B_{j_1 +1}$}}}
                    }
                }
                child{ node [draw, circle,inner sep=1pt] 
                {\scriptsize{$B_{j_1}$}}}; 
    \end{tikzpicture}
    \caption{\small{The nodes between $A_{j_1}$ and $A_{j_2}$ (left) have size in $(2^{i-1}, 2^i)$; These nodes are compressed (right) and only the nodes of size some power of $2$ remain in the dense branch.}
    \label{fig:Adjustment 2}}
\end{figure}

Let $i\in(i_{\min} + 1, i_{\max}]$ be some index such that $|A_{j_1}|=2^i$ and  $|A_{j_2}|=2^{i-1}$ for some $j_1<j_2$. 
We compress the dense branch by removing all nodes between $A_{j_1}$ and $A_{j_2}$ as follows:
the two children of $A_{j_1}$ will be $A_{j_2}$ and some new node $C_{j_1}$, which has children $C_{j_1+1}$ and $B_{j_1+1}$; 
the two children of $C_{j_1+1}$ will be some new node $C_{j_1+2}$ and $B_{j_1+2}$, etc.
The last node $C_{j_2-2}$ has children $B_{j_2-1}$ and $B_{j_2}$.
In addition, we perform one more such adjustment to remove all nodes $A_j$ of size $2^{\imax} < |A_j| < n$.

 For the second type of adjustment, we have a case distinction based on the size of the final node $A_{k_1}$. Specifically, if $|A_{k_1}| = 2^{\imin}$, then we perform once more the previous adjustment for the value $i = \imin + 1$. On the other side, if $|A_{k_1}| > 2^{\imin}$, we perform the adjustment as illustrated in Figure~\ref{fig:Adjustment 2_2}: let $A_j$ be the node on the dense branch of size $2^{\imin + 1}$, which exists since $\imin + 1 \leq \imax$. Also, let $A_{k_1 + 1}, B_{k_1 + 1}$ be the two children of $A_{k_1}$ such that $|A_{k_1 + 1}| = 2^{\imin}$. Similar to before, we compress the dense branch by removing all nodes between $A_j$ and $A_{k_1 + 1}$ as follows: the two children of $A_j$ will be $A_{k_1 + 1}$ and a new node $C_j$ which has children $B_{k_1 + 1}$ and a new node $C_{j+1}$; the two children of $C_{j+1}$ are $B_{j + 1}$ and a new node $C_{j+2}$, etc. The last node $C_{k_1 - 1}$ has children $B_{k_1 - 1}$ and $B_{k_1}$. Notice that, after this adjustment, the final node on the dense branch is either $C_j$ if $\vol(C_j) > \vol(G)/2$, or $A_j$ otherwise. Since $|C_j| = 2^{\imin}$ and $|A_j| = 2^{\imin + 1}$, this ensures that all nodes (potentially except for $A_0$) on the dense branch have size $2^i$ for some $i \in [\imin, i_{\max}]$. We call the resulting tree $\T_2$, and the following lemma gives an upper bound of $\cost_G(\T_2)$.
\begin{figure}
    \centering
    \begin{tikzpicture}[scale=0.6, minimum size = 0.9cm, level distance = 2cm, sibling distance=2.6cm,inner sep=1pt]
          \node [draw, circle, inner sep=1pt ](first) { \scriptsize{$A_{j - 1}$} }
            child { node [draw, circle, color = red, fill=red!20, inner sep=1pt ] {\scriptsize{$A_{j}$}} 
                child { node [draw,  circle, color = red,inner sep=1pt ] {\scriptsize{$A_{j+1}$}} 
                    child { node { \vdots } 
                        child { node [draw, circle, color = red, inner sep=1pt] {\scriptsize{$A_{k_1}$}} 
                            child { node [draw, circle, color = red, fill=red!20, inner sep=1pt] {\scriptsize{$A_{k_1 + 1}$}}
                                child { node {\vdots} }
                            }
                            child { node [draw, circle, color = blue, fill=blue!20, inner sep=1pt] {\scriptsize{$B_{k_1 + 1}$}} }
                        }
                        child { node [draw, circle, color = blue,color = blue, fill=blue!20, inner sep=1pt] {\scriptsize{$B_{k_1}$}} }
                    }
                }
                child { node [draw, circle, color = blue,color = blue, fill=blue!20, inner sep=1pt] {\scriptsize{$B_{j+1}$} }}
            }
            child { node [draw, circle, inner sep=1pt] {\scriptsize{$B_{j}$}} };
            
            \node[right=of first, draw, circle, right = 4cm,inner sep=1pt]{ \scriptsize{$A_{j - 1}$}}
                child { node [draw, circle, color = red, fill=red!20,inner sep=1pt] {\scriptsize{$A_{j}$}} 
                    child { node [draw, circle, color = red, fill=red!20,inner sep=1pt] {\scriptsize{$A_{k_1+1}$}}
                        child[left = 0.1cm,inner sep=1pt] {node { \vdots}}
                    }
                    child { node [draw, circle, color = forestgreen,fill=forestgreen!20, inner sep=1pt] {\scriptsize{$C_{j}$}}
                        child { node [draw, circle, color = forestgreen, fill=forestgreen!20, inner sep=1pt]{\scriptsize{$C_{j + 1}$}}
                            child{ node {\vdots}
                                child { node [draw, circle, color = forestgreen,fill=forestgreen!20,inner sep=1pt]{\scriptsize{$C_{k_1 - 1}$}} 
                                    child{node [draw, circle, color = blue,color = blue, fill=blue!20, inner sep=1pt] {\scriptsize{$B_{k_1}$ }}}
                                    child{node [draw, circle,color = blue, fill=blue!20,   inner sep=1pt] {\scriptsize{$B_{k_1 - 1}$} }}
                                }
                            }
                            child { node [draw, circle, color = blue,color = blue, fill=blue!20, inner sep=1pt] {\scriptsize{$B_{j + 1}$}}}
                        }
                        child { node [draw, circle, color = blue,color = blue, fill=blue!20, inner sep=1pt] {\scriptsize{$B_{k_1 +1}$}}}
                    }
                }
                child{ node [draw, circle,inner sep=1pt] 
                {\scriptsize{$B_{j}$}}}; 
    \end{tikzpicture}
    \caption{\small{The nodes between $A_{j}$ and $A_{k_1 + 1}$ (left) have size in $(2^{\imin}, 2^{\imin + 1})$;  These nodes are compressed (right) and only the nodes of size some power of $2$ remain in the dense branch. }
    \label{fig:Adjustment 2_2}}
\end{figure}

\begin{lemma}\label{lem:Step 2}
    Our constructed tree $\T_2$ satisfies that  $\cost_G(\T_2) \leq 2 \cdot \cost_G(\T_1)$. 
\end{lemma}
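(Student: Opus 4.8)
The plan is to bound the \emph{additional cost} introduced by each compression adjustment of Step~2, exactly as in the proof of Lemma~\ref{lem:Step 1}, and then argue that the total additional cost over all adjustments is at most $\cost_G(\T_1)$. Concretely, I would first identify which edges can have their cost increased by a single adjustment of the first type (the one compressing the segment of the dense branch between $A_{j_1}$ with $|A_{j_1}| = 2^i$ and $A_{j_2}$ with $|A_{j_2}| = 2^{i-1}$, as in Figure~\ref{fig:Adjustment 2}). In the new tree, the node $A_{j_2}$ is pulled directly under $A_{j_1}$, and the old siblings $B_{j_1+1}, \dots, B_{j_2}$ are arranged into a caterpillar $C_{j_1}, C_{j_1+1}, \dots$ hanging off $A_{j_1}$. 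For an edge $e$ incident to some $B_\ell$ with $j_1 < \ell \le j_2$, its lowest common ancestor can only move \emph{up} as far as $A_{j_1}$, so $\cost_{\T_2}(e) \le w_e \cdot |A_{j_1}| = w_e \cdot 2^i$. Edges entirely inside some $B_\ell$, or incident to $A_{j_2}$ and its descendants, or to $B_{j_1}$ and outside, do not increase in cost. Hence the total additional cost of this single adjustment is at most $\sum_{\ell = j_1 + 1}^{j_2} w(B_\ell, V \setminus B_\ell) \cdot |A_{j_1}| \le 2 \sum_{\ell=j_1+1}^{j_2} |A_{\ell - 1}| \cdot \vol(B_\ell)$, using $|A_{j_1}| = 2^i \le 2 \cdot 2^{i-1} \le 2|A_{\ell-1}|$ for every $\ell$ in that range (since the sizes along the branch are decreasing and $|A_{j_2}| = 2^{i-1}$, so $|A_{\ell-1}| \ge 2^{i-1}$).

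Next I would observe that the segments compressed by the various first-type adjustments (one per consecutive pair of powers of two $2^i, 2^{i-1}$ present on the branch, plus the one removing nodes of size in $(2^{\imax}, n)$, plus the second-type adjustment near $A_{k_1}$) are \emph{edge-disjoint} in the sense that each original sibling node $B_\ell$ on the dense branch of $\T_1$ belongs to at most one such segment. Therefore summing the per-adjustment bounds above over all adjustments gives a total additional cost of at most $2 \sum_{i \ge 1} |A_{i-1}| \cdot \vol(B_i)$, where the sum ranges over all siblings on the dense branch of $\T_1$. For the second-type adjustment near $A_{k_1}$ (Figure~\ref{fig:Adjustment 2_2}), the same reasoning applies with $A_j$ of size $2^{\imin+1}$ playing the role of $A_{j_1}$, so the additional cost there is at most $2\sum |A_{\ell-1}| \cdot \vol(B_\ell)$ over the relevant segment as well; the case $|A_{k_1}| = 2^{\imin}$ is subsumed by a first-type adjustment with $i = \imin + 1$.

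Finally, I would invoke the first bound of Lemma~\ref{lem:Cost light nodes} applied to $\T_1$, which says $\sum_{i=1}^{k_1} |A_{i-1}| \cdot \vol(B_i) \le \frac{2}{\Phi_G} \cost_G(\T_1)$ — wait, more carefully: I should instead avoid the $\Phi_G$ factor and argue directly. Since each adjustment's additional cost is bounded by $2\sum_\ell |A_{\ell-1}|\cdot \vol(B_\ell)$ over a disjoint set of indices, and since $\sum_\ell |A_{\ell-1}| \cdot \vol(B_\ell) \le \sum_\ell \cost_{\T_1}(\text{edges incident to } B_\ell) \le \cost_G(\T_1)$ because every edge incident to $B_\ell$ has $\cost_{\T_1}(e) \ge w_e |A_{\ell-1}|$ and these edge-sets are counted with bounded multiplicity, we get total additional cost $\le \cost_G(\T_1)$, hence $\cost_G(\T_2) \le \cost_G(\T_1) + \cost_G(\T_1) = 2\cost_G(\T_1)$. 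The main obstacle I anticipate is the bookkeeping in this last step: making precise that the quantity $\sum_\ell |A_{\ell-1}|\cdot\vol(B_\ell)$ is genuinely bounded by $\cost_G(\T_1)$ (not just by $\frac{2}{\Phi_G}\cost_G(\T_1)$), which requires observing that the edges incident to distinct siblings $B_\ell$ along the dense branch, when charged at rate $w_e |A_{\ell-1}|$, can be mapped injectively into the cost terms $\cost_{\T_1}(e)$ — an edge leaving $B_\ell$ either goes to $A_\ell \supseteq B_{\ell+1} \cup \cdots$ (so its LCA in $\T_1$ is $A_{\ell-1}$ or higher, charged once) or to $B_{\ell'}$ for $\ell' < \ell$ (charged to $B_{\ell'}$, the shallower one) — giving the clean factor of $2$ in the lemma rather than a $\Phi_G$-dependent constant.
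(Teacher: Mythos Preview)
Your approach via aggregate ``additional cost'' has a genuine gap, and the paper's proof takes a different, much simpler route.

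\textbf{The gap.} Your key inequality $\sum_\ell |A_{\ell-1}|\cdot\vol(B_\ell)\le\cost_G(\T_1)$ is false: $\vol(B_\ell)$ counts edges \emph{inside} $B_\ell$, and for such an edge $e$ the LCA lies strictly inside $\T_1[B_\ell]$, so $\cost_{\T_1}(e)\le w_e|B_\ell|<w_e|A_{\ell-1}|$. The charging you sketch at the end (``every edge incident to $B_\ell$ has $\cost_{\T_1}(e)\ge w_e|A_{\ell-1}|$'') therefore fails for exactly the edges that make $\vol(B_\ell)$ larger than $w(B_\ell,V\setminus B_\ell)$. Even if you tighten to $w(B_\ell,V\setminus B_\ell)$ and charge each crossing edge to its shallower endpoint, you still carry the earlier factor~$2$ from $|A_{j_1}|\le 2|A_{\ell-1}|$, and your decomposition $\cost(\T_2)\le\cost(\T_1)+\sum_{e\in F}\cost_{\T_2}(e)$ then yields $\cost(\T_2)\le 3\cost(\T_1)$, not~$2$. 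The constant in the lemma statement is not recoverable this way.

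\textbf{The paper's argument.} The paper does not track additional cost at all; it proves the stronger per-edge statement that $\cost_{\T_2}(e)\le 2\,\cost_{\T_1}(e)$ for \emph{every} edge $e$. Partition $V$ via $\mathcal S=\{B_1,\dots,B_{k_1+1},A_{k_1+1}\}$. Edges inside any $X\in\mathcal S$ keep their cost exactly, since the subtree $\T_1[X]$ is untouched. For an edge $e\in E(X,Y)$ with $|\parent_{\T_1}(X)|\ge|\parent_{\T_1}(Y)|$, the LCA in $\T_1$ is $\parent_{\T_1}(X)$, of size lying in some interval $(2^{i-1},2^i]$. After compression, the first dense-branch ancestor of $X$ in $\T_2$ has size exactly $2^i$, so $\cost_{\T_2}(e)\le w_e\cdot 2^i\le 2w_e\cdot|\parent_{\T_1}(X)|=2\,\cost_{\T_1}(e)$. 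Summing over all edges gives the factor~$2$ directly. The whole point of Step~1 (regularisation) was to guarantee that every interval $(2^{i-1},2^i]$ contains a dense-branch node of size $2^i$, which is precisely what makes this per-edge doubling bound go through; your aggregate approach does not exploit this structure.
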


\begin{proof}
We claim that the cost of every edge   at most doubles in $\T_2$, and this implies  that  $\cost(\T_2) \leq 2\cdot \cost(\T_1)$. To see why our claim holds, let us consider the set $$\mathcal{S} = \{B_1, B_2, \dots, B_{k_1}, B_{k_1 + 1}, A_{k_1 + 1}\},$$ 
    which consists of  the sibling of every node on the dense branch of $\T_1$, along with the children of $A_{k_1}$. This forms a partition of the vertex set. Now consider any two different internal nodes $X, Y \in \mathcal{S}$ such that $|\parent_{\T_1}(X)| \geq |\parent_{\T_1}(Y)|$, and an arbitrary edge $e \in E$. If $e \in E(X, X)$, then by construction we have  $\cost_{\T_1}(e) = \cost_{\T_2}(e)$. If $e \in E(X, Y)$, then $\cost_{\T_1}(e) = w_e \cdot |\parent_{\T_1}(X)|$. On the other hand, we have that $\cost_{\T_2}(e) \leq w_e \cdot |A_X|$, where $A_X$ is the first node on the path from $X$ to the root that appears on the dense branch of $\T_2$. By construction, we know that, if $2^{i -1} < |\parent_{\T_1}(X)| \leq 2^i$ for some $i$, then $|A_X| = 2^i \leq 2 \cdot \left| \parent_{\T_1}(X)\right|$. Therefore, we have that $\cost_{\T_2}(e) \leq 2 \cdot \cost_{\T_1}(e)$.
\end{proof}

%%%%%%%%%%%%%%%%%%%%%%%%%%%%%%%%%%%%%%%%%%%%%%%%%%%%%%%%%%%%%%%%%%%%%
% Step 3
%%%%%%%%%%%%%%%%%%%%%%%%%%%%%%%%%%%%%%%%%%%%%%%%%%%%%%%%%%%%%%%%%%%%%
\paragraph{Step~3: Matching.}
Let $(A_0, \dots, A_{k_2})$ be the dense branch of $\T_2$, for some $k_2\in\mathbb{Z}_+$, and  $B_i$ be the sibling of $A_i$. In this step we transform $\T_2$ into $\T_3$, such that $\T_3$ is isomorphic to $\T_{\deg}(G)$, which ensures that $\T_3$ and $\T_{\deg}(G)$ have the same structure.
To achieve this,  for every $1 \leq i \leq k_2$ we simply replace each $\T_2[B_i]$ with $\T_{\deg}\left(G\{B_i\}\right)$. We further replace $\T_2[A_{k_2}]$ with $\T_{\deg}\left(G\{A_{k_2}\}\right)$. We call the resulting tree $\T_3$, and  bound its cost by the following lemma:

\begin{lemma}\label{lem:Step 3}
Our constructed tree $\T_3$ satisfies that 
$
        \cost(\T_3) \leq \lp 1 + 4/\Phi_G \rp \cost(\T_2).$
\end{lemma}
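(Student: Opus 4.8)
The plan is to track exactly which edges change their cost in passing from $\T_2$ to $\T_3$, bound the \emph{new} cost of those edges by the trivial bound (Lemma~\ref{lem:Trivial bound}) applied to the subgraphs $G\{B_i\}$ and $G\{A_{k_2}\}$, and then absorb the resulting quantities into $\cost_G(\T_2)$ using the two dense-branch lower bounds of Lemma~\ref{lem:Cost light nodes}.

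First I would observe that $B_1,\dots,B_{k_2}$ together with $A_{k_2}$ form a partition of $V$, and that the transformation replaces only the subtrees $\T_2[B_i]$ and $\T_2[A_{k_2}]$ hanging off the dense branch, leaving the skeleton formed by the nodes $A_0,\dots,A_{k_2-1}$ untouched. Hence, for an edge $e=\{u,v\}$ whose endpoints lie in two \emph{different} blocks, the lowest common ancestor $u\vee v$ is a dense-branch node, so $\cost_{\T_3}(e)=\cost_{\T_2}(e)$. The cost can therefore change only for edges internal to a single block, and for such $e\in E(B_i,B_i)$ one has $\cost_{\T_3}(e)=\cost_{\T_{\deg}(G\{B_i\})}(e)$ since the relevant LCA and its leaf-set stay inside the replaced subtree; the same holds for $e\in E(A_{k_2},A_{k_2})$.

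Since the old block costs are nonnegative, this gives
\[
    \cost_G(\T_3)-\cost_G(\T_2)\;\le\;\sum_{i=1}^{k_2}\ \sum_{e\in E(B_i,B_i)}\cost_{\T_{\deg}(G\{B_i\})}(e)\;+\;\sum_{e\in E(A_{k_2},A_{k_2})}\cost_{\T_{\deg}(G\{A_{k_2}\})}(e).
\]
For each block I would invoke Lemma~\ref{lem:Trivial bound} on the graph $G\{B_i\}$, whose volume equals $\vol_G(B_i)$ by construction (the added self-loops only contribute a tree-independent amount and so do not affect the genuine edges), obtaining $\sum_{e\in E(B_i,B_i)}\cost_{\T_{\deg}(G\{B_i\})}(e)\le |B_i|\cdot\vol_G(B_i)/2$, and similarly $\sum_{e\in E(A_{k_2},A_{k_2})}\cost_{\T_{\deg}(G\{A_{k_2}\})}(e)\le |A_{k_2}|\cdot\vol_G(A_{k_2})/2$. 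Then, using $|B_i|\le|A_{i-1}|$ (as $B_i$ is the sibling of $A_i$ under $A_{i-1}$) together with part~1 of Lemma~\ref{lem:Cost light nodes}, the first sum is at most $\frac12\sum_{i=1}^{k_2}|A_{i-1}|\cdot\vol_G(B_i)\le \cost_G(\T_2)/\Phi_G$, while part~2 of Lemma~\ref{lem:Cost light nodes} gives $|A_{k_2}|\cdot\vol_G(A_{k_2})/2\le \cost_G(\T_2)/\Phi_G$. Adding these (with a little slack in the constant) yields $\cost_G(\T_3)\le(1+4/\Phi_G)\cost_G(\T_2)$.

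The computation itself is routine; the one point that needs care is the bookkeeping of the second paragraph — verifying that replacing the hanging subtrees leaves the cost of \emph{every} cross-block edge exactly unchanged, so that the additional cost decomposes cleanly into per-block terms to which the trivial bound applies. The only minor subtlety is handling the self-loops of $G\{B_i\}$, which is settled by noting that they contribute a tree-independent amount to the Dasgupta cost and that $\vol(G\{B_i\})=\vol_G(B_i)$, so the trivial bound transfers verbatim.
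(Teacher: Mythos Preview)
Your proposal is correct and follows essentially the same approach as the paper: both arguments observe that cross-block edges (between distinct elements of $\{B_1,\dots,B_{k_2},A_{k_2}\}$) keep their cost unchanged, bound the within-block edge costs in $\T_3$ by the trivial estimate $|X|\cdot\vol(X)$, and then invoke the two parts of Lemma~\ref{lem:Cost light nodes} (using $|B_i|\le|A_{i-1}|$) to absorb these into $\frac{4}{\Phi_G}\cost(\T_2)$. Your version is in fact slightly tighter because you keep the factor $1/2$ from Lemma~\ref{lem:Trivial bound}, yielding $(1+2/\Phi_G)$ before relaxing to the stated constant.
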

\begin{proof}
Let $\mathcal{S} = \{ B_1, \dots, B_{k_2}, A_{k_2}\}$ be the set of internal nodes of $\T_2$  that undergo the above transformation. By construction, for every \emph{distinct} $X, Y \in \mathcal{S}$ and any  edge  $e \in E(X, Y)$, it holds that $\cost_{\T_2}(e) = \cost_{\T_3}(e)$. Therefore, the additional cost will be introduced only by  edges $e \in E(X,X)$, for some $X \in \mathcal{S}$. However, by
Lemma~\ref{lem:Trivial bound}
  we have that 
\[
    \sum_{X \in \mathcal{S}} \sum_{e \in E(X,X)} \cost_{\T_3}(e) 
    \leq \sum_{X \in \mathcal{S}} |X| \cdot \vol(X)
    =  \lp \sum_{i = 1}^{k_2} |B_i| \cdot \vol(B_i) \rp   + \left|A_{k_2}\right|\cdot  \vol(A_{k_2})
    \leq \frac{4}{\Phi_G} \cdot \cost(\T_2),
\]
where the last inequality follows by  Lemma~\ref{lem:Cost light nodes}. Combining the two cases together proves the  statement.
\end{proof}

%%%%%%%%%%%%%%%%%%%%%%%%%%%%%%%%%%%%%%%%%%%%%%%%%%%%%%%%%%%%%%%%%%%%%
% Step 4
%%%%%%%%%%%%%%%%%%%%%%%%%%%%%%%%%%%%%%%%%%%%%%%%%%%%%%%%%%%%%%%%%%%%%
\paragraph{Step~4: Sorting.}
We assume that $(A_0, \ldots, A_{k_3})$ is the dense branch of $\T_3$ for some $k_3\in\mathbb{Z}_+$, and we extend the dense branch to 
% \orange{$(A_0,\ldots, A_{k_3}, A_{k_3+1},\ldots, A_{i_{\max}})$}
 $(A_0,\ldots, A_{i_{\max}})$
% \orange{such that if holds}
 with the   property that, for every $i\in[k_3,i_{\max}]$,  $A_i$ is the child of $A_{i-1}$ with the higher volume, and let 
$\mathcal{S}\triangleq \{B_1,\ldots, B_{i_{\max}}, A_{i_{\max}}\}$.  
% \orange{By the construction of $T_3$,  we know that $|A_i| = |A_{i-1}|/2, i\in[k_3,i_{\max}]$.}
% \orange{, and  the two children of every $A_i$ are of the same size}. 
% \orange{Also, it's important to notice that $T_3[A_{k_3}] =\T_{\deg}\left(G \{A_{k_3}\}\right)$.}   
% \orange{Recall that the vertices are labelled $v_1, \dots, v_n$ such that $d_{v_1} \geq \dots \geq d_{v_n}$.} 
Recall that in $\T_{\deg}(G)$ the  first $r = 2^{i_{\max}}$ vertices of the highest degrees, i.e., $\{v_1, \dots, v_{r}\}$ belong to $A_1$, of which the first $2^{i_{\max}-1}$ vertices 
% \orange{of the highest degrees} 
belong to $A_2$, and so on; however, this might not be the case for $\T_3$. 
Hence, in the final step, we prove that $\T_3$ can be transformed into $\T_{\deg}(G)$ without a significant increase of the total cost. 
% \orange{The goal of this step is to }
 In this step we will swap vertices between  the internal nodes $B_1, B_2, \dots, B_{i_{\max}}, A_{i_{\max}}$ in such a way that $A_{i_{\max}}$ will consist of $v_1$ and  $v_2$, $B_{i_{\max}}$ will consist of $v_3$ and  $v_4$, $B_{\imax - 1}$ will consist of $v_5$ up to $v_8$, etc.
We call  vertex $u$ \emph{misplaced} if the position of $u$ is $\T_3$ is different from the one in $\T_{\deg}(G)$. 
To transform $\T_3$ into $\T_{\deg}(G)$ we perform  a sequence of operations, each of which  consists in a chain of swaps focusing on the vertex of the \emph{highest} degree that is currently misplaced.
For the sake of argument, we assume that $v_1\not\in A_{i_{\max}}$ is misplaced, and we apply the following operation to move $v_1$ to $A_{i_{\max}}$: 
\begin{enumerate}
\item   Let $v_1\in B_{i_0}$ for some $i_0\geq 1$, and let $y$ be the vertex of the lowest degree among the vertices in $\mathcal{S}\setminus\{ B_1,\ldots, B_{i_0}\}$. 
% \orange{We assume that}
 Say $y\in B_{i_1}$ for some $i_1> i_0$, and swap $v_1$ with $y$ while keeping the structure of the tree unchanged.
\item Repeat the swap operation above until $v_1$ reaches its correct place in $A_{i_{\max}}$.  
\end{enumerate}
Once the above process is complete and $v_1$   reaches $A_{\imax}$, we apply a similar  chain of swaps for $v_2$ to ensure $v_2$ also reaches $A_{\imax}$. Then, we sequentially apply the process for $v_3$ and $v_4$ to ensure they reach $B_{\imax}$, and continue this process until there are no more misplaced vertices.

We call the resulting tree $\T_3^{'}$, and notice that every node $X\in \mathcal{S}$ in  $\T_3^{'}$ contains the correct set of vertices. However, the positions of these vertices in $\T_3^{'}[X]$ might be different from the ones in $\T_{\deg}(G\{X\})$. To overcome this issue, we repeat Step~3 again to the tree $\T_3^{'}$, and this will introduce another factor of $(1+ 4/\Phi_G)$   to the total cost of the constructed tree. Importantly, the final constructed tree after this step is exactly the tree $\T_{\deg}(G)$ and we bound its cost by the following lemma:  

\begin{lemma}\label{lem:Step 4}
It holds for     $\T_{\deg} $  that  $
        \cost(\T_{\deg}) \leq \lp 1 + 24/\Phi_G \rp \lp 1 + 4/\Phi_G\rp\cdot  \cost(\T_3)$.
\end{lemma}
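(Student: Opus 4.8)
The plan is to decompose the transformation from $\T_3$ to $\T_{\deg}(G)$ into the two conceptual pieces described before the lemma: first a sequence of vertex swaps that produces the intermediate tree $\T_3'$ (in which every node $X \in \mathcal{S}$ holds the correct vertex set, but possibly with the wrong internal arrangement), and then a final re-application of the Step~3 construction that replaces each $\T_3'[X]$ with $\T_{\deg}(G\{X\})$. The second piece costs at most a factor $(1 + 4/\Phi_G)$ by Lemma~\ref{lem:Step 3} applied to $\T_3'$ (the hypotheses of that lemma only concern the structure of the dense branch, which the swaps preserve). So the bulk of the work is to show that the swapping phase satisfies $\cost_G(\T_3') \leq (1 + 24/\Phi_G)\cdot\cost_G(\T_3)$.

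For the swapping phase, I would track the additional cost introduced by a single swap. Consider one elementary swap of a high-degree vertex $v \in B_{i_0}$ with a low-degree vertex $y \in B_{i_1}$, $i_1 > i_0$, where $y$ is chosen to have the \emph{lowest} degree among vertices in the later nodes $\mathcal{S} \setminus \{B_1,\dots,B_{i_0}\}$. An edge $e$ can only change cost if it is incident to $v$ or to $y$, and since the tree structure is kept fixed, $\cost_{\T}(e)$ for such an edge is at most $w_e$ times the size of the largest node on the dense branch — which, crucially, is bounded by the size of $\parent_{\T_3}(B_{i_0})$ on the relevant side, i.e. at most $|A_{i_0 - 1}| \le 2|A_{i_0}|$ since all dense-branch nodes have power-of-two size after Step~2. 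Summing over the edges incident to $v$ contributes at most $d_v \cdot |A_{i_0-1}|$, and likewise for $y$. The key accounting insight is that, across the entire chain of swaps used to route \emph{one} misplaced vertex and across all misplaced vertices, each node $B_{i_0}$ of $\mathcal{S}$ is the source of swaps only for the vertices that ultimately leave it, and the total degree mass leaving $B_{i_0}$ is at most $\vol_G(B_{i_0})$; so the aggregate additional cost is bounded by a constant times $\sum_{i} |A_{i-1}| \cdot \vol(B_i) + |A_{k_3}|\cdot\vol(A_{k_3})$ (with the extended-branch nodes $B_i$, $i > k_3$, handled the same way since they too have volume accounted against $\vol(G)$), and this is $O(1/\Phi_G)\cdot\cost_G(\T_3)$ by both parts of Lemma~\ref{lem:Cost light nodes} applied to $\T_3$. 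Getting the constant down to $24$ requires being a little careful: a factor $2$ from the power-of-two rounding, a factor $2$ from symmetrising the $E(X,V\setminus X)$ sum (edges counted from both endpoints, as in the proof of Lemma~\ref{lem:Cost light nodes}), a factor $2$ from combining both bounds of that lemma, and a factor absorbing the ``$y$ receives mass too'' symmetry — these compose to $24/\Phi_G$, and then the extra $(1+4/\Phi_G)$ from the final Step~3 re-application gives the stated product $(1 + 24/\Phi_G)(1 + 4/\Phi_G)\cdot\cost(\T_3)$.

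The step I expect to be the main obstacle is the amortised bookkeeping for the chain of swaps: a naive per-swap bound would charge $\vol(B_{i_0})$ once for \emph{each} misplaced vertex passing through, giving an unbounded total, so one must argue that the total degree of vertices that are ever the ``high-degree end'' of a swap originating at a fixed node $B_{i_0}$ is $O(\vol(B_{i_0}))$, and symmetrically that the ``low-degree end'' role at each node is also charged only $O(\vol)$ total. The cleanest way I see to make this rigorous is to observe that the operations are performed in decreasing order of degree and each vertex is moved to its final position exactly once, so the set of (source node, vertex) pairs that incur cost is controlled by a single pass over the vertex set; the choice of $y$ as the globally lowest-degree vertex in the admissible suffix is precisely what guarantees that swapping never \emph{increases} the misplacement of any already-correct vertex, which is what lets the process terminate after $O(n)$ swaps and keeps the charging scheme consistent. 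Once that structural claim is in hand, the cost bound falls out of Lemma~\ref{lem:Cost light nodes} exactly as above.
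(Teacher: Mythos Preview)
Your two-phase plan (swap to reach $\T_3'$, then re-apply Step~3) matches the paper exactly, and you correctly identify Lemma~\ref{lem:Cost light nodes} as the tool that converts the aggregate swap cost into $O(1/\Phi_G)\cdot\cost(\T_3)$. Where your sketch diverges from the paper is in how the per-swap and per-chain bookkeeping is simplified, and this is also why your decomposition of the constant $24$ does not add up.

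The paper avoids the symmetric ``$y$ receives mass too'' accounting entirely by exploiting two facts you mention but do not use. First, because vertices are processed in decreasing degree order and each $y$ is chosen as the \emph{lowest}-degree vertex in the admissible suffix, one has $d_y \le d_v$; hence the additional cost of a single swap is at most $(d_v + d_y)\,|A_{i_0-1}| \le 2\,d_v\,|A_{i_0-1}|$, charged entirely to the moving vertex $v$. Second, the full chain for $v$ visits nodes at positions $i_0 < i_1 < \cdots$, and since the dense-branch sizes after Step~2 are powers of two (with only $|A_0|=n$ possibly not), the geometric sum $\sum_j |A_{i_j-1}| \le 3\,|A_{i_0-1}|$ gives a clean factor $3$, not the ``factor $2$ from power-of-two rounding'' you propose. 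Together these give $6\,d_v\,|A_{i_0-1}|$ for the whole chain of one vertex, and summing over the misplaced vertex in each $B_j$ yields $6\sum_{i=1}^{i_{\max}} |A_{i-1}|\,\vol(B_i)$. Because this sum runs over the \emph{extended} branch (past $A_{k_3}$), one needs both parts of Lemma~\ref{lem:Cost light nodes}: part~1 bounds the terms with $i\le k_3$ by $(2/\Phi_G)\cost(\T_3)$, and part~2 bounds $\sum_{i>k_3}|A_{i-1}|\vol(B_i)\le |A_{k_3}|\vol(A_{k_3})\le (2/\Phi_G)\cost(\T_3)$. So the constant is $2\times 3\times(2+2)=24$, after which the $(1+4/\Phi_G)$ factor from re-applying Step~3 completes the bound. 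Your amortisation worry (``a naive per-swap bound would charge $\vol(B_{i_0})$ once for each misplaced vertex passing through'') disappears once the $d_y\le d_v$ trick lets you charge each chain to its single moving vertex rather than to the intermediate nodes.
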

\begin{proof}
   We first prove that
    \[
        \cost\left(\T'_{3}\right) \leq \lp 1 + \frac{24}{\Phi_G}\rp \cdot \cost(\T_3).
    \]
    For the sake of analysis, we assume that $v_1\in B_{i_0}$ is misplaced, and let $(y_1,\ldots, y_t)$, for some $t\in\mathbb{Z}^+$, be the sequence of vertices with which $v_1$ performs the swap operations in order to reach $A_{i_{\max}}$. We first upper bound the additional cost that the swap between $y_1$ and $v_1$ introduces. Without loss of generality, we assume that $y_1\in B_{i_1}$ such that $|\parent_{\T_3}(B_{i_0})| > |\parent_{\T_3}(B_{i_1})|$. Let $e\in E(G)$ be any edge. Our analysis is based on the following case distinction:
    \begin{itemize}
    \item If $e \cap \{y_1,v_1\}=\emptyset$, then the swap between $y_1$ and $v_1$ would not change the cost of $e$;
    \item If $e=\{y_1, v_1\}$, then the swap between $y_1$ and $v_1$ would not change the cost of $e$  either;
    \item If $e$ is adjacent to $y_1$ or $v_1$, the cost of $e$ would increase by at most $$w_e\cdot~\left|\parent_{\T_3}(B_{i_0})\right|~=~w_e\cdot \left| A_{i_0-1} \right|.$$ 
    \end{itemize}
    Hence, the total additional cost of the swap between $y_1$ and $v_1$ is at most 
    \[
        \left|A_{i_0 -1}\right|\cdot ( d_{v_1} + d_{y_1}) \leq 2\cdot |A_{i_0 - 1}|\cdot d_{v_1}, 
    \] 
    as $v_1$ is the misplaced vertex of largest degree.

  We   apply the same analysis for the swap between $v_1$ and $y_2\in B_{i_2}$ for some $i_2\geq i_1$, implying that the additional cost by the swap between $y_2$ and $v_1$ is at most $2\cdot |A_{i_1 - 1}|\cdot d_{v_1}$, and so on. Therefore, the total additional cost introduced in order for $v_1$ to reach $A_{i_{\max}}$ is at most
    \[
    2\cdot d_{v_1}\cdot \left( |A_{i_0-1}| + \ldots + |A_{i_{t-1}-1}| \right) \leq 6\cdot d_{v_1}\cdot  |A_{i_0-1}|.
    \] 
    Since we only need to consider all the  misplaced vertices in some $B_j$, the total additional cost introduced over all sequences of swaps is at most
    \[
    6\cdot \sum_{i=1}^{i_{\max}}\vol(B_i) |A_{i-1}| \leq \frac{24}{\Phi_G}\cdot \cost(\T_3),
    \]
    where the last inequality follows by Lemma~\ref{lem:Cost light nodes}. In summary, we have that
    \[
    \cost\left(\T_3^{'}\right) \leq \left(1+ \frac{24}{\Phi_G}\right) \cost(\T_3).
    \]
    Since we use Step~3 in the end, we apply Lemma~\ref{lem:Step 3} once more and obtain   the claimed statement.
\end{proof}

\begin{proof}[Proof of Theorem~\ref{thm:degree}]
     By combining  Lemmas~\ref{lem:Step 1}, \ref{lem:Step 2}, \ref{lem:Step 3} and  \ref{lem:Step 4}, we prove the approximation guarantee of $\T_{\deg}$  in Theorem~\ref{thm:degree}. 
    The runtime of Algorithm~\ref{algo:degree} follows by a simple application of the master theorem.
\end{proof}  

\section{Hierarchical clustering for well-clustered graphs}\label{sec:Well-Clustered Graphs}

So far we   have shown that an $O(1)$-approximate \textsf{HC} tree can be easily constructed for expander graphs. In   this section we study a wider class of graphs that exhibit a clear structure of clusters, i.e., well-clustered graphs. 
Informally, a well-clustered graph is a collection of densely-connected components~(clusters) of high conductance, which are weakly interconnected. 
As these graphs form some of the most meaningful objects for clustering in practice, one would naturally ask whether our $O(1)$-approximation result for expanders can be extended to well-clustered graphs. In this section, we will give an affirmative answer to this question. 

To formalise the well-clustered property, we consider the notion of  \emph{$ (\Phi_{\mathrm{in}},\Phi_{\mathrm{out}})$-decomposition} introduced by Gharan and Trevisan~\cite{GT14}.  Formally, for a graph $G=(V,E, w)$ and $k\in\mathbb{Z}_{+}$, 
we say that $G$ has $k$ well-defined clusters if $V(G)$ can be partitioned into  disjoint subsets $\{P_i\}_{i=1}^k$, such that the following  hold:
\begin{enumerate}
\item There is a sparse cut between $S_i$ and $V\setminus P_i$ for any $1\leq i\leq k$, which is formulated as $\Phi_G(P_i)\leq\Phi_{\mathrm{out}}$;

\item Each induced subgraph $G[P_i]$ has high
conductance $\Phi_{G[P_i]}\geq \Phi_{\mathrm{in}}$.  
 
\end{enumerate}
We underline that, through the celebrated higher-order Cheeger inequality~\cite{higherCheeg}, this condition of $(\Phi_{\mathrm{in}},\Phi_{\mathrm{out}})$-decomposition can be approximately  reduced to other formulations of a well-clustered graph studied in the literature, e.g.,~\cite{PengSZ17, Luxburg07, ZhuLM13}.

 The rest of the section is structured as follows: we present the key notion  and the strong decomposition lemma~(Lemma~\ref{lem:Improved Decomposition}) used in our main algorithm in Section~\ref{sec:Partitioning well-clustered graphs and critical nodes}. 
In Section~\ref{sec:The algorithm for well-clustered graphs}, we present the algorithm for well-clustered graphs, whose performance is summarised in  Theorem~\ref{thm:main_k_clusters}. We analyse the algorithm's performance, and prove   Theorem~\ref{thm:main_k_clusters} in Section~\ref{sec:The analysis of the algorithm}. Lastly, 
we prove the  strong decomposition lemma in Section~\ref{sec:Proof of Improved Decomp Lemma}, due to the complexity of the underlying algorithm. 

\subsection{Partitioning well-clustered graphs and critical nodes}\label{sec:Partitioning well-clustered graphs and critical nodes}

The starting point of our second result is the following polynomial-time algorithm presented by  Gharan and Trevisan~\cite{GT14}, which produces a $(\Phi_{\mathrm{in}},\Phi_{\mathrm{out}})$-decomposition of  a graph $G$, for some parameters $\Phi_{\mathrm{in}}$ and $\Phi_{\mathrm{out}}$.
Specifically, given a well-clustered graph as input, their algorithm returns disjoint sets of vertices $\{P_i\}_{i=1}^{\ell}$ with bounded  $\Phi(P_i)$ and $\Phi_{G[P_i]}$ for each $P_i$, and the algorithm's performance is as follows:

\begin{lemma}[Theorem~1.6, \cite{GT14}]\label{lem:Gharan-Trevisan decomposition}
Let $G = (V, E, w)$ be a graph such that $\lambda_{k+1} > 0$, for some $k \geq 1$. Then, there is a  polynomial-time algorithm that finds an $\ell$-partition $\{P_i\}_{i=1}^{\ell}$ of $V$, for some $\ell \leq k$, such that the following hold for every $1 \leq i \leq \ell$: 
\begin{enumerate}[label=$(A\arabic*)$]\itemsep -2pt
    \item $\Phi(P_i) = O \lp k^6 \sqrt{\lambda_k} \rp$;
    \item $\Phi_{G[P_i]} = \Omega \lp \lambda_{k+1}^2/ k^4\rp$.
\end{enumerate}
\end{lemma}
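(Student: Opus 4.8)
The plan is to reprove Lemma~\ref{lem:Gharan-Trevisan decomposition} by reconstructing the iterative partitioning procedure of Gharan and Trevisan and tracking the two conductance bounds through the recursion. The overall strategy: start with the trivial partition $\{V\}$, and repeatedly examine a part $P_i$ in the current partition; if $G[P_i]$ already has high internal conductance we leave it alone, and otherwise we split it using a low-conductance cut, obtaining a finer partition. The process stops when every part is ``internally expanding''. The two key quantitative ingredients are (i)~Lemma~\ref{lem:Higher Cheeger} / the Cheeger inequality to convert spectral information into a sparse cut inside a non-expanding part, and (ii)~Lemma~\ref{lem:Upperbound eig induced graphs}, which guarantees that as long as the current partition has at most $k-1$ parts, \emph{some} induced subgraph $G[P_i]$ has $\lambda_2(G[P_i]) \le 2c_0 k^6 \lambda_k$, hence (by Cheeger) a cut of conductance $O(k^3 \sqrt{\lambda_k})$; this is what forces the recursion to terminate with $\ell \le k$ parts.

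First I would set up the invariant carefully. Define the potential of a partition as its number of parts; it is bounded by $k$ because of the following dichotomy at each step: if the current partition $\{P_i\}$ has $\ell \le k-1$ parts and is \emph{not} yet a valid decomposition (some $G[P_i]$ has small internal conductance), then by Lemma~\ref{lem:Upperbound eig induced graphs} there is an index $j$ with $\lambda_2(G[P_j])$ small, so by the Cheeger inequality applied to $G[P_j]$ we can find $S \subset P_j$ with $\Phi_{G[P_j]}(S) = O(k^3\sqrt{\lambda_k})$ and $\vol_{G[P_j]}(S)\le \vol_{G[P_j]}(P_j)/2$. Splitting $P_j$ into $S$ and $P_j\setminus S$ increases the number of parts by one; since the partition can never exceed $k$ parts without Lemma~\ref{lem:Upperbound eig induced graphs} failing to apply, the procedure halts after at most $k-1$ refinement steps. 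The subtlety is that the cut $S$ found inside $P_j$ is measured with respect to $G[P_j]$, not $G$, so I need to argue the two properties $(A1)$ and $(A2)$ hold relative to the \emph{global} graph $G$.

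For property $(A1)$, the global outer conductance $\Phi_G(P_i)$ of a final part can be larger than the local cut conductances used to create it, because edges leaving $P_i$ accumulate over the chain of splits that produced it. The standard fix is a charging/accounting argument: each edge in $E(P_i, V\setminus P_i)$ was ``cut'' at some split, at which point it was a sparse-cut edge relative to the part being divided; summing the $O(k^3\sqrt{\lambda_k})$-bounds over the at most $k$ splits on any root-to-leaf path in the recursion tree yields $\Phi_G(P_i) = O(k \cdot k^3\sqrt{\lambda_k}) = O(k^6\sqrt{\lambda_k})$ — one loses at most another $\poly(k)$ factor here, which is absorbed into the $O(k^6)$. (One must also use that each split removes at most half the volume, so volumes of nested parts decay geometrically and the denominators in the conductance ratios stay under control.) For property $(A2)$, each final part $P_i$ is, by the termination condition, a part on which the refinement step chose \emph{not} to split, which by the contrapositive of the dichotomy means $\lambda_2(G[P_i])$ is not too small — more precisely, when $P_i$ is left alone the bound from Lemma~\ref{lem:Upperbound eig induced graphs} together with Cheeger gives $\Phi_{G[P_i]} = \Omega(\lambda_{k+1}^2/k^4)$; here one uses $\lambda_{k+1}$ (not $\lambda_k$) because the relevant lower bound on the internal spectral gap must beat the threshold that triggers a split, and it is $\lambda_{k+1}>0$ that guarantees this gap is genuinely positive. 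I expect the main obstacle to be exactly this bookkeeping for $(A1)$: getting the edge-charging to telescope correctly across the recursion tree while simultaneously controlling the volume shrinkage, so that the final exponent of $k$ comes out as claimed rather than blowing up. The polynomial running time is comparatively routine: there are at most $k-1$ iterations, each performing a nearly-linear-time spectral-partitioning computation (via the Cheeger rounding of Lemma~\ref{lem:Cheeger's ineq}) on the induced subgraphs, so the total cost is polynomial in $n$ (indeed nearly-linear up to the $k$ factor).
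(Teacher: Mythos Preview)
The paper does not prove this lemma directly --- it is quoted verbatim as Theorem~1.6 of \cite{GT14}. However, the paper \emph{does} reconstruct the Gharan--Trevisan algorithm inside the proof of the stronger Lemma~\ref{lem:Improved Decomposition} (Section~\ref{sec:Proof of Improved Decomp Lemma}, Algorithm~\ref{alg:Strong Decomposition}), so it is meaningful to compare your plan against that.

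Your proposal has a genuine gap in the argument for $(A1)$, and it is precisely the obstacle you flag at the end. The simple ``split any non-expanding part with a Cheeger cut'' recursion does \emph{not} control the global outer conductance $\Phi_G(P_i)$. Concretely: suppose $P_i$ arises from a chain of splits $V=Q_0\supset Q_1\supset\cdots\supset Q_t=P_i$. At step $j$ you only know $w(Q_j,\,Q_{j-1}\setminus Q_j)\le O(k^3\sqrt{\lambda_k})\cdot\vol(S_j)$ where $S_j$ is the \emph{smaller} side of the cut in $G[Q_{j-1}]$. If $P_i$ sits on the \emph{larger} side at step $j$, then the edges it inherits from that split are bounded only in terms of $\vol(Q_{j-1}\setminus Q_j)$, which can be vastly larger than $\vol(P_i)$. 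Your charging argument divides by $\vol(P_i)$ at the end, so the ratio $\vol(Q_{j-1})/\vol(P_i)$ appears and is unbounded in $k$. Geometric volume decay only helps when $P_i$ is always the small side, which you cannot assume.

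This is exactly why the actual Gharan--Trevisan procedure (and the paper's Algorithm~\ref{alg:Strong Decomposition}) is not a plain recursive bisection. It maintains, for each part $P_i$, a \emph{core set} $\core(P_i)\subseteq P_i$ and the invariant $\Phi_G(\core(P_i))\le\rho^*(1+\tfrac{1}{k+1})^r$ (Claim~\ref{claim:Strong decomp claim 1}); the algorithm sometimes shrinks a core set without adding a new part, and sometimes \emph{moves} a non-core piece $P_i\setminus\core(P_i)$ to a different $P_j$ (Lines~\ref{alg:line:update if-4},~\ref{alg:line:update if-5}). The relative-conductance tests (Lines~\ref{alg:line:condition if-2}) and the merge steps are what let one pass from $\Phi_G(\core(P_i))$ small to $\Phi_G(P_i)$ small (Item~(2) of Claim~\ref{claim:Strond decomp claim 3}). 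None of this machinery is present in your plan, and I do not see how to recover $(A1)$ without it. Your termination/count argument (at most $k$ parts via Lemma~\ref{lem:Upperbound eig induced graphs}) and your sketch for $(A2)$ are along the right lines, but $(A1)$ needs the core-set mechanism or an equivalent idea.
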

Informally, this result  states that, when the underlying input graph $G$ presents a clear structure of clusters, one can find in polynomial-time a partition $\{P_i\}_{i=1}^{\ell}$ such that both the outer and inner conductance of every $P_i$ can be bounded. 
One natural question raising from this partition $\{P_i\}_{i=1}^{\ell}$ is whether we can directly use $\{P_i\}_{i=1}^{\ell}$ to construct an \textsf{HC} tree.  As an obvious approach, one could  
% \orange{easily suggest}
 consider to 
(i) construct trees $\T_i=\T_{\deg}(G[P_i])$ for every $1\leq i\leq \ell$, and 
 (ii) merge the trees $\{\mathcal{\T}_i\}$ in the best way to construct the final tree $\T_G$.
 Unfortunately, as we will see in the following  example, this approach fails to achieve an $O(1)$-approximation.

\begin{example}\label{ex:Well-Clustered Bad Example} We study the following graph $G$, in which all the edges have   unit weight:
\begin{enumerate}
    \item Let $P_1=(V_1, E_1)$ and $P_2=(V_2, E_2)$ be the two copies of the graph constructed in Example~\ref{ex:Expander Bad Example}. 
    Specifically, every $P_i$ is a constant-degree expander graph of   $n$ vertices with an additional planted clique. 
    We use $S_i$ to represent the clique embedded in $P_i$, and $|S_i|=\lfloor n^{2/3} \rfloor$ for 
    $i\in\{1,2\}$;
    \item We define $G\triangleq (V_1\cup V_2, E_1\cup E_2\cup E_3)$, where $E_3$ consists of $\Theta(n^{1.1})$ crossing edges between $S_1$ and $S_2$, see Figure~\ref{Fig:k_clusters_bad_example}(a) for illustration.
\end{enumerate}
By construction, we know that both of $P_1$ and $P_2$ have low outer conductance  $\Phi(P_i) = O(n^{-0.23})$ and high inner conductance $\Phi_{G[P_i]}=\Omega(1)$, 
for $i\in\{1,2\}$. Moreover, it is easy to see that the tree $\T_G$ constructed by merging the trees $\T_1$ and $\T_2$, as shown in Figure~\ref{Fig:k_clusters_bad_example}(b) satisfies that $\cost(\T_G)=\Theta(n^{2.1})$. On the other hand, as illustrated in Figure~\ref{Fig:k_clusters_bad_example}(c), we can place  the two cliques together and further down the  $\textsf{HC}$ tree, and obtain  a tree $\T^{\star}$ with cost $\Theta(n^2)$. Thus we conclude that $\cost(\T_G) = \Omega \lp n^{0.1}\rp \cdot \OPT_G$.
\end{example}
\begin{figure}
    \centering
    \begin{subfigure}[t]{.3\textwidth}
        \centering
        \begin{tikzpicture}[scale=0.4]
            \draw[color=blue, fill=blue!20] (-4,0) circle [x radius=2cm, y radius=3.5cm] node[anchor=south, below=15pt]{$P_1$};
            
            \draw[color=blue, fill=blue!20] (2,0) circle [x radius=2cm, y radius=3.5cm] node[anchor=south, below=15pt]{$P_2$};
            
            \draw (-3.5, 1.5) -- (1.5, 1.5);
            \draw (-3.5, 2) -- (1.5, 2);
            \draw (-3.5, 1) -- (1.5, 1);
            \draw[color=red, fill=red!20] (-4, 1.5) circle [radius=1cm] node {$S_1$};
            \draw[color=red, fill=red!20] (2, 1.5) circle [radius=1cm] node {$S_2$};
        \end{tikzpicture}
        \caption{}
    \end{subfigure}
    \begin{subfigure}[t]{.3\textwidth}
        \centering{}
        \begin{tikzpicture}[scale=0.7, level 1/.style={sibling distance=2cm}, level 2/.style={sibling distance=2cm}] 
            \node [draw, circle,inner sep=1pt, minimum size=1cm]{\scriptsize{$\mathcal{T}_{G}$}}
            child { node [draw, circle, color = orange,fill=orange!20,inner sep=1pt, minimum size=1cm, sibling distance=2cm] {\scriptsize{$\mathcal{T}_1$}} 
                % child { node [draw, circle, color = red,fill=red!20,inner sep=1pt, minimum size=1cm] {} 
                % }
                % child { node [draw, circle, color = blue,fill=blue!20,inner sep=1pt, minimum size=1cm] {} 
                % }
            }
            child { node [draw, circle, color = orange,fill=orange!20,inner sep=1pt, minimum size=1cm] {\scriptsize{$\mathcal{T}_2$}} 
                % child { node [draw, circle, color = red,fill=red!20,inner sep=1pt, minimum size=1cm] {} 
                % }
                % child { node [draw, circle, color = blue,fill=blue!20,inner sep=1pt, minimum size=1cm] {} 
                % }
            };
        \end{tikzpicture}
        \caption{}
    \end{subfigure}
    \begin{subfigure}[t]{.3\textwidth}
        \centering
        \begin{tikzpicture}[scale=0.7, sibling distance=2cm] 
            \node [draw, circle,inner sep=1pt, minimum size=1cm]{\scriptsize{$\mathcal{T}^*$}}
            child { node [draw, circle, color = blue,fill=blue!20,inner sep=1pt, minimum size=1cm, sibling distance=2cm] {\scriptsize{$P_1\setminus S_1$}}
            }
            child { node [draw, circle,inner sep=1pt, minimum size=1cm] {}
                child { node [draw, circle, color = blue,fill=blue!20,inner sep=1pt, minimum size=1cm] {\scriptsize{$P_2 \setminus S_2$}} 
                }
                child { node [draw, circle,inner sep=1pt, minimum size=1cm] {}
                    child{node [draw, circle, color = red,fill=red!20,inner sep=1pt, minimum size=1cm]{\scriptsize{$S_1$}}
                    }
                    child{node [draw, circle, color = red,fill=red!20,inner sep=1pt, minimum size=1cm]{\scriptsize{$S_2$}}
                    }
                }
            };
        \end{tikzpicture}
        \caption{}
    \end{subfigure}
    \caption{\small{(a) Our constructed graph $G$ consisting of two disjoint copies of the graph in Example~\ref{ex:Expander Bad Example} of $n$ vertices each, with additional $\Theta(n^{1.1})$ crossing edges connecting the two cliques. (b) The tree $\T_G$ formed by merging the two subtrees $\T_1$ and $\T_2$. (c) The tree $\T^*$ constructed in a ``Caterpillar'' fashion by first merging the two cliques $S_1$ and $S_2$.}}
    \label{Fig:k_clusters_bad_example}
\end{figure}
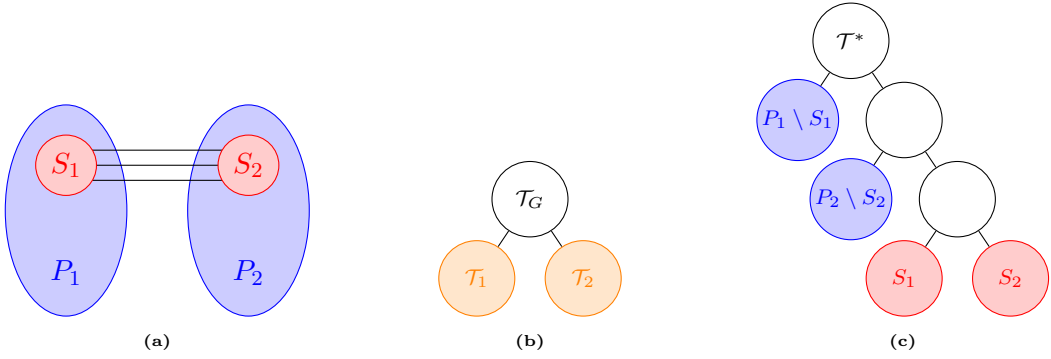

To address this, we follow our intuition gained from Example~\ref{ex:Expander Bad Example}, and further decompose every $P_i$ into smaller subsets. Similar with  analysing dense branches, we introduce the \emph{critical nodes} associated with each $\T_i$.

\begin{definition}[Critical nodes]\label{def:Critical Nodes}
        Let $\T_i = \T_{\deg}(G[P_i])$ be the tree computed by Algorithm~\ref{algo:degree} to the induced graph $G[P_i]$. Suppose $(A_0, \dots, A_{r_i})$ is the dense branch of $\T_i$ for some $r_i \in \mathbb{Z}_{+}$, $B_j$ is the sibling of $A_j$, and let $A_{r_i+1}, B_{r_i+1}$ be the two children of $A_{r_i}$. We define $\mathcal{S}_i \triangleq \{B_1, \dots, B_{r_i + 1}, A_{r_i + 1}\}$ to be the set of \emph{critical nodes} of $P_i$. Each node $N \in \mathcal{S}_i$ is a \emph{critical node}.
        % and we denote by $\mathcal{S} = \bigcup_{i=1}^{\ell} \mathcal{S}_i$, the set of all critical nodes.
\end{definition}
We remark that each critical node $N \in \mathcal{S}_i~(1\leq i\leq \ell)$ is an internal node of maximum size in $\mathcal{T}_i$ that is not in the dense branch.  Moreover, each $\mathcal{S}_i$ is a partition of $P_i$. 
% ,  and  the set $\mathcal{S}$ is a partition of $V(G)$.
% \orange{Thanks to}
Based on critical nodes,  we  present an improved decomposition algorithm, which is similar to  the one in  Lemma~\ref{lem:Gharan-Trevisan decomposition}, and prove that the output quality of our 
% \orange{improved decomposition} 
algorithm  can be significantly strengthened for hierarchical clustering. Specifically, in addition to satisfying ($A1$) and ($A2$), we prove that  the total weight between each critical node $N \in \mathcal{S}_i$ and all the other clusters $P_j$ for all $i \neq j$
 can be upper bounded. We highlight that this is one of the key properties that allows us to obtain our main result, and also suggests that the original decomposition algorithm in \cite{GT14} might not suffice for our purpose. 
%  \orange{Our result for strong decomposition is   as follows:}

%we show that it is possible to improve the decomposition in Lemma~\ref{lem:Gharan-Trevisan decomposition} and this is one of our main technical contributions. Concretely, we show that it is also possible to upper bound the total weight between each critical node $N \in \mathcal{S}_i$ and all the other clusters $P_j$, for all $i \neq j$. This is a very important property on which our \textsf{HC}-tree construction heavily relies upon. We state our result below, however due to space constraints we defer the proof to the supplementary material.  

\begin{lemma}[Strong Decomposition Lemma]\label{lem:Improved Decomposition}
Let $G = (V, E, w)$ be a graph such that $\lambda_{k+1} > 0$  and $\lambda_k < \lp \frac{1}{270 \cdot c_0 \cdot (k+1)^6} \rp^2$, where $c_0$ is the constant in Lemma~\ref{lem:Upperbound eig induced graphs}. Then, there is a polynomial-time algorithm that finds an $\ell$-partition of $V$ into sets $\{P_i\}_{i=1}^{\ell}$, for some $\ell \leq k$, such that for every $1 \leq i \leq \ell$ and every critical node $N \in \mathcal{S}_i$ the following properties hold:
\begin{enumerate}[label=$(A\arabic*)$]
    \item $\Phi(P_i) = O(k^6 \sqrt{\lambda_k})$;
    \item $\Phi_{G[P_i]} = \Omega (\lambda_{k+1}^2/k^4)$;
    \item $w(N, V \setminus P_i) \leq 6(k+1) \cdot \vol_{G[P_i]}(N)$.
\end{enumerate}

\end{lemma}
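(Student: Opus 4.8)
The plan is to start from the Gharan--Trevisan partition guaranteed by Lemma~\ref{lem:Gharan-Trevisan decomposition} applied with parameter $k$ (so we already get properties $(A1)$ and $(A2)$), and then, whenever some critical node $N \in \mathcal{S}_i$ violates property $(A3)$, i.e., $w(N, V\setminus P_i) > 6(k+1)\cdot \vol_{G[P_i]}(N)$, we \emph{move} $N$ out of its cluster $P_i$ and make it a new part on its own (or merge it with another cluster it is heavily connected to). The key point is that such a ``bad'' critical node $N$ is by definition a low-conductance set \emph{relative to the ambient graph} — its edge boundary into the rest of $G$ is large compared to its internal volume — and since $N$ is itself an induced subtree node of $\T_{\deg}(G[P_i])$ which is \emph{not} on the dense branch, $N$ has small volume relative to $P_i$. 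I would quantify this: since $\vol_{G[P_i]}(N) \leq \vol(G[P_i])/2$ for every critical node (this is exactly the ``light side'' property inherited from the dense-branch construction, cf.~Lemma~\ref{lem:Cost light nodes} and the fact that siblings of dense-branch nodes have at most half the volume), a violation of $(A3)$ forces $\Phi_G(N)$ to be large, which combined with $(A2)$-type bounds on $G[P_i]$ will be contradictory unless $N$ is genuinely better off as its own cluster.

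The main technical engine is a \textbf{potential/charging argument}. Define a potential that tracks the number of parts $\ell$ together with something like $\sum_i \vol(G[P_i])$ or the total boundary weight $\sum_i w(P_i, V\setminus P_i)$; each time we peel off a violating critical node $N$ and reattach it, I would show the potential strictly decreases by a quantifiable amount (using the violation inequality $w(N, V\setminus P_i) > 6(k+1)\vol_{G[P_i]}(N)$ to lower-bound the drop). Because $\ell \leq k$ must be maintained, the delicate part is bounding the \emph{number of iterations}: each peel-off increases the part count, so I must argue that peeled-off nodes can be consolidated — either a peeled node $N$ attaches to a cluster $P_j$ with which it shares $\geq w(N,V\setminus P_i)/(k-1) \geq 6\vol_{G[P_i]}(N)$ edges (pigeonhole over the at most $k-1$ other clusters), and then $N\cup P_j$ still has good inner conductance because $N$ is densely tied to $P_j$, while its outer conductance only improves. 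The factor $6(k+1)$ in $(A3)$ and the hypothesis $\lambda_k < (270 c_0 (k+1)^6)^{-2}$ should be exactly what is needed to close the induction: the eigenvalue bound via Lemma~\ref{lem:Upperbound eig induced graphs} guarantees that after any such merge the resulting induced subgraph still has a spectral gap controllable by $\lambda_{k+1}$, so $(A1)$ and $(A2)$ survive (possibly with the constants in the $O(\cdot)$ and $\Omega(\cdot)$ slightly worse, which is why the lemma is stated with unspecified constants).

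Concretely, the steps in order would be: (i) run the Gharan--Trevisan algorithm to obtain $\{P_i\}$ satisfying $(A1),(A2)$; (ii) for each $P_i$, compute $\T_{\deg}(G[P_i])$ and its set of critical nodes $\mathcal{S}_i$; (iii) if every critical node already satisfies $(A3)$, halt; (iv) otherwise pick a violating $N\in\mathcal{S}_i$, use pigeonhole to find a cluster $P_j$ (possibly a singleton cluster $\{N\}$ if no $P_j$ is heavily connected, but then $(A3)$ for $N$ holds trivially after the move since $N$ becomes its own part and $\vol_{G[\{N\}]}(N)=\vol_{G[P_i]}(N)$ while... — actually here one must be careful: making $N$ its own part does not obviously satisfy $(A3)$ unless $N$'s \emph{internal} structure dominates, so the merge case is the important one), re-run steps (ii)--(iv) on the updated partition; (v) bound the total number of iterations by the potential argument and verify $\ell \leq k$ is preserved. \textbf{The hard part} will be step (v): ensuring the peel-and-merge process terminates with at most $k$ parts while simultaneously re-establishing $(A2)$ (inner conductance) for the enlarged clusters, since naively merging two sets can destroy inner conductance. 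This is where I expect to need the full strength of Lemma~\ref{lem:Upperbound eig induced graphs} and the quantitative hypothesis on $\lambda_k$ — the bound $\min_i \lambda_2(G[P_i]) \leq 2c_0 k^6 \lambda_k$ means that any coarsening into $\leq k-1$ parts would create a part with a too-large second eigenvalue, which contradicts the assumed smallness of $\lambda_k$ and thereby forces the merges to stay ``benign''; carefully propagating these constants through the iteration to land on exactly $6(k+1)$ is the bookkeeping that the full proof (deferred to Section~\ref{sec:Proof of Improved Decomp Lemma}) must handle.
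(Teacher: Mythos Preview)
Your proposal takes a genuinely different route from the paper, and as written it has a real gap.

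The paper does \emph{not} run Gharan--Trevisan as a black box and then post-process. Instead it opens up the GT algorithm and augments it: throughout the run it maintains, for each part $P_i$, a \emph{core set} $\core(P_i)\subseteq P_i$ of provably low conductance (Claim~\ref{claim:Strong decomp claim 1}), and every critical node $N$ is split as $N = N^+\cup N^-$ with $N^+ = N\cap\core(P_i)$. Property~$(A3)$ is obtained not by checking $(A3)$ directly but by enforcing two finer invariants at termination: $(A4)$ a lower bound on the \emph{relative conductance} $\varphi(N^+,\core(P_i))$, and $(A5)$ a balance condition $w(N^-\to P_i)\ge w(N^-\to V\setminus P_i)/(k+1)$. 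When either invariant fails, the algorithm does not peel $N$ off wholesale; it either shrinks $\core(P_i)$ (Line~\ref{alg:line:update if-7}) or moves only $N^-$ (Line~\ref{alg:line:update if-8}), and in both cases jumps back into the main \texttt{while}-loop so that Spectral Partitioning is re-run and $(A2)$ is restored automatically. The bound $w(N,V\setminus P_i)\le 6(k+1)\vol_{G[P_i]}(N)$ then falls out by combining the $(A4)$-bound on $w(N^+,V\setminus P_i)$ with the $(A5)$-bound on $w(N^-,V\setminus P_i)$, using separately that $\Phi_G(\core(P_i))\le \phiOut/(k+1)$ and $\Phi_G(P_i)\le\phiOut$.

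The gap in your plan is exactly step~(v), and it is not just bookkeeping. First, there is a sign confusion: you observe (correctly) that a violation of $(A3)$ forces $\Phi_G(N)$ to be close to~$1$, but then propose making $N$ its own part --- which would need $\Phi_G(N)=O(k^6\sqrt{\lambda_k})\ll 1$ to satisfy $(A1)$. So the ``singleton'' branch is dead, and only the merge branch is available. Second, and more seriously, after merging $N$ into some $P_j$ you have no mechanism to recover $(A2)$ for $P_j\cup N$ or for $P_i\setminus N$: your appeal to Lemma~\ref{lem:Upperbound eig induced graphs} goes the wrong way (it \emph{upper}-bounds $\min_i\lambda_2(G[P_i])$, so it tells you some part has \emph{small} gap, not that all parts retain large gap). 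The paper resolves this by never leaving the GT loop: every refinement triggered by a critical node sends control back to Line~\ref{alg:while line}, so Spectral Partitioning re-establishes inner conductance before the next critical-node check. Without the core-set/relative-conductance machinery and this interleaved structure, your iterate-until-fixed-point scheme has no handle on $(A2)$ and no clean termination bound; the paper's Claim~\ref{claim:Strong decomp claim 5} gets termination precisely because the three update types (new part, core shrink, subset transfer) are hierarchically bounded via the core sets.
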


To underline the importance of ($A3$), recall that, in general, each subtree $\T_i$ cannot be directly used to construct an $O(1)$-approximate $\textsf{HC}$ tree of $G$ because of the potential high cost of the crossing edges $E(P_i, V\setminus P_i)$.
However, if the internal cost of $\T_i$ is high enough to compensate for the cost introduced for the crossing edges $E(P_i, V\setminus P_i)$, then one can safely use this $\T_i$ as a building block. This is one of  the most crucial insights that leads us to design our final algorithm \texttt{PruneMerge}.

\subsection{The algorithm for well-clustered graphs}\label{sec:The algorithm for well-clustered graphs}
Now we are ready to describe  the algorithm \texttt{PruneMerge}, and we refer the reader to Algorithm~\ref{alg:PruneMerge} for the formal presentation.  At a high-level, our algorithm consists of three   phases: \texttt{Partition}, \texttt{Prune} and \texttt{Merge}. In the \texttt{Partition} phase~(Lines \ref{alg:PruneMerge:line:begin partition}--\ref{alg:PruneMerge:line:end partition}), the algorithm invokes Lemma~\ref{lem:Improved Decomposition} to partition  $V(G)$ into sets $\{P_i\}_{i=1}^{\ell}$, and applies Algorithm~\ref{algo:degree} 
% \orange{to each induced subgraph $G[P_i]$} 
to obtain the corresponding trees $\{\T_i\}_{i=1}^{\ell}$. The \texttt{Prune} phase~(Lines \ref{alg:PruneMerge:line:begin prune}--\ref{alg:PruneMerge:line:end prune}) consists of a repeated pruning process: 
% \orange{(if needed)} 
% \orange{of the trees $\{\T_i\}_{i=1}^{\ell}$:} 
for every such tree $\T_i$, the algorithm checks in Line~\ref{alg:PruneMerge:line:if-condition} if the maximal possible cost of the edges $E(P_i, V\setminus P_i)$ (i.e., the LHS of the inequality in the \texttt{if}-condition) can be 
% \orange{upper} 
bounded by the internal cost of the critical nodes $N \in S_i$, up to 
 a factor of $O(k).$
% \orange{a factor of $O(k)$.} 
\vspace{-0.2cm}
\begin{itemize}\itemsep -1pt
    \item  If so, the algorithm uses $\T_i$ as a building block and  adds it to a global set of trees $\mathbb{T}$;
    \item Otherwise, the algorithm prunes the subtree $\T_i[N]$, where $N \in \mathcal{S}_i$ is the critical node closest to the root in $\T_i$, and adds $\T_i[N]$ to $\mathbb{T}$~(Line \ref{alg:PruneMerge:line:end prune}).
\end{itemize}
 The process is repeated
% \orange{The algorithm repeats the process}
with the pruned $\T_i$ until either the condition in  Line~\ref{alg:PruneMerge:line:if-condition} is satisfied, or $\T_i$ is completely pruned. Finally, in the \texttt{Merge} phase~(Lines \ref{alg:PruneMerge:line:begin merge}--\ref{alg:PruneMerge:line:end merge})  the algorithm combines the trees in $\mathbb{T}$ in a ``caterpillar style'' according to an increasing order of their sizes. The performance of this algorithm is summarised as follows:

\begin{algorithm}
    \DontPrintSemicolon
    
    \KwInput{A graph $G = (V, E, w)$, a parameter $k \in \mathbb{Z}_{+}$ such that $\lambda_{k+1} > 0$;}
    \KwOutput{An \textsf{HC} tree $\TPrune$ of $G$;}
    
    Apply the partitioning algorithm (Lemma~\ref{lem:Improved Decomposition}) on input $(G, k)$ to obtain $\{P_i\}_{i=1}^{\ell}$ for some $\ell \leq k$;\label{alg:PruneMerge:line:begin partition}
    
    Let $\T_i = \mathrm{\texttt{HCwithDegrees}}(G[P_i])$;\label{alg:PruneMerge:line:end partition}

    Initialise $\mathbb{T} = \emptyset$;
    
    \For{All clusters $P_i$}
    {
        \label{alg:PruneMerge:line:begin prune}
        Let $\mathcal{S}_i$ be the set of critical nodes of $\T_i$;
        
        \While{ $\mathcal{S}_i$ is nonempty}
        {
            \If{ $n \cdot \sum_{N \in \mathcal{S}_i} w(N, V \setminus P_i) \leq 6(k+1) \cdot \sum_{N \in \mathcal{S}_i} |\parent_{\T_i}(N)| \cdot \vol_{G[P_i]}(N)$}
            {
                \label{alg:PruneMerge:line:if-condition}
                Update $\mathbb{T} \leftarrow \mathbb{T} \cup \T_i$ and $\mathcal{S}_i = \emptyset$;
            }
            \Else
            {
                Let $N, M$ be the two children of the root of $\mathcal{T}_i$ such that $N \in \mathcal{S}_i$;\label{alg:PruneMerge:line:prune step}
                
                Update $\mathbb{T} \leftarrow \mathbb{T} \cup \T_i[N]$, 
                $\mathcal{S}_i \leftarrow \mathcal{S}_i \setminus \{N\}$ 
                and $\T_i \leftarrow \T_i[M]$;\label{alg:PruneMerge:line:end prune}
            }
            
        }
    }
    
    Let $t = |\mathbb{T}|$ and $\mathbb{T} = \{\widetilde{\T_1}, \dots, \widetilde{\T_t}\}$ be such that $|\widetilde{\T}_i| \leq |\widetilde{\T}_{i+1}|$, for all $1 \leq i < t$;
    
    Initialise $\TPrune = \widetilde{\T}_1$;\label{alg:PruneMerge:line:begin merge}
    
    \For{ $i=2,\ldots, t$}
    {
        Let $\TPrune$ be the tree with $\TPrune$ and $\widetilde{\T}_i$ as its two children;\label{alg:PruneMerge:line:end merge}
    }
    
    \Return $\TPrune$

\caption{\texttt{PruneMerge}$(G, k)$\label{alg:PruneMerge}}
\end{algorithm}

\begin{theorem}\label{thm:main_k_clusters}
Let $G=(V, E, w)$ be a  graph, and $k>1$  such that $\lambda_{k+1}>0$  and $\lambda_k < \lp \frac{1}{270 \cdot c_0 \cdot (k+1)^6} \rp^2$, where $c_0$ is the constant in Lemma~\ref{lem:Upperbound eig induced graphs}. The algorithm 
\emph{\texttt{PruneMerge}}  runs in  polynomial-time and   constructs an \textsf{HC} tree $\TPrune$ of $G$ satisfying $\cost_G(\TPrune) = O\lp k^{22}/\lambda_{k+1}^{10}\rp \cdot \OPT_G$. In particular, when  $\lambda_{k+1}=\Omega(1)$ and $k=O(1)$, the algorithm's constructed tree $\TPrune$ satisfies that $\cost_G(\TPrune) = O(1) \cdot \OPT_G$.  
\end{theorem}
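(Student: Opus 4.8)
The plan is to bound $\cost_G(\TPrune)$ by comparing it against a lower bound on $\OPT_G$, exploiting the fact that the \texttt{PruneMerge} construction is ``caterpillar-like'' over a collection $\mathbb{T}$ of well-behaved subtrees. Concretely, I would proceed in three stages mirroring the three phases of the algorithm. First, for the \texttt{Partition} phase, I would invoke Lemma~\ref{lem:Improved Decomposition} to get the partition $\{P_i\}_{i=1}^{\ell}$ with properties $(A1)$--$(A3)$; since each $G[P_i]$ has inner conductance $\Phi_{G[P_i]} = \Omega(\lambda_{k+1}^2/k^4)$, Theorem~\ref{thm:degree} applied to $G[P_i]$ gives that $\T_i = \texttt{HCwithDegrees}(G[P_i])$ satisfies $\cost_{G[P_i]}(\T_i) = O(1/\Phi_{G[P_i]}^4)\cdot\OPT_{G[P_i]} = O(k^{16}/\lambda_{k+1}^8)\cdot\OPT_{G[P_i]}$. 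Summing over $i$ and noting $\sum_i \OPT_{G[P_i]} \leq \OPT_G$ (the restriction of an optimal tree to each $P_i$ is a valid tree, and costs only decrease under restriction) controls the ``internal'' cost of all the $\T_i$ up to the claimed factor.

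Second, for the \texttt{Prune} phase, the point is that every tree added to $\mathbb{T}$ — whether a full $\T_i$ accepted at Line~\ref{alg:PruneMerge:line:if-condition} or a pruned piece $\T_i[N]$ — carries with it the guarantee that the maximum conceivable cost of its ``outgoing'' edges, $n \cdot \sum_{N} w(N, V\setminus P_i)$, is at most $6(k+1)$ times the internal cost $\sum_N |\parent_{\T_i}(N)|\cdot\vol_{G[P_i]}(N)$ of its critical nodes, and this internal cost is in turn a lower bound (up to constants) on $\cost_{G[P_i]}(\T_i)$ by the dense-branch lower bound of Lemma~\ref{lem:Cost light nodes}. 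Property $(A3)$ is exactly what guarantees that the \texttt{while}-loop terminates: a critical node $N$ that fails the aggregate test can always be peeled off, and after finitely many peels the test must pass (in the worst case when $\T_i$ is fully consumed the sum on the left is empty). I would argue that this loop invariant ensures: (i) the pieces in $\mathbb{T}$ form a partition of $V$ refining $\{P_i\}$; (ii) for each piece $R\in\mathbb{T}$, $n\cdot w(R, V\setminus R) = O(k)\cdot\cost_{G[R']}(\T_R)$ where $R'$ is the cluster containing $R$ and $\T_R$ the corresponding subtree — so the ``charge'' for cutting $R$ off from the rest of $G$ at the top of the caterpillar is absorbed by $R$'s own internal cost, which we have already bounded against $\OPT$.

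Third, for the \texttt{Merge} phase, I would decompose $\cost_G(\TPrune) = \sum_{R\in\mathbb{T}}\cost_{G[R]}(\TPrune[R]) + (\text{cost of crossing edges})$. The internal terms equal the already-bounded $\cost_{G[R]}(\T_R)$. For a crossing edge $e=\{u,v\}$ with $u\in R_a$, $v\in R_b$, $a\neq b$, its lowest common ancestor in the caterpillar $\TPrune$ is the merge node joining the later of the two pieces to the rest, so $|\leaves(\TPrune[u\vee v])| \leq n$; hence the total crossing cost is at most $n\cdot\sum_{R\in\mathbb{T}} w(R, V\setminus R)/2$, which by stage two is $O(k)\cdot\sum_R\cost_{G[R]}(\T_R) = O(k^{17}/\lambda_{k+1}^8)\cdot\OPT_G$. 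Combining the three stages and tracking the constants through $\Phi_{G[P_i]}^{-4}$, the factor $6(k+1)$, and the number $\ell\leq k$ of clusters, I would arrive at $\cost_G(\TPrune) = O(k^{22}/\lambda_{k+1}^{10})\cdot\OPT_G$; the final sentence of the theorem is then immediate by setting $k = O(1)$ and $\lambda_{k+1} = \Omega(1)$, and polynomial runtime follows since the partition step, the $\ell$ invocations of Algorithm~\ref{algo:degree}, and the pruning loop (at most $O(n)$ iterations total) are all polynomial.

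\textbf{Main obstacle.} I expect the delicate part to be the bookkeeping in stage two: showing that the aggregate (summed over all of $\mathcal{S}_i$) \texttt{if}-condition used in the algorithm really does certify, for \emph{each individual} piece eventually placed in $\mathbb{T}$, that its crossing weight is charged to internal cost, \emph{and} that repeatedly pruning the topmost critical node does not destroy this invariant for the shrinking remainder $\T_i[M]$ — in particular one must check that the critical nodes of $\T_i[M]$ relate correctly to those of $\T_i$, and that $(A3)$ continues to give enough slack so that a bounded number of prunings (rather than, say, $\omega(n)$) suffices. Getting the lower bound $\sum_N |\parent(N)|\cdot\vol(N) = \Omega(\cost_{G[P_i]}(\T_i))$ tight enough via Lemma~\ref{lem:Cost light nodes}, without losing extra powers of $\Phi_{G[P_i]}^{-1}$, is the technical heart.
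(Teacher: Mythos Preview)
Your three-stage skeleton and the idea of bounding internal versus crossing costs against $\sum_i \OPT_{G[P_i]}\leq \OPT_G$ are the same as the paper's. But there is a genuine gap in stage two: you assert that \emph{every} piece in $\mathbb{T}$, including a pruned-off subtree $\T_i[N]$, carries the guarantee $n\cdot w(N,V\setminus P_i)\leq 6(k{+}1)\cdot|\parent_{\T_i}(N)|\cdot\vol_{G[P_i]}(N)$. This is false. A pruned critical node is, by construction, one for which the aggregate \texttt{if}-test \emph{failed}; nothing in the algorithm certifies the inequality with factor $n$ for that single node. Property~$(A3)$ only gives $w(N,V\setminus P_i)\leq 6(k{+}1)\cdot\vol_{G[P_i]}(N)$, so your claimed inequality would require $|\parent_{\T_i}(N)|\gtrsim n$, which can fail badly (e.g.\ for critical nodes deep in the dense branch). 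Consequently your stage-three bound $|\leaves(\TPrune[u\vee v])|\leq n$ for crossing edges adjacent to pruned pieces is too crude and the charge to internal cost does not go through.

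The missing ingredient is a size-comparison lemma exploiting the \emph{sorted} caterpillar merge: because the critical nodes of each $\T_i$ have geometrically decreasing sizes (at most three per dyadic scale, since $|A_{j+1}|=|A_j|/2$ along the \texttt{HCwithDegrees} branch), and because the $\widetilde\T_j$ are merged in increasing order of size, one shows $|\parent_{\TPrune}(N)|\leq 6k\cdot|\parent_{\T_i}(N)|$ for every pruned critical node $N$. This replaces the crude $n$ by something proportional to $|\parent_{\T_i}(N)|$, and then $(A3)$ gives $|\parent_{\TPrune}(N)|\cdot w(N,V\setminus P_i)\leq 36k(k{+}1)\cdot|\parent_{\T_i}(N)|\cdot\vol_{G[P_i]}(N)$, which is exactly what feeds into Lemma~\ref{lem:Cost light nodes}. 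The same lemma is needed for $E_1$-edges inside $P_i$ that cross between a pruned piece and the rest (these are neither ``internal to $R$'' nor covered by the \texttt{if}-condition, which only speaks about edges to $V\setminus P_i$). Once you have this parent-size control, the unpruned remainder is handled by the \texttt{if}-condition as you say, the pruned pieces by $(A3)$ plus the size lemma, and Lemma~\ref{lem:Cost light nodes} turns both into $O(1/\Phi_{G[P_i]})\cdot\cost_{G[P_i]}(\T_i)$; Theorem~\ref{thm:degree} then gives the final $O(k^2/\phiIn^{10})$ factor.
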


We remark that, although Algorithm~\ref{alg:PruneMerge} requires a parameter $k$ as input, we can apply the standard   technique of running Algorithm~\ref{alg:PruneMerge} for different values of $k$ and return the tree of lowest cost. 
By introducing a factor of $O(k)$ to the algorithm's runtime, this ensures that one of the constructed trees by Algorithm~\ref{alg:PruneMerge} would always satisfy our promised approximation ratio.

\subsection{Analysis of the algorithm}\label{sec:The analysis of the algorithm}

In this subsection  we will prove our  main result, i.e.,  Theorem~\ref{thm:main_k_clusters}. We assume that the \textsf{Partition} phase of Algorithm~\ref{alg:PruneMerge} (Line~\ref{alg:PruneMerge:line:begin partition}) has finished and   $V(G)$ is decomposed into disjoint sets $\{P_i\}_{i=1}^{\ell}$, for some $\ell \leq k$, such that the following properties of Lemma~\ref{lem:Improved Decomposition} hold for all $1 \leq i \leq \ell$ and every critical node $N \in \mathcal{S}_i$: 
\begin{enumerate}[label=$(A\arabic*)$]
    \item $\Phi(P_i) = O(k^6 \sqrt{\lambda_k})$; 
    \item $\Phi_{G[P_i]} = \Omega (\lambda_{k+1}^2/k^4)$;
    \item $w(N, V \setminus P_i) \leq 6(k+1) \cdot \vol_{G[P_i]}(N)$.
\end{enumerate}
Also,  for some parameter  $ \phiIn = \Theta\lp\lambda_{k+1}/k^2\rp$, we have that $\Phi_{G[P_i]} = \Omega(\phiIn^2)$ holds for any $1\leq i\leq \ell$. Let $\T_i = \T_{\deg}(G[P_i])$ with corresponding  set  of critical nodes $\mathcal{S}_i$, for all $1 \leq i \leq \ell$, and let $\mathcal{S} = \bigcup_{i=1}^{\ell} \mathcal{S}_i$ be the set of all critical nodes. In order to prove Theorem~\ref{thm:main_k_clusters},  we adopt the following strategy. We group the edges of $G$ into two categories: let $E_1$ be the set of edges in the induced subgraphs $G[P_i]$ for all $1\leq i\leq \ell$, i.e.,
\[
    E_ 1 \triangleq  \bigcup_{i=1}^{\ell} E\left[G[P_i] \right],
\]
and let $E_2$ be the remaining crossing edges. Therefore, we can write the cost of our tree $\TPrune$ as
\begin{equation}\label{eq:Split cost into E_1 and E_2}
    \cost(\TPrune) = \sum_{e \in E_1} \cost_{\TPrune}(e) + \sum_{e \in E_2} \cost_{\TPrune}(e),
\end{equation}
and we will ultimately bound each sum individually in Lemmas~\ref{lem:cost edges inside each cluster} and \ref{lem:cost crossing edges}. However, before presenting these two technical lemmas, we first  introduce   the notions of \emph{pruned} and \emph{unpruned} critical nodes, and analyse their properties.

Consider an arbitrary cluster $P_i$ with corresponding induced tree $\T_i$. We say that a critical node $N \in \mathcal{S}_i$ is \emph{pruned} if $N$ was eventually cut from the tree $\T_i$ and the subtree $\T_i[N]$ was added to $\mathbb{T}$ (Lines~\ref{alg:PruneMerge:line:prune step}--\ref{alg:PruneMerge:line:end prune} of Algorithm~\ref{alg:PruneMerge}). Otherwise, we say that $N$ is \emph{unpruned}. We denote the set of all pruned nodes by \textsc{Pruned}
and the set of unpruned nodes by \textsc{Unpruned}.

Our first result bounds the size of the parent of a pruned node $N \in \mathcal{S}_i$ in $\TPrune$, with respect to the size of its parent in the tree $\T_i$. This result will be extensively used when bounding the cost of the edges adjacent to $N$.

\begin{lemma}\label{lem:Parent size pruned nodes}
It holds for every $1 \leq i \leq \ell$ and every pruned critical node $N \in \mathcal{S}_i \cap \mathrm{\textsc{Pruned}}$ that 
    \[
        \left|\parent_{\TPrune}(N)\right| \leq 6k \cdot \left|\parent_{\T_i}(N)\right|.
    \]  
\end{lemma}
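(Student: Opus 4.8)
Since $N$ is a pruned critical node of $\mathcal S_i$, the top-down pruning loop of Algorithm~\ref{alg:PruneMerge} forces $N=B_m$ for some $m$, where $(A_0,\dots,A_{r_i})$ is the dense branch of $\T_i=\T_{\deg}(G[P_i])$ and $B_j$ is the sibling of $A_j$ (the only exception is when $N$ is one of the two last critical nodes $A_{r_i+1},B_{r_i+1}$, which I will treat separately at the end). In particular $\parent_{\T_i}(N)=A_{m-1}$ lies on the dense branch, so $|\parent_{\T_i}(N)|=|A_{m-1}|$ and $|N|=|B_m|\le|A_{m-1}|$. The plan is to locate the building block $\T_i[N]$ inside the caterpillar $\TPrune$ and bound the total size of the blocks that sit below it. First I would record the structural fact about the \texttt{Merge} phase: if the blocks are sorted as $|\widetilde\T_1|\le\dots\le|\widetilde\T_t|$ and $\T_i[N]=\widetilde\T_j$, then
\[
|\parent_{\TPrune}(N)| \;=\; \sum_{l=1}^{\max(j,2)}\bigl|\widetilde\T_l\bigr|,
\]
because in the caterpillar the parent of $\widetilde\T_l$ spans exactly $\widetilde\T_1,\dots,\widetilde\T_{\max(l,2)}$.

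The main step is the case $j\ge2$, where $|\parent_{\TPrune}(N)|=\sum_{l\le j}|\widetilde\T_l|\le\sum_{\widetilde\T\in\mathbb T,\,|\widetilde\T|\le|N|}|\widetilde\T|$, and it remains to bound this by $6k\,|N|$. I would argue cluster by cluster. The blocks contributed to $\mathbb T$ by a cluster $P_{i'}$ are the subtrees rooted at a prefix of the critical siblings $B_1^{(i')},B_2^{(i')},\dots$ (pruned top-down) together with at most one further ``remainder'' subtree rooted on the dense branch of $\T_{i'}$. Since $\T_{i'}=\T_{\deg}(G[P_{i'}])$ repeatedly halves along its dense branch, the sizes $|A_1^{(i')}|>|A_2^{(i')}|>\cdots$ are \emph{distinct powers of two}, and each sibling satisfies $|A_{j-1}^{(i')}|/2\le|B_j^{(i')}|\le|A_{j-1}^{(i')}|$. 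Hence, among the blocks from $P_{i'}$, those of size at most $|N|$ comprise: possibly $B_1^{(i')}$ (contributing $\le|N|$); possibly the remainder (contributing $\le|N|$); and some of the $B_j^{(i')}$ with $j\ge2$, each of which then has $|A_{j-1}^{(i')}|\le 2|N|$, so their sizes are bounded by distinct powers of two at most $2|N|$ and therefore sum to less than $4|N|$. Altogether the blocks from $P_{i'}$ of size $\le|N|$ have total size less than $6|N|$, and summing over the $\ell\le k$ clusters gives $\sum_{\widetilde\T\in\mathbb T,\,|\widetilde\T|\le|N|}|\widetilde\T|<6k\,|N|\le 6k\,|\parent_{\T_i}(N)|$, as required.

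The remaining obstacle — and the delicate part — is the boundary case $j=1$, i.e.\ when $\T_i[N]$ happens to be the globally smallest building block: then $|\parent_{\TPrune}(N)|=|N|+|\widetilde\T_2|$, and I must prevent $\widetilde\T_2$ from being large. Here I would exploit that $P_i$ always contributes a second block comparable to $N$. If $N=B_m$ were not the last pruned node of $P_i$, then $\T_i[B_{m+1}]$, which is strictly smaller than $\T_i[N]$, would already lie in $\mathbb T$, contradicting minimality; so the pruning of $P_i$ stops after $B_m$, and the remainder $\T_i[A_m]$, of size $|A_m|=|A_{m-1}|/2=|B_m|$, is added to $\mathbb T$. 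Consequently the second smallest block $\widetilde\T_2$ has size at most $|A_m|\le|A_{m-1}|$, giving $|\parent_{\TPrune}(N)|\le|N|+|A_{m-1}|\le 2|A_{m-1}|\le 6k\,|\parent_{\T_i}(N)|$ since $k>1$. The same reasoning, with $A_m$ replaced by the sibling of $N$ among the children of $A_{r_i}$, covers the case where $N\in\{A_{r_i+1},B_{r_i+1}\}$. Combining the two cases completes the proof.
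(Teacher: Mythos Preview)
Your approach is sound and reaches the right bound, but it differs from the paper's route and has one small gap.

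\textbf{Comparison with the paper.} The paper never separates the case $j=1$ and never enumerates the blocks of $\mathbb{T}$. Instead it observes that in each $\T_{i'}=\T_{\deg}(G[P_{i'}])$ the sizes of the critical nodes meet every dyadic interval $(2^{s-1},2^s]$ at most three times (the generic siblings $B_j^{(i')}$ with $j\ge 2$ hit each interval exactly once, and only $B_1^{(i')},A_{r_{i'}+1}^{(i')},B_{r_{i'}+1}^{(i')}$ can add extras). It then upper-bounds $|\parent_{\TPrune}(N)|$ by the sum of \emph{all} critical nodes of size at most $|N|$ across all clusters, obtaining $\sum_{j}\sum_{M\in\mathcal S_j,\,|M|\le|N|}|M|\le 12k\,|N|$, and finishes with the uniform estimate $|\parent_{\T_i}(N)|\ge 2|N|$. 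Your route --- bounding the actual blocks and splitting them per cluster into $B_1$, the remainder, and the geometric tail of $B_j$'s --- is a legitimate alternative that even gives the sharper intermediate $6k|N|$, at the price of having to treat the $j=1$ case by hand.

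\textbf{The gap.} In the boundary case $j=1$ you assert that $\T_i[B_{m+1}]$ is \emph{strictly smaller} than $\T_i[B_m]$, and later that $|A_m|=|A_{m-1}|/2=|B_m|$. Both statements fail when $m=1$: in $\T_{\deg}$ one has $|A_1|=2^{\lfloor\log_2(|P_i|-1)\rfloor}$ and $|B_1|=|P_i|-|A_1|$, so e.g.\ $|P_i|=5$ gives $|B_1|=1<2=|B_2|$, and $|A_1|\ne|B_1|$ in general. Hence you cannot conclude that pruning stops at $B_1$, and your contradiction does not fire. The repair is immediate and does not need that conclusion: whether or not pruning continues past $B_1$, the blocks coming from $P_i$ partition $P_i$, so there is always a second block from $P_i$ of size at most $|A_1|$; therefore $|\widetilde\T_2|\le|A_1|$ and $|\parent_{\TPrune}(N)|\le|B_1|+|A_1|=|A_0|=|\parent_{\T_i}(N)|\le 6k\,|\parent_{\T_i}(N)|$.
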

 \begin{proof}
    Suppose the dense branch of $\T_i$ is $(A_0, \dots, A_{k_i})$ for some $k_i \in \mathbb{Z}_{\geq 0}$, with $B_j$ being the sibling of $A_j$ and $A_{k_i}$ having children $A_{k_i +1}, B_{k_i +1}$. Recall that the set of critical nodes is $\mathcal{S}_i = \{ B_1, \dots, B_{k_i +1}, A_{k_i +1}\}$. By construction, it holds for all $1 \leq j \leq k_i$ that  $|A_j| = 2 \cdot |A_{j+1}|$, which implies that  $|B_{j + 1}| = |A_{j+1}|$ and  $|B_{j}| = 2 \cdot |B_{j+1}|$ for all $j \geq 2$. Thus, we   conclude that for every interval $\lp 2^{s-1}, 2^s\rsp$, for some $s \in \mathbb{Z}_{\geq 0}$, there are at most $3$ critical nodes\footnote{We remark that, in the worst case, all three nodes $B_1, B_{k_i + 1}$ and $A_{k_i + 1}$ could have size in $(2^{s-1}, 2^s]$. } $N \in \mathcal{S}_i$ of size $|N| \in \lp 2^{s-1}, 2^s \rsp$. 
    Now let us fix $N \in \mathcal{S}_i \cap \mathrm{\textsc{Pruned}}$. By construction, we have that
    \begin{equation} \label{eq:claim parent size eq1-new}
        \left|\parent_{\T_i}(N)\right| \geq 2 \cdot |N|.
    \end{equation}
    On the other hand, by the construction of $\TPrune$ we have that 
    \begin{align}
        \left|\parent_{\TPrune}(N)\right| 
        &= \sum_{j=1}^{\ell} \sum_{\substack{M \in \mathcal{S}_j \\ |M| \leq |N|}} |M|
        \leq \sum_{j=1}^{\ell} \sum_{s=0}^{\ceil{\log{|N|}}} \sum_{\substack{M \in \mathcal{S}_j \\ 2^{s-1} < |M| \leq 2^s}} |M| \nonumber\\
        &\leq \sum_{j=1}^{\ell} \sum_{s = 0}^{\ceil{\log{|N|}}} 3 \cdot 2^s
        \leq \sum_{j=1}^k 3 \cdot 2^{\ceil{\log{|N|}} + 1} \label{eq:claim parent size eq2-new}
        \leq 12k \cdot |N|.
    \end{align}
    By combining \eqref{eq:claim parent size eq1-new} and \eqref{eq:claim parent size eq2-new}, we have  that 
    \[
        \left|\parent_{\TPrune}(N)\right| \leq 12k \cdot |N| \leq 6k \cdot \left|\parent_{\T_i}(N)\right|,
    \]
    which proves the statement.
\end{proof}

We are now ready to prove the two main technical lemmas of this subsection.

\begin{lemma}\label{lem:cost edges inside each cluster}
    It holds that $\sum_{e \in E_1} \cost_{\TPrune}(e) = O\left( k/\phiIn^8 \right) \cdot \OPT_G$.
\end{lemma}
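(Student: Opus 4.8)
The plan is to bound the contribution of the internal edges $E_1 = \bigcup_i E[G[P_i]]$ to $\cost_{\TPrune}(e)$ by comparing it, cluster by cluster, against $\OPT_G$. Fix a cluster $P_i$. First I would split the edges of $G[P_i]$ according to whether their lowest common ancestor in $\TPrune$ lies inside the block associated with $\T_i$ (or its pruned pieces $\T_i[N]$) or higher up in the caterpillar. For an edge $e=\{u,v\}\in E[G[P_i]]$ whose endpoints both survive inside the same building block $\widetilde{\T}_j\in\mathbb{T}$, the cost in $\TPrune$ equals its cost inside that block, so $\sum$ of these is bounded by $\sum_{j}\cost_{\widetilde{\T}_j}(\cdot)$; for a building block that is the whole $\T_i$, Theorem~\ref{thm:degree} (applied to $G[P_i]$, which has conductance $\Phi_{G[P_i]} = \Omega(\phiIn^2)$ by property $(A2)$) gives $\cost_{G[P_i]}(\T_i) = O(1/\Phi_{G[P_i]}^4)\cdot\OPT_{G[P_i]} = O(1/\phiIn^8)\cdot\OPT_{G[P_i]}$, and similarly for each pruned subtree $\T_i[N]=\T_{\deg}(G[N])$ applied to $G[N]$. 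Since the $G[P_i]$ are vertex-disjoint induced subgraphs of $G$, their optima sum: $\sum_i \OPT_{G[P_i]} \leq \OPT_G$ (the restriction of $\T^*$ to each $P_i$ is a valid $\textsf{HC}$ tree of $G[P_i]$ of no greater cost). The only subtlety here is that $\OPT_{G[N]}$ for a pruned $N$ must be charged to $\OPT_{G[P_i]}$ and hence to $\OPT_G$ — since $N\subseteq P_i$ and the $G[N]$ over $N\in\textsc{Pruned}\cap\mathcal{S}_i$ are disjoint subsets of $P_i$, the same restriction argument gives $\sum_{N}\OPT_{G[N]} \leq \OPT_{G[P_i]}$.

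The remaining internal edges are those $e\in E[G[P_i]]$ whose two endpoints end up in different building blocks of $\mathbb{T}$ — equivalently, $e$ has exactly one endpoint in some pruned critical node $N$ and the other endpoint in the ``un-pruned remainder'' of $\T_i$ (or in another pruned piece of $\T_i$). Here the cost blows up: $\cost_{\TPrune}(e) \leq w_e\cdot|\parent_{\TPrune}(N)|$ where $N$ is (the shallower of the two) pruned node involved. By Lemma~\ref{lem:Parent size pruned nodes}, $|\parent_{\TPrune}(N)| \leq 6k\cdot|\parent_{\T_i}(N)|$, so the total such cost is at most
\[
6k\cdot \sum_{N\in\textsc{Pruned}} |\parent_{\T_i}(N)|\cdot w(N, P_i\setminus N),
\]
since each such $e$ lies in $E(N, P_i\setminus N)$ for the appropriate pruned $N$, and summing $w(N,\cdot)$ over pruned $N$ double-counts each edge at most twice. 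Now $w(N,P_i\setminus N) \leq \vol_{G[P_i]}(N)$ trivially, so this is at most $6k\cdot\sum_{N}|\parent_{\T_i}(N)|\cdot\vol_{G[P_i]}(N)$, and by Lemma~\ref{lem:Cost light nodes} (the lower bounds on $\cost$ via the dense branch, applied to $\T_i$ inside $G[P_i]$ with conductance $\Omega(\phiIn^2)$) this sum is $O(1/\phiIn^2)\cdot\cost_{G[P_i]}(\T_i)$. Combining with the Theorem~\ref{thm:degree} bound on $\cost_{G[P_i]}(\T_i)$ above yields $O(k/\phiIn^{10})\cdot\OPT_{G[P_i]}$ for this piece — which I'd then need to reconcile with the claimed $O(k/\phiIn^8)$; I expect the sharper accounting is that the $O(1/\phiIn^2)$ factor from Lemma~\ref{lem:Cost light nodes} should be absorbed because the pruning condition (Line~\ref{alg:PruneMerge:line:if-condition}) is precisely designed so that un-pruned parts already dominate the relevant crossing cost, so the factor that actually appears is $O(1/\phiIn^8)$ from Theorem~\ref{thm:degree} alone, with the extra $k$ coming from Lemma~\ref{lem:Parent size pruned nodes}.

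The main obstacle, then, is the bookkeeping around pruned nodes: one must verify that (i) every internal edge is charged exactly once (up to a constant) to either a within-block term or a crossing-block term, (ii) the optima $\OPT_{G[N]}$ and $\OPT_{G[P_i]}$ telescope correctly into $\OPT_G$ via vertex-disjointness and the restriction-of-$\T^*$ argument, and (iii) the dependence on $\phiIn$ stays at $\phiIn^{-8}$ rather than degrading — which hinges on using the pruning criterion itself, not just the crude bound $w(N,\cdot)\le\vol(N)$, to control the crossing-block edges. I would structure the final write-up as: a preliminary claim that $\sum_i(\OPT_{G[P_i]} + \sum_{N\in\textsc{Pruned}\cap\mathcal{S}_i}\OPT_{G[N]}) \leq \OPT_G$; then the within-block bound via Theorem~\ref{thm:degree}; then the crossing-block bound via Lemma~\ref{lem:Parent size pruned nodes} combined with the pruning invariant; and finally summing the three.
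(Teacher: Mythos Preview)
Your proposal has the right overall architecture but contains a genuine gap and an unnecessary detour, both of which the paper avoids with a single cleaner observation.

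\textbf{The gap.} For the within-block contribution from a pruned piece $\T_i[N]$, you write ``similarly for each pruned subtree $\T_i[N]=\T_{\deg}(G[N])$ applied to $G[N]$'' and invoke Theorem~\ref{thm:degree}. Neither hypothesis holds in general: the subtree $\T_i[N]$ is built from the degree ordering in $G[P_i]$, not in $G[N]$, so it need not equal $\T_{\deg}(G[N])$; and more seriously, nothing in Properties~$(A1)$--$(A3)$ guarantees that the induced graph $G[N]$ on a single critical node has conductance $\Omega(\phiIn^2)$. So Theorem~\ref{thm:degree} simply does not apply to the pruned blocks, and the auxiliary claim $\sum_N \OPT_{G[N]} \leq \OPT_{G[P_i]}$ is doing no work.

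\textbf{The detour and the missed observation.} For the crossing-block edges you pass through $w(N,P_i\setminus N)\le \vol_{G[P_i]}(N)$ and then Lemma~\ref{lem:Cost light nodes}, landing at $O(k/\phiIn^{10})$. You correctly sense this is loose. The fix --- which also dissolves the gap above --- is to compare $\cost_{\TPrune}(e)$ directly to $\cost_{\T_i}(e)$ edge by edge, never invoking Theorem~\ref{thm:degree} on sub-blocks at all. Concretely: if $e$ lies inside a single critical node, or between two unpruned critical nodes, then the relevant subtree of $\T_i$ survives intact inside $\TPrune$ and $\cost_{\TPrune}(e)=\cost_{\T_i}(e)$. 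If $e\in E(N,M)$ with $N$ pruned and $|\parent_{\T_i}(M)|\le|\parent_{\T_i}(N)|$, then the LCA of the endpoints in $\T_i$ is $\parent_{\T_i}(N)$, so $\cost_{\T_i}(e)=w_e\cdot|\parent_{\T_i}(N)|$; combined with Lemma~\ref{lem:Parent size pruned nodes} this gives $\cost_{\TPrune}(e)\le w_e\cdot|\parent_{\TPrune}(N)|\le 6k\cdot w_e\cdot|\parent_{\T_i}(N)| = 6k\cdot\cost_{\T_i}(e)$. Hence
\[
\sum_{e\in E[G[P_i]]}\cost_{\TPrune}(e)\ \le\ 6k\cdot \cost_{G[P_i]}(\T_i),
\]
and now a \emph{single} application of Theorem~\ref{thm:degree} to $G[P_i]$ (where the conductance bound $(A2)$ is available) yields $\cost_{G[P_i]}(\T_i)=O(1/\phiIn^8)\cdot\OPT_{G[P_i]}$. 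Summing over $i$ and using your (correct) restriction argument $\sum_i \OPT_{G[P_i]}\le\OPT_G$ finishes the proof with the claimed $O(k/\phiIn^8)$. No use of Lemma~\ref{lem:Cost light nodes}, no $\OPT_{G[N]}$ terms, and no pruning-criterion invariant is needed for the $E_1$ edges.
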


\begin{proof}
    Notice that   
    \begin{equation*}\label{eq:lem:cost edges inside each cluster eq1}
        \sum_{e \in E_1} \cost_{\TPrune}(e) = \sum_{i=1}^{\ell} \sum_{e \in E[G[P_i]]} \cost_{\TPrune}(e). 
    \end{equation*}
    
    We will prove that, for every $1\leq i\leq\ell$ and $e\in E[G[P_i]]$, the cost of $e$ in $\T_G$ and the one in $\T_i$ differ by at most a factor of $O(k)$. Combining this with Theorem~\ref{thm:degree}  will prove the lemma.
    
    To prove this $O(k)$-factor bound, we fix any $1\leq i\leq \ell$ and let $\mathcal{S}_i$ be the set of critical nodes of $P_i$. As the nodes of $\mathcal{S}_i$ form a partition of the vertices of $G[P_i]$, any edge $e\in E[G[P_i]]$  satisfies exactly one of the following conditions: (i) $e$ is inside a critical node; (ii) $e$ is adjacent to a pruned node; or (iii) $e$ crosses two unpruned nodes. Formally, it holds that 
    
    \begin{align*}
         \sum_{e \in E[G[P_i]]} \cost_{\TPrune}(e) =  
        &\sum_{\substack{N \in \mathcal{S}_i \\ e \in E(N,N)}} \cost_{\TPrune}(e) 
        + \sum_{\substack{N \in \mathcal{S}_i \cap \mathrm{\textsc{Pruned}} \\ M \in \mathcal{S}_i \setminus \{N\} \\ \left|\parent_{\T_i}(M) \right| \leq \left|\parent_{\T_i}(N) \right|}} \sum_{e \in E(N, M)} \cost_{\TPrune}(e) 
        \\
       &\qquad \qquad+ \sum_{\substack{N, M \in \mathcal{S}_i \cap \mathrm{\textsc{Unpruned}} \\ N \neq M}} \sum_{e \in E(N, M)} \cost_{\TPrune}(e)
    \end{align*}
    
    For Cases~(i)~and~(iii), the costs of  $e$ in both trees are the same, since we do not change the structure of the tree inside any critical node nor alter the inner structure of the pruned trees $\T_i$ that contain only unpruned nodes, i.e.,
    \begin{align*}
        \sum_{\substack{N \in \mathcal{S}_i \\ e \in E(N,N)}} \cost_{\TPrune}(e) &= \sum_{\substack{N \in \mathcal{S}_i \\ e \in E(N,N)}} \cost_{\T_i}(e), \\
        \sum_{\substack{N, M \in \mathcal{S}_i \cap \mathrm{\textsc{Unpruned}} \\ N \neq M}} \sum_{e \in E(N, M)} \cost_{\TPrune}(e) &=  \sum_{\substack{N, M \in \mathcal{S}_i \cap \mathrm{\textsc{Unpruned}} \\ N \neq M}} \sum_{e \in E(N, M)} \cost_{\T_i}(e).
    \end{align*}
  For Case~(ii), the cost of any such edge increases by at most a factor of  $O(k)$  due to Lemma~\ref{lem:Parent size pruned nodes} and the construction of $\TPrune$. 
%   \orange{Formally, we have that
    % \begin{align}
    %      \sum_{\substack{N \in \mathcal{S}_i \cap \mathrm{\textsc{Pruned}} \\ M \in \mathcal{S}_i \setminus \{N\} \\ |M| \leq |N|}} \sum_{e \in E(N, M)} \cost_{\TPrune}(e) 
    %     &=\sum_{\substack{N \in \mathcal{S}_i \cap \mathrm{\textsc{Pruned}} \\ M \in \mathcal{S}_i \setminus \{N\} \\ |M| \leq |N|}} \sum_{e \in E(N, M)} w_e \cdot \left | \parent_{\TPrune}(N)\right | \nonumber\\
    %     &\leq\sum_{\substack{N \in \mathcal{S}_i \cap \mathrm{\textsc{Pruned}} \\ M \in \mathcal{S}_i \setminus \{N\} \\ |M| \leq |N|}} \sum_{e \in E(N, M)} 6k \cdot w_e \cdot \left | \parent_{\T_i}(N)\right | \label{eq:lem:k-expanders cost internal edges pruned nodes}\\
    %     &= 6k \cdot \sum_{\substack{N \in \mathcal{S}_i \cap \mathrm{\textsc{Pruned}} \\ M \in \mathcal{S}_i \setminus \{N\} \\ |M| \leq |N|}} \sum_{e \in E(N, M)} \cost_{\T_i}(e) \nonumber
    % \end{align}
    % where  \eqref{eq:lem:k-expanders cost internal edges pruned nodes} follows by Lemma~\ref{lem:Parent size pruned nodes}.}
    Formally, let $N \in \mathcal{S}_i \cap \textsc{Pruned}$ be an arbitrary pruned node and let $M \in \mathcal{S}_i \setminus\{N\}$ be a critical node such that $\left|\parent_{\T_i}(M) \right| \leq \left|\parent_{\T_i}(N) \right|$. Firstly, notice that if $\parent_{\T_i}(N)$ is the root node of $\T_i$, then for any edge $e \in E(N, M)$ it holds that $\cost_{\T_i}(e) = w_e \cdot |\T_i|.$ On the other side, by the construction of $\TPrune$, we know that $\cost_{\TPrune}(e) \leq 6k \cdot w_e \cdot |\T_i|$, so we conclude that $\cost_{\TPrune}(e) \leq 6k \cdot \cost_{\T_i}(e)$. Secondly, if $\parent_{\T_i}(N)$ is not the root node of $\T_i$ and since $\left|\parent_{\T_i}(M) \right| \leq \left|\parent_{\T_i}(N) \right|$, we know that $|M| \leq |N|$. Therefore it holds for any edge $e \in E(N, M)$ that
    \[
        \cost_{\TPrune}(e) 
        = w_e \cdot \left | \parent_{\TPrune}(N)\right |
        \leq 6k \cdot w_e \cdot \left | \parent_{\T_i}(N)\right |
        = 6k \cdot \cost_{\T_i}(e),
    \]
    where the inequality follows by Lemma~\ref{lem:Parent size pruned nodes}.
    Combining the above observations, we have  that 
    \begin{equation}\label{eq:lem:cost edges inside each cluster eq3-new}
        \sum_{e \in E_1} \cost_{\TPrune}(e)
        \leq \sum_{i=1}^{\ell} 6k \cdot \sum_{e \in E[G[P_i]]} \cost_{\T_i}(e)
        \leq \sum_{i=1}^{\ell} 6k \cdot \cost(\T_i).
    \end{equation}
    On the other side, let $\T^*$ be any optimal \textsf{HC} tree of $G$ with cost $\OPT_G$, and it holds that 
    \begin{align}\label{eq:lem:cost edges inside each cluster eq4-new}
        \OPT_G &= \cost_G(\T^*) 
        \geq \sum_{i=1}^{\ell} \sum_{e \in E(G[P_i])} \cost_{\T^*}(e)\geq \sum_{i=1}^{\ell} \OPT_{G[P_i]}
       \nonumber \\
        &= \sum_{i=1}^{\ell} \Omega \lp \Phi_{G[P_i]}^4\rp \cdot \cost(\T_i) 
        = \sum_{i=1}^{\ell} \Omega \lp \phiIn^8 \rp \cdot \cost(\T_i),
    \end{align}
    where the last equality follows by Property~$(A2)$  of Lemma~\ref{lem:Improved Decomposition} and Theorem~\ref{thm:degree} applied to every $G[P_i]$. Finally, by combining  \eqref{eq:lem:cost edges inside each cluster eq3-new} and \eqref{eq:lem:cost edges inside each cluster eq4-new} we have that 
    \[
        \sum_{e \in E_1} \cost_{\TPrune}(e) = O(k/\phiIn^8) \cdot \OPT_G,
    \]
    which proves the lemma.
\end{proof}

\begin{lemma}\label{lem:cost crossing edges}
It holds that     $\sum_{e \in E_2} \cost_{\TPrune}(e) = O\left( k^2/\phiIn^{10} \right) \cdot \OPT_G$.
\end{lemma}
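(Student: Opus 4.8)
The plan is to express $\cost_{\TPrune}(e)$ for each crossing edge in terms of the sizes of the two trees of $\mathbb{T}$ containing its endpoints, and then charge those sizes to the internal costs $\cost_{G[P_i]}(\T_i)$ via property~$(A3)$ of Lemma~\ref{lem:Improved Decomposition} and the \texttt{if}-condition of Algorithm~\ref{alg:PruneMerge}. Since the critical nodes $\bigcup_{i=1}^{\ell}\mathcal{S}_i$ partition $V$, every $e=\{u,v\}\in E_2$ has $u\in N$ and $v\in M$ for critical nodes $N\in\mathcal{S}_i$, $M\in\mathcal{S}_j$ with $i\neq j$, so $u$ and $v$ lie in \emph{different} trees of $\mathbb{T}$; write $R(x)\in\mathbb{T}$ for the tree containing $x$, so that $R(u)$ is either the pruned subtree $\T_i[N]$ (if $N$ is pruned) or the residual tree of cluster $i$ added to $\mathbb{T}$ in the \texttt{if}-branch (if $N$ is unpruned), and similarly for $R(v)$. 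Because $\TPrune$ is the caterpillar built over $\mathbb{T}$ in order of increasing size, the lowest common ancestor $u\vee v$ sits on the caterpillar spine just above both $R(u)$ and $R(v)$, whence $|\leaves(\TPrune[u\vee v])|$ equals the sum of the sizes of all trees of $\mathbb{T}$ of size at most $\max\{|R(u)|,|R(v)|\}$.

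The first step I would carry out is to show this sum is $O(k)\cdot\max\{|R(u)|,|R(v)|\}$. Arguing as in the proof of Lemma~\ref{lem:Parent size pruned nodes}, inside each cluster $P_i$ the critical nodes have sizes that are powers of two apart from the single node $B_1$, with at most three of them in any dyadic range $(2^{s-1},2^s]$; the residual trees add at most one more tree per cluster; and $\ell\le k$. Hence $\mathbb{T}$ contains $O(k)$ trees in each dyadic size range, so $\sum_{T'\in\mathbb{T},\,|T'|\le X}|T'| = O(kX)$. It follows that $\cost_{\TPrune}(e)\le O(k)\cdot w_e\cdot(|R(u)|+|R(v)|)$ for all $e\in E_2$, and regrouping the sum over crossing edges by critical node gives
\[
\sum_{e\in E_2}\cost_{\TPrune}(e)\;\le\;O(k)\cdot\sum_{i=1}^{\ell}\sum_{N\in\mathcal{S}_i}|R(N)|\cdot w(N,V\setminus P_i),
\]
where $|R(N)|$ denotes the size of the tree of $\mathbb{T}$ containing $N$.

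The second step bounds the inner double sum. Split $\mathcal{S}_i$ into pruned and unpruned critical nodes. For a pruned $N$, $|R(N)|=|N|\le|\parent_{\T_i}(N)|$, and $(A3)$ gives $w(N,V\setminus P_i)\le 6(k+1)\vol_{G[P_i]}(N)$; for an unpruned $N$, $|R(N)|\le n$, and the \texttt{if}-condition that held when the residual tree was added to $\mathbb{T}$ is exactly $n\sum_{N\ \mathrm{unpruned}}w(N,V\setminus P_i)\le 6(k+1)\sum_{N\ \mathrm{unpruned}}|\parent_{\T_i}(N)|\vol_{G[P_i]}(N)$ (where the parents are taken in the residual tree, whose dense-branch nodes have the same sizes as in the original $\T_i$; clusters that were completely pruned contribute nothing). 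Adding the two parts,
\[
\sum_{N\in\mathcal{S}_i}|R(N)|\,w(N,V\setminus P_i)\;\le\;6(k+1)\sum_{N\in\mathcal{S}_i}|\parent_{\T_i}(N)|\cdot\vol_{G[P_i]}(N),
\]
and since the parents $\parent_{\T_i}(N)$, $N\in\mathcal{S}_i$, are exactly the dense-branch nodes $A_0,\dots,A_{r_i}$ of $\T_i$ (with $B_{r_i+1},A_{r_i+1}$ sharing the parent $A_{r_i}$ and $\vol_{G[P_i]}(B_{r_i+1})+\vol_{G[P_i]}(A_{r_i+1})=\vol_{G[P_i]}(A_{r_i})$), both parts of Lemma~\ref{lem:Cost light nodes} applied to $\T_i$ on $G[P_i]$ bound the right-hand side by $O(1/\Phi_{G[P_i]})\,\cost_{G[P_i]}(\T_i)=O(1/\phiIn^{2})\,\cost_{G[P_i]}(\T_i)$, using $\Phi_{G[P_i]}=\Omega(\phiIn^{2})$ from~$(A2)$. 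Combining with the first step, $\sum_{e\in E_2}\cost_{\TPrune}(e)=O(k^{2}/\phiIn^{2})\sum_{i}\cost_{G[P_i]}(\T_i)$, and since \eqref{eq:lem:cost edges inside each cluster eq4-new} already gives $\sum_i\cost_{G[P_i]}(\T_i)=O(1/\phiIn^{8})\,\OPT_G$, we obtain the claimed bound $O(k^{2}/\phiIn^{10})\,\OPT_G$.

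I expect the main obstacle to be the first step, i.e.\ the estimate $\cost_{\TPrune}(e)=O(k)\,w_e\max\{|R(u)|,|R(v)|\}$: it requires both a careful description of which tree of $\mathbb{T}$ each endpoint of a crossing edge lands in (a pruned critical subtree versus the caterpillar-merged residual piece of its cluster) together with the location of their common ancestor on the spine, and the dyadic counting showing $\mathbb{T}$ has only $O(k)$ trees of each size scale, which recycles the structural facts about $\T_{\deg}$ and critical nodes from Lemma~\ref{lem:Parent size pruned nodes}. A secondary subtlety is the bookkeeping when a cluster's tree is pruned several times: one must verify that its unpruned critical nodes are precisely the critical nodes of the final residual tree and that their parents there have the same sizes as in the original $\T_i$, so that Lemma~\ref{lem:Cost light nodes} (phrased for the original $\T_i$) and the algorithm's \texttt{if}-condition can be applied to the same quantities.
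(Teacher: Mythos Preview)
Your proposal is correct and follows essentially the same route as the paper: split the crossing-edge cost over critical nodes, handle pruned nodes via property~$(A3)$ and unpruned nodes via the \texttt{if}-condition in Line~\ref{alg:PruneMerge:line:if-condition}, then apply Lemma~\ref{lem:Cost light nodes} and Theorem~\ref{thm:degree} (equivalently, \eqref{eq:lem:cost edges inside each cluster eq4-new}) to pass to $\OPT_G$.

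The one genuine difference is in your first step. The paper charges each crossing edge to the \emph{larger} critical node $N$ (i.e.\ the one with $|M|\le |N|$) and, when $N$ is pruned, bounds $\cost_{\TPrune}(e)\le |\parent_{\TPrune}(N)|\,w_e$ directly, invoking Lemma~\ref{lem:Parent size pruned nodes}. You instead establish the symmetric estimate $\cost_{\TPrune}(e)\le O(k)\,w_e\,(|R(u)|+|R(v)|)$ via a global dyadic count over all trees in $\mathbb{T}$. Your variant is arguably more robust: the paper's direct bound tacitly needs the other endpoint's tree $R(M)$ to be merged no later than $R(N)$ in the caterpillar, which is immediate when $M$ is pruned ($|R(M)|=|M|\le|N|$) but requires a word when $M$ is unpruned and sits inside a residual tree of size larger than $|N|$. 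Your approach sidesteps this entirely at the price of proving the $O(k)$-per-dyadic-range counting for $\mathbb{T}$ (pruned critical subtrees plus at most one residual tree per cluster), which you outline correctly using the same structural facts as in Lemma~\ref{lem:Parent size pruned nodes}. The secondary bookkeeping you flag---that the unpruned critical nodes of a cluster are exactly the critical nodes of the residual tree, with parents of the same size as in the original $\T_i$---is indeed routine since the residual tree is a subtree of $\T_i$.
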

\begin{proof}
    For the edges $e \in E_2$, we can subdivide them into (i) edges adjacent to pruned nodes and (ii) edges adjacent to unpruned nodes. For Case~(i), we will bound the cost with the help of Lemma~\ref{lem:Parent size pruned nodes} similar as before. For Case~(ii) we will upper bound the cost based on the fact that, for unpruned nodes, the condition in Line~\ref{alg:PruneMerge:line:if-condition} of Algorithm~\ref{alg:PruneMerge} is satisfied. Specifically, we have that 
    \begin{align}
        &\sum_{e \in E_2} \cost_{\TPrune}(e) \nonumber 
        \leq \sum_{i=1}^{\ell} \sum_{\substack{N \in \mathcal{S}_i \\ M \in \mathcal{S} \setminus \mathcal{S}_i \\ |M| \leq |N|}} \sum_{\substack{e \in E(N, M)}} \cost_{\TPrune}(e) \nonumber\\
        &\leq \sum_{i=1}^{\ell} \sum_{\substack{N \in \mathcal{S}_i \cap \mathrm{\textsc{Unpruned}}\\ M \in \mathcal{S} \setminus \mathcal{S}_i \\ |M| \leq |N|}} \sum_{e \in E(N, M)} \cost_{\TPrune}(e)
        + \sum_{i=1}^{\ell} \sum_{\substack{N \in \mathcal{S}_i \cap \mathrm{\textsc{Pruned}}\\ M \in \mathcal{S} \setminus \mathcal{S}_i \\ |M| \leq |N|}} \sum_{e \in E(N, M)} \cost_{\TPrune}(e) \nonumber \\
        &\leq \sum_{i=1}^{\ell} \sum_{N \in \mathcal{S}_i \cap \mathrm{\textsc{Unpruned}}} n \cdot w(N, V \setminus P_i)
        + \sum_{i=1}^{\ell} \sum_{N \in \mathcal{S}_i \cap \mathrm{\textsc{Pruned}}}  \left| \parent_{\TPrune}(N)\right| \cdot w(N, V\setminus P_i) \nonumber\\
        &\leq \sum_{i=1}^{\ell} \sum_{N \in \mathcal{S}_i \cap \mathrm{\textsc{Unpruned}}} 6 (k+1) \cdot |\parent_{\T_i}(N)|\cdot \vol_{G[P_i]}(N) \label{eq:lem:cost crossing edges eq1}\\
        & \qquad + \sum_{i=1}^{\ell} \sum_{N \in \mathcal{S}_i \cap \mathrm{\textsc{Pruned}}}  36 k(k+1) \cdot\left| \parent_{\T_i}(N)\right| \cdot \vol_{G[P_i]}(N) \label{eq:lem:cost crossing edges eq2}\\
        & \leq 36 k(k+1) \cdot \sum_{i=1}^{\ell} \sum_{N \in \mathcal{S}_i} \left| \parent_{\T_i}(N)\right| \cdot \vol_{G[P_i]}(N) \nonumber\\
        & \leq 36k(k+1) \cdot \sum_{i=1}^{\ell} \frac{4}{\Phi_{G[P_i]}} \cdot \cost_{G[P_i]}(\T_i) \label{eq:lem:cost crossing edges eq3}\\
        &= O(k^2) \cdot \sum_{i=1}^{\ell} \frac{1}{\Phi_{G[P_i]}^5} \cdot \OPT_{G[P_i]} \label{eq:lem:cost crossing edges eq4}\\
        & = O(k^2/\phiIn^{10}) \cdot \OPT_G, \nonumber
    \end{align}
    where \eqref{eq:lem:cost crossing edges eq1} follows by fact that the   unpruned nodes satisfy   the \texttt{if}-condition in Line~\ref{alg:PruneMerge:line:if-condition} of Algorithm~\ref{alg:PruneMerge}, \eqref{eq:lem:cost crossing edges eq2} follows from Property $(A3)$ of Lemma~\ref{lem:Improved Decomposition} and Lemma~\ref{lem:Parent size pruned nodes}, \eqref{eq:lem:cost crossing edges eq3} follows by Lemma~\ref{lem:Cost light nodes}, and \eqref{eq:lem:cost crossing edges eq4} follows by Theorem~\ref{thm:degree} applied to every induced subgraph $G[P_i]$. 
\end{proof}

Finally, we are ready to prove  Theorem~\ref{thm:main_k_clusters}.

\begin{proof}[Proof of Theorem~\ref{thm:main_k_clusters}]
    Let $\TPrune$ be the \textsf{HC} tree obtained from Algorithm~\ref{alg:PruneMerge}. We have that
    \begin{align*}
        \lefteqn{\cost(\TPrune)}\\
        &= \sum_{e \in E_1} \cost_{\TPrune}(e) + \sum_{e \in E_2} \cost_{\TPrune}(e)\\
        &= O(k/\phiIn^8) \cdot \OPT_G + O(k^2/\phiIn^{10}) \cdot \OPT_G
        = O(k^2/\phiIn^{10}) \cdot \OPT_G= O(k^{22}/\lambda_{k+1}^{10}) \cdot \OPT_G,
    \end{align*}
    where   the second equality follows by Lemmas~\ref{lem:cost edges inside each cluster} and \ref{lem:cost crossing edges}, and the last equality follows by the definition of $\phiIn$.
    
    Next we  analyse the runtime of our algorithm. The \textsf{Partition} phase~(Lines~\ref{alg:PruneMerge:line:begin partition}--\ref{alg:PruneMerge:line:end partition}) runs in polynomial time by Lemma~\ref{lem:Improved Decomposition}. In the \textsf{Prune} phase~(Lines~\ref{alg:PruneMerge:line:begin prune}--\ref{alg:PruneMerge:line:end prune}), the algorithm goes through all $\ell \leq k$ trees $\T_i$, and for each tree the algorithm attempts to prune the critical node closest to the root. If the pruning happens,  the process is repeated recursively to the pruned tree. Since there are $O(\log n)$ critical nodes for each tree $\T_i$ by construction,  the \texttt{if}-condition in Line~\ref{alg:PruneMerge:line:if-condition} of Algorithm~\ref{alg:PruneMerge} will be checked    $O(\log n)$  number of times  for each tree $\T_i$.  Hence  we conclude that the \textsf{Prune} phase of the algorithm runs in polynomial time. Finally, in the \textsf{Merge} phase~(Lines~\ref{alg:PruneMerge:line:begin merge}--\ref{alg:PruneMerge:line:end merge}) the algorithm goes through the global set of trees $\mathbb{T}$, and successively merges the trees $\widetilde{\T_i} \in \mathbb{T}$ to form the final tree $\TPrune$. Since $|\mathbb{T}| = O(\ell \cdot \log n)$, we conclude that this step will be finished in polynomial time.
\end{proof}

\subsection{Proof of Lemma~\ref{lem:Improved Decomposition}}\label{sec:Proof of Improved Decomp Lemma}

Finally, we finish the section by presenting the proof of Lemma~\ref{lem:Improved Decomposition}.   We first describe the underlying algorithm and show a sequence of claims, which are used to prove Lemma~\ref{lem:Improved Decomposition} in the end of the subsection. At the very high level, our  algorithm for computing a stronger decomposition of a   well-clustered graph can be viewed as an extension to Algorithm~3 in \cite{GT14}, whose main idea 
 can be summarised as follows: the algorithm starts with the trivial $1$-partition of $G$, i.e.,  $P_1 = V$; in every iteration, the algorithm applies  the \SpecPart algorithm for every graph in   $\{G[P_i]\}_{i=1}^{r}$, and tries to find a sparse cut $(S, P_i \setminus S)$ for some  $S \subset P_i$.\footnote{We will denote by $r$ the number of clusters in the current run of the algorithm, and   denote by $\ell$ the final number of clusters output by the algorithm.}   
\begin{itemize}
    \item If such a cut is found, the algorithm uses this cut to either introduce a new partition set $P_{r+1}$ of small conductance, or   refine the current  partition  $\{P_i\}_{i=1}^r$;
    
    \item  If no such cut is found, the algorithm checks if it is possible to perform a local refinement of the partition sets $\{P_i\}_{i=1}^r$ in order to reduce the overall weight of the crossing edges, i.e. $\sum_{i \neq j} w(P_i, P_j)$. If such a refinement is not possible, the algorithm terminates and outputs the current partition;  otherwise, the partition sets are locally refined and the process is repeated. 
\end{itemize}
The output of the algorithm is guaranteed to  satisfy Properties $(A1)$ and $(A2)$  of Lemma~\ref{lem:Improved Decomposition}.

Our improved  analysis will show that  Property~$(A3)$ holds as well, and this will be proven  with  the two additional Properties  $(A4)$ and $(A5)$ stated later. 
We begin our analysis by setting the notation, most of which follows from \cite{GT14}. 
We write $\{P_i\}_{i=1}^r$ as  a partition of $V$ for some integer $r\geq 1$, i.e., $P_i\cap P_j=\emptyset$ for $i\neq j$, and $\cup_{i=1}^r P_i=V $. In addition,  every  partition set  $P_i$ contains some \emph{core set} denoted by $\mathrm{core}(P_i)\subseteq P_i$.  For an arbitrary subset $S \subset P_i$,  we define $S^+ \triangleq S\cap \mathrm{core}(P_i)$, and $S^- \triangleq S\setminus S^+$. 
We further define  $\overline{S^+}\triangleq \mathrm{core}(P_i)\setminus S$, and $\overline{S^-}\triangleq P_i\setminus (S \cup \mathrm{core}(P_i))$,  as illustrated in Figure~\ref{fig:Partition_Explanatory}. Note that $\{S^{+}, \overline{S^{+}}\}$ forms a partition of $\mathrm{core}(P_i)$, and $\{S^{-}, \overline{S^{-}}\}$ forms a partition of $P_i \setminus \mathrm{core}(P_i)$.  For any sets $S, T \subseteq V$ which are not necessarily disjoint,  we write 
    \[
        w(S \rightarrow T) \triangleq w(S, T \setminus S).
    \]
    For any subsets $S \subseteq P \subseteq V$, we follow \cite{GT14} and define the \emph{relative conductance} as
    \[
        \varphi(S, P) \triangleq \frac{w(S \rightarrow P)}{\frac{\vol(P \setminus S)}{\vol(P)} \cdot w(S \rightarrow V\setminus P)},
    \]
    whenever the right hand side is defined and otherwise we set $\varphi(S, P) = 1$.
  To explain the meaning of $\varphi(S, P)$, suppose that $P\subset V$ is the vertex set such that $\Phi_G(P)$ is low and $\Phi_{G[P]}$ is high, i.e., $P$ is a cluster. Then, we know that most of the  subsets $S\subset P$ with $\vol(S)\leq \vol(P)/2$ satisfy the following properties:
\begin{itemize}
    \item Since $\Phi_{G[P]}(S)$ is high, a large fraction of the edges adjacent to vertices in $S$ would leave $S$;
    \item Since $\Phi_G(P)$ is low, a small fraction of edges adjacent to $S$ would leave $P$.
\end{itemize}
Combining the above observations, one could conclude that $w(S \rightarrow P) \gtrsim w(S \rightarrow V \setminus P)$ if $P$ is a good cluster, which means that $\varphi(S, P)$ is lower bounded by a constant. Moreover, Gharan and Trevisan~\cite{GT14} showed a converse of this fact:  if $\varphi(S, P)$ is large for all $S \subset P$, then $P$ has high inner conductance. These facts suggest that the relative conductance provides a good quantitative measure for the quality of a cluster.

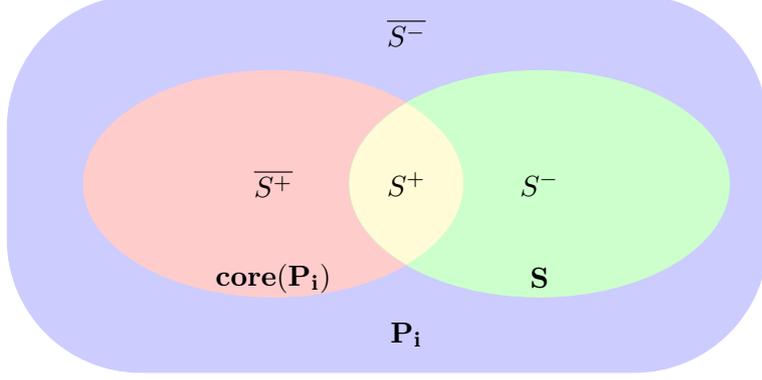
\begin{figure}
    \centering
    \def \P{((0, 0) rectangle (10, 5)}
    \def \Core{(3.5, 2.5) ellipse (2.5cm and 1.5cm)}
    \def \S{(7, 2.5) ellipse (2.5cm and 1.5cm)}

    \begin{tikzpicture}
        \filldraw[blue!20, rounded corners=50pt] \P node[black] {};
        \filldraw[red!20] \Core node[black] {$\overline{S^{+}}$};
        \filldraw[green!20] \S node[black] {$S^-$};
        \begin{scope}
            \clip \Core;
            \fill[yellow!20] \S;
        \end{scope}
        
        \draw node[black] at (5.25, 0.5) {$\mathbf{P_i}$};
        \draw node[black] at (5.25, 4.5) {$\overline{S^-}$};
        \draw node[black] at (5.25, 2.5) {$S^+$};
        \draw node[black] at (7, 1.25) {$\mathbf{S}$};
        \draw node[black] at (3.5, 1.25) {$\mathrm{\textbf{core}} \mathbf{(P_i)}$};
    \end{tikzpicture}
    
    \caption{ Illustration of  different subsets of  $P_i$. The left ellipsoid represents the core set $\mathrm{core}(P_i)$, and the right ellipsoid represents the subset $S$. The subset $S$ is partitioned into $\{S^+, S^-\}$, where $S^+ \triangleq S \cap \mathrm{core}(P_i)$; the set $P_i \setminus S$ is partitioned into $\{\overline{S^+}, \overline{S^-}\}$, where $\overline{S^+} \triangleq \mathrm{core}(P_i) \setminus S$. }
    \label{fig:Partition_Explanatory}
\end{figure}

 Now we explain the high-level idea of the proposed algorithm, and refer the reader to Algorithm~\ref{alg:Strong Decomposition} for the formal description. Our algorithm starts with the partitioning algorithm~(Algorithm~3 in \cite{GT14}), and obtains an   intermediate partition $\{P_i\}_{i=1}^r$~(Lines~\ref{alg:while line}--\ref{alg:line:update if-5}).  For every $P_i~(1\leq i\leq r)$, the algorithm computes the induced trees $\T_i = \T_{\deg}(G[P_i])$~(Line~\ref{alg:line:compute trees}).  For every tree $\T_i~(1\leq i\leq r)$ with  the corresponding set of critical nodes $\mathcal{S}_i$, the algorithm further checks if the following conditions are satisfied:  
\begin{enumerate}[label=$(A\arabic*)$]
    \setcounter{enumi}{3}
    \item  For every critical node $N \in \mathcal{S}_i$ with $\vol(N^+) \leq \vol(\core(P_i))/2$, it holds that $$\varphi(N^+, \core(P_i)) \geq \frac{1}{3(k+1)};$$ 
    \item  For every critical node $N \in \mathcal{S}_i$ with  $\vol(N^-) \leq \vol(P_i)/2$, it holds that $$w(N^- \rightarrow P_i) \geq w(N^- \rightarrow V \setminus P_i) \cdot \frac{1}{k+1}.$$
\end{enumerate}
If $(A4)$ is violated by some critical node $N \in \mathcal{S}_i$ for some $i$, then the algorithm
uses  $N$  to refine the core set $\mathrm{core}(P_i)$~(Line~\ref{alg:line:update if-7}). If $(A5)$ is not satisfied, then the algorithm further refines the partition~(Line~\ref{alg:line:update if-8}).  The algorithm repeats this local refinement process until no such update is found anymore. 
In the following analysis, we set
\[
    \rho^* \triangleq \min \left\{ \frac{\lambda_{k+1}}{10}, 30 c_0 \cdot (k+1)^5 \cdot \sqrt{\lambda_k}\right\},
\]
where $c_0$ is the constant   specified in Lemma~\ref{lem:Upperbound eig induced graphs}, and
\begin{equation}\label{eq:PhiIn PhiOut Choices}
    \phiIn \triangleq \frac{\lambda_{k+1}}{140(k+1)^2}, \qquad
    \phiOut \triangleq 90 c_0 \cdot (k+1)^6 \sqrt{\lambda_k}.
\end{equation}
Notice that, by assuming $\lambda_k < \lp \frac{1}{270 \cdot c_0 \cdot (k+1)^6} \rp^2$ in Lemma~\ref{lem:Improved Decomposition}, it holds that $\phiOut < 1/3$. This fact will be used several times in our analysis.

\begin{algorithm}[H]
    \KwInput{$G = (V, E, w)$, $k > 1$ such that $\lambda_{k+1} > 0$;}
    \KwOutput{A $\lp \phiIn^2/4, \phiOut \rp$ $\ell$-partition $\{P_i\}_{i=1}^{\ell}$ of $G$ satisfying $(A1) - (A3)$, for some $\ell \leq k$;}
  
	Let $r = 1$, $\core(P_1) = P_1 = V$;
	
	Let $\phiIn = \frac{\lambda_{k+1}}{140(k+1)^2}$,  and 
    $\phiOut = 90 c_0 \cdot (k+1)^6 \sqrt{\lambda_k}$;
	
	Let $\rho^* = \min \left\{ \frac{\lambda_{k+1}}{10}, 30 c_0 \cdot (k+1)^5 \cdot \sqrt{\lambda_k}\right\}$;
	
	\While { At least one of the following conditions holds \label{alg:while line}
	    \begin{enumerate}
    	    \item $\exists 1 \leq i \leq r$ such that $w(P_i \setminus \core(P_i) \rightarrow P_i) < w(P_i \setminus \core(P_i) \rightarrow P_j)$ for some $j \neq i$;
    	    \item \SpecPart finds $S \subseteq P_i$ with~\footnotemark{} $\vol(S^+) \leq \vol(\mathrm{core}(P_i))/2$, such that $\max \left\{\Phi_{G[P_i]}(S),  \Phi_{G[P_i]}(P_i \setminus S)\right\} < \phiIn$;
	    \end{enumerate}
    }
    {
        \label{alg:line:beginning while loop}
          Order the sets $P_1,\ldots, P_r$ such that   $\lambda_2(G[P_1]) \leq \ldots \leq \lambda_2(G[P_r])$;
        
        Let $1 \leq i \leq r$ the smallest index for which item~$2$ of the \texttt{while}-condition is satisfied, and let $S \subset P_i$ be the corresponding set;
	    
	    \If{$\max \left\{ \Phi_G(S^+), \Phi_G\left(\barr{S^+}\right)\right\} \leq  \lp 1 + \frac{1}{k+1} \rp^{r+1} \cdot \rho^*$}
	    {
	        \label{alg:line:condition if-1}
	        Let $P_i = P_i \setminus \overline{{S^+}}$, $\core(P_i) = S^+$, $P_{r+1} = \core(P_{r+1}) = \barr{S^+}$,   $r = r + 1$ and \textbf{go to} Line~\ref{alg:while line}\label{alg:line:update if-1};
	    }
	    
	    \If{$ \min \left\{ \varphi(S^+, \core(P_i)), \varphi\left(\barr{S^+}, \core(P_i)\right) \right\} \leq \frac{1}{3(k+1)}$}
	    {
	        \label{alg:line:condition if-2}
	        Update $\core(P_i)$ to either $S^+$ or $\barr{S^+}$ with the lower conductance, and \textbf{go to} Line~\ref{alg:while line}; \label{alg:line:update if-2}
	    }
	    
	    \If{$ \Phi_G(S^-) \leq \lp 1 + \frac{1}{k+1} \rp^{r+1} \cdot \rho^*$}
	    {
	        \label{alg:line:condition if-3}
	        Let $P_i = P_i \setminus S^-$, $P_{r +1} = \core(P_{r+1}) = S^-$, set $r = r +1$ and \textbf{go to} Line~\ref{alg:while line}; \label{alg:line:update if-3}
	    }
	    
	    \If{$ w(P_i \setminus \core(P_i) \rightarrow P_i) < w(P_i \setminus \core(P_i) \rightarrow P_j)$ for some $j \neq i$}
	    {
	        \label{alg:line:condition if-4}
	        Let $P_i = \core(P_i)$, merge $(P_i \setminus \core(P_i))$ with $\mathrm{argmax}_{P_j} \{ w(P_i \setminus \core(P_i) \rightarrow P_j)\}$, and \textbf{go to} Line~\ref{alg:while line}; \label{alg:line:update if-4}
	    }
	    
	    \If{$ w(S^- \rightarrow P_i) < w(S^- \rightarrow P_j)$ for some $j \neq i$}
	    { 
	        \label{alg:line:condition if-5}
	        Let $P_i = P_i \setminus S^-$, merge $S^-$ with $\mathrm{argmax}_{P_j} \{ w(S^- \rightarrow P_j)\}$, and \textbf{go to} Line~\ref{alg:while line};\label{alg:line:update if-5}
	    }
	}

	For every partition set $P_i$  with \textsf{HC} tree $\T_i$, let $\mathcal{S}_i$ be the set of critical nodes;
	\label{alg:line:compute trees}
	
	\If {$\exists N \in \mathcal{S}_i$ such that $\max \left\{ \Phi_G(N^+), \Phi_G\left(\barr{N^+}\right) \right \}\leq \lp 1 + \frac{1}{k+1} \rp^{r+1} \cdot \rho^*$}
	{
	    \label{alg:line:condition if-6}
	    Let $P_i = P_i \setminus \overline{N^+}, \core(P_i) = N^+$, $P_{r+1} = \core(P_{r+1}) = \barr{N^+}$,   $r = r + 1$ and \textbf{go to} Line~\ref{alg:while line}; \label{alg:line:update if-6}
	}
	\If {$\exists N \in \mathcal{S}_i$ such that $\vol(N^+) \leq \vol(\core(P_i)) /2$ and $\varphi(N^+, \core(P_i)) \leq \frac{1}{3(k+1)}$} 
	{
	    \label{alg:line:condition if-7}
	    Update $\core(P_i)$ to either $N^+$ or $\barr{N^+}$ with the lower conductance, and \textbf{go to} Line~\ref{alg:while line}; \label{alg:line:update if-7}
	}
	\If {$\exists N \in \mathcal{S}_i$ such that $\vol(N^-) \leq \vol(P_i)/2$ and $ w(N^- \rightarrow P_i) < w(N^- \rightarrow P_j)$ for some $j \neq i$}
	{
	    \label{alg:line:condition if-8}
	    Let $P_i = P_i \setminus N^-$, merge $N^-$ with $\mathrm{argmax}_{P_j} \{ w(N^- \rightarrow P_j)\}$, and \textbf{go to} Line~\ref{alg:while line};\label{alg:line:update if-8}
	}
	
	\Return the partition $\{P_i\}_{i=1}^r$.
	
	\caption{Algorithm for partitioning $G$ into $\ell \leq k$ clusters }
	\label{alg:Strong Decomposition}
\end{algorithm}
\footnotetext{If the set $S \subset P_i$ returned by \SpecPart has $\vol(S^+) > \vol(G)/2$, swap $S$ with $P_i\setminus S$.}

Following the proof structure in \cite{GT14},  we will  prove Lemma~\ref{lem:Improved Decomposition} via a sequence of claims.  Notice that, during the  the entire execution  of the algorithm, the sets $\{P_i\}_{i=1}^{r}$ always form a partition of $V$,  and each $\core(P_i)$  is a subset of $P_i$. Firstly, we show that, at any point during the execution of the algorithm, the core sets $\core(P_i)~(1\leq i\leq r)$ always have low conductance.

\begin{claim}\label{claim:Strong decomp claim 1}
    Throughout the algorithm, we always have that 
    \[
        \max_{1 \leq i \leq r} \Phi_G(\core(P_i)) \leq \rho^* \cdot \lp 1 + \frac{1}{k+1} \rp ^{r}.
    \]
\end{claim}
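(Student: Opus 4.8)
The plan is to prove the claim by induction on the number of iterations of the main loop of Algorithm~\ref{alg:Strong Decomposition}, using that each iteration either terminates the algorithm or performs exactly one of the updates in Lines~\ref{alg:line:update if-1}--\ref{alg:line:update if-5} or Lines~\ref{alg:line:update if-6}--\ref{alg:line:update if-8} before returning to Line~\ref{alg:while line}. For the base case, before the first iteration we have $r=1$ and $\core(P_1)=P_1=V$, hence $\Phi_G(\core(P_1))=\Phi_G(V)=0\le \rho^*\cdot(1+1/(k+1))$. For the inductive step I would assume the bound holds for the current partition and core sets, and then carry out a case analysis on which update fires, tracking how $r$ and the family $\{\core(P_i)\}$ change.

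The merge updates (Lines~\ref{alg:line:update if-4}, \ref{alg:line:update if-5}, \ref{alg:line:update if-8}) are immediate: each only relocates vertices lying in $P_i\setminus\core(P_i)$, in $S^-=S\setminus S^+$, or in $N^-=N\setminus N^+$, all of which are disjoint from every core set; hence neither $r$ nor any $\core(P_j)$ changes and the invariant persists verbatim. The cluster-creating updates (Lines~\ref{alg:line:update if-1}, \ref{alg:line:update if-3}, \ref{alg:line:update if-6}) increase $r$ to $r+1$, and their guarding \texttt{if}-conditions (Lines~\ref{alg:line:condition if-1}, \ref{alg:line:condition if-3}, \ref{alg:line:condition if-6}) are precisely inequalities of the form $\Phi_G(\,\cdot\,)\le(1+1/(k+1))^{r+1}\cdot\rho^*$ for each core set they create or shrink --- namely $S^+$ and $\overline{S^+}$ in Line~\ref{alg:line:update if-1}, the set $S^-$ in Line~\ref{alg:line:update if-3} (with $\core(P_i)$ itself untouched, as $S^-$ is disjoint from it), and $N^+$ and $\overline{N^+}$ in Line~\ref{alg:line:update if-6} --- so the new cores satisfy the bound at the new level $r+1$, while every untouched $\core(P_j)$ still satisfies $\Phi_G(\core(P_j))\le(1+1/(k+1))^{r}\cdot\rho^*\le(1+1/(k+1))^{r+1}\cdot\rho^*$.

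The heart of the argument, and the step I expect to be the main obstacle, is the pair of core-refinement updates (Lines~\ref{alg:line:update if-2} and~\ref{alg:line:update if-7}): these leave $r$ unchanged and replace $\core(P_i)$ by one of two pieces $\{X,Y\}$ that partition $C:=\core(P_i)$ (with $X\in\{S^+,N^+\}$, $Y\in\{\overline{S^+},\overline{N^+}\}$), chosen to have the smaller conductance, under the guarantee $\varphi(X,C)\le \frac{1}{3(k+1)}$ (and, in Line~\ref{alg:line:update if-7}, also $\vol(X)\le\vol(C)/2$). Here I would invoke the relative-conductance relationship of \cite{GT14}: writing $a=w(X,V\setminus C)$, $b=w(Y,V\setminus C)$ and substituting $w(X,Y)=\varphi(X,C)\cdot\frac{\vol(Y)}{\vol(C)}\cdot a$ into $\Phi_G(X)\vol(X)=w(X,Y)+a$, $\Phi_G(Y)\vol(Y)=w(X,Y)+b$ and $\Phi_G(C)\vol(C)=a+b$, a short computation shows that the piece of smaller conductance has conductance at most $\Phi_G(C)$ unless $b/a$ falls inside a $(1\pm\varphi(X,C))$-window around $\vol(Y)/\vol(X)$, and in that remaining case its conductance is at most $\frac{1+1/(3(k+1))}{1-1/(3(k+1))}\cdot\Phi_G(C)=\frac{3k+4}{3k+2}\cdot\Phi_G(C)\le(1+1/(k+1))\cdot\Phi_G(C)$. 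Thus a refinement inflates the core conductance by at most a $(1+1/(k+1))$ factor, and typically not at all; the delicate part is to reconcile this occasional slack with the fact that $r$ is not incremented on a refinement, which I would handle as in \cite{GT14} via the structural bookkeeping accompanying Properties $(A4)$--$(A5)$ proved in the subsequent claims (in particular that a given core is refined only a controlled number of times, so that the budget $\rho^*(1+1/(k+1))^r$ is never exhausted).

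Putting the cases together, every iteration preserves the inequality, and since there are finitely many iterations the bound $\max_{1\le i\le r}\Phi_G(\core(P_i))\le\rho^*\cdot(1+1/(k+1))^r$ holds throughout the execution, which is the claim.
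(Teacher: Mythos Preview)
Your induction scheme, base case, and treatment of the merge updates (Lines~\ref{alg:line:update if-4}, \ref{alg:line:update if-5}, \ref{alg:line:update if-8}) and the cluster-creating updates (Lines~\ref{alg:line:update if-1}, \ref{alg:line:update if-3}, \ref{alg:line:update if-6}) match the paper's argument. The gap is in your handling of the core-refinement updates (Lines~\ref{alg:line:update if-2} and~\ref{alg:line:update if-7}).

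You concede that a refinement may inflate $\Phi_G(\core(P_i))$ by a factor up to $1+\tfrac{1}{k+1}$, and propose to absorb the accumulated slack by ``structural bookkeeping'' limiting how often a core is refined. This does not work: as the paper itself observes (Claim~\ref{claim:Strong decomp claim 5}), for a fixed value of $r$ the core-refinement step can fire up to $n$ times, so any per-step inflation would blow past the budget $\rho^*(1+\tfrac{1}{k+1})^r$. The point you are missing is that a refinement incurs \emph{no} inflation at all. The algorithm reaches Line~\ref{alg:line:update if-2} only because the test in Line~\ref{alg:line:condition if-1} has just failed, i.e.
\[
\max\!\left\{\Phi_G(S^+),\,\Phi_G\!\left(\overline{S^+}\right)\right\} \;>\; \Bigl(1+\tfrac{1}{k+1}\Bigr)^{r+1}\!\cdot\rho^* \;\geq\; \Bigl(1+\tfrac{1}{k+1}\Bigr)\cdot \Phi_G(\core(P_i)),
\]
the last inequality by the inductive hypothesis. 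Together with the condition of Line~\ref{alg:line:condition if-2} this is exactly the hypothesis pair of Lemma~\ref{lem:Refinement B_i} (with $\varepsilon=\tfrac{1}{k+1}$), whose conclusion is
\[
\min\!\left\{\Phi_G(S^+),\,\Phi_G\!\left(\overline{S^+}\right)\right\}\;\le\;\Phi_G(\core(P_i)).
\]
Since the update replaces $\core(P_i)$ by whichever of $S^+,\overline{S^+}$ has the lower conductance, the invariant is preserved at the \emph{same} level $r$, with no slack to track. The analysis for Line~\ref{alg:line:update if-7} is identical, using the failure of Line~\ref{alg:line:condition if-6}. Once you use this, the ``delicate part'' you flag disappears entirely.
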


The following result will be used in our proof:
\begin{lemma}[Lemma~2.2,  \cite{GT14}]\label{lem:Refinement B_i}
Let $G=(V,E, w)$ be a graph, and let $S, W$   be two subsets such that $S \subset W \subseteq V$.  
 Suppose that the following two conditions are satisfied for some $\varepsilon > 0$:
\begin{enumerate}
    \item $\varphi(S, W) \leq  \varepsilon /3$ and 
    \item $\max \left\{ \Phi_G(S), \Phi_G(W \setminus S)\right\} \geq (1 + \varepsilon)\cdot \Phi_G(W)$.
\end{enumerate}
Then it holds that 
\[
    \min \left\{ \Phi_G(S), \Phi_G(W \setminus S)\right\} \leq \Phi_G(W).
\]
\end{lemma}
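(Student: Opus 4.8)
The plan is to reduce the statement to three scalar edge‑masses and a single identity. Write $T \triangleq W \setminus S$ and set $a \triangleq w(S, V\setminus W)$, $b \triangleq w(T, V\setminus W)$, $c \triangleq w(S,T)$, together with $v_S \triangleq \vol(S)$, $v_T \triangleq \vol(T)$, and $v_W \triangleq \vol(W) = v_S + v_T$. Since $S \subseteq W$, the edge boundaries decompose as $w(S,V\setminus S) = a+c$, $w(T, V\setminus T) = b+c$, and $w(W,V\setminus W) = a+b$, so that $\Phi_G(S)v_S = a+c$, $\Phi_G(T)v_T = b+c$, $\Phi_G(W)v_W = a+b$, and hence the key identity
\[
    \Phi_G(W)\,v_W \;=\; \Phi_G(S)\,v_S + \Phi_G(T)\,v_T - 2c .
\]
In this notation hypothesis~(1) reads $c\,v_W \le \tfrac{\varepsilon}{3}\,v_T\,a$ (the convention $\varphi = 1$ makes (1) fail whenever $\varepsilon < 3$ and $a = 0$ or $v_T = 0$, so I may assume $a > 0$ and $T \ne \emptyset$), and hypothesis~(2) says that at least one of $\Phi_G(S)$, $\Phi_G(T)$ is $\ge (1+\varepsilon)\Phi_G(W)$. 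I would argue by contradiction: assume $\Phi_G(S) > \Phi_G(W)$ and $\Phi_G(T) > \Phi_G(W)$, and split on which of the two conductances witnesses~(2).

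First suppose $\Phi_G(T) \ge (1+\varepsilon)\Phi_G(W)$. Substituting this into the key identity and simplifying gives $\Phi_G(S)v_S \le \Phi_G(W)v_S - \varepsilon\Phi_G(W)v_T + 2c$, so it suffices to check that $\varepsilon\Phi_G(W)v_T \ge 2c$. This is immediate: $a \le a+b = \Phi_G(W)v_W$ gives $\varepsilon\Phi_G(W)v_T \ge \varepsilon a v_T / v_W$, while (1) gives $2c \le \tfrac{2\varepsilon}{3}\,a v_T / v_W < \varepsilon a v_T / v_W$. Hence $\Phi_G(S)v_S < \Phi_G(W)v_S$, contradicting $\Phi_G(S) > \Phi_G(W)$.

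The hard part is the second case $\Phi_G(S) \ge (1+\varepsilon)\Phi_G(W)$, because hypothesis~(1) is asymmetric in $S$ and $T$ and the naive mirror of the first case leaves an uncontrolled term; the plan there is to also bring in the constraint $a \ge 0$. From the key identity and $\Phi_G(T) > \Phi_G(W)$ one obtains $\Phi_G(S)v_S < \Phi_G(W)v_S + 2c$; together with $\Phi_G(S)v_S \ge (1+\varepsilon)\Phi_G(W)v_S$ this yields $\varepsilon\Phi_G(W)v_S < 2c$. The same inequality $\Phi_G(S)v_S < \Phi_G(W)v_S + 2c$, rewritten through $a = \Phi_G(S)v_S - c$, gives $a < \Phi_G(W)v_S + c$; feeding this into (1) and solving for $c$ (using $\varepsilon < 3$) yields $2c < 2\varepsilon\,v_S v_T\,\Phi_G(W)/(3v_W - \varepsilon v_T)$. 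Chaining the last two bounds forces $3v_W - \varepsilon v_T < 2 v_T$, i.e. $3 v_S < (\varepsilon - 1)\,v_T$, which is impossible since $v_S > 0$ and $\varepsilon \le 1$; note that $\varepsilon \le 1$ is all that the argument uses, and it holds in every application of this lemma, where it is invoked with $\varepsilon = 1/(k+1)$. I expect the routine points to be the edge‑mass bookkeeping and the degenerate cases $a = 0$ or $v_T = 0$ (disposed of by the $\varphi = 1$ convention as above), and the only genuinely delicate step to be making the inequality chain in the second case airtight.
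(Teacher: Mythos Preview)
The paper does not supply its own proof of this lemma; it is quoted as Lemma~2.2 of \cite{GT14} and invoked as a black box (inside the proof of Claim~\ref{claim:Strong decomp claim 1}), so there is no in-paper argument to compare yours against.

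Your proof is correct under the extra hypothesis $\varepsilon \le 1$. The reduction to the three edge-masses $a,b,c$ and the identity $\Phi_G(W)\,v_W = \Phi_G(S)\,v_S + \Phi_G(T)\,v_T - 2c$ is the right framework; your first case (where $\Phi_G(T) \ge (1+\varepsilon)\Phi_G(W)$) goes through for all $\varepsilon>0$ once $a>0$ and $v_T>0$ are secured via the $\varphi=1$ convention, and in the second case the chain $\varepsilon\,\Phi_G(W)\,v_S < 2c < 2\varepsilon\, v_S v_T\, \Phi_G(W)/(3v_W-\varepsilon v_T)$ is valid and does force $3v_S < (\varepsilon-1)v_T$. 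That last inequality is a contradiction only when $\varepsilon \le 1$, so strictly speaking you have not established the lemma as stated for arbitrary $\varepsilon > 0$. However, the sole invocation in this paper takes $\varepsilon = 1/(k+1) \le 1/2$, so for the paper's purposes your argument is complete; flagging the restriction explicitly, as you did, is the right call.
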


\begin{proof}[Proof of Claim~\ref{claim:Strong decomp claim 1}]

Let $r$ be the current number of clusters generated by the algorithm, and we prove by induction that the claim holds during the entire execution of the algorithm. First of all, for  the base case of  $r = 1$, we have that $\core(P_1) = P_1 = V$, which gives us that $\Phi_G(\core(P_1))=0$; hence,  the statement holds trivially.

Secondly, for the inductive step, we assume that the statement holds for some fixed configuration of the core sets $\{\mathrm{core}(P_i)\}_{i=1}^r$ and we prove that the statement holds after the algorithm updates the current configuration. Notice that   $\{\core(P_i)\}_{i=1}^r$ are updated through Lines~\ref{alg:line:update if-1}, \ref{alg:line:update if-2}, \ref{alg:line:update if-3}, \ref{alg:line:update if-6} and \ref{alg:line:update if-7} of the algorithm, so it suffices to show that the claim holds after executing  these lines. We continue the proof with case distinction.
\begin{itemize}
    \item  When executing Lines~\ref{alg:line:update if-1}, \ref{alg:line:update if-3} and \ref{alg:line:update if-6}, the algorithm introduces some new set $\core(P_{r+1})$ such that
    \[
    \Phi_G(\core(P_{r+1})) \leq \rho^*\cdot \left(1 +\frac{1}{k+1} \right)^{r+1}.
    \]
Combining this with the inductive hypothesis, which assumes the inequality holds for $\core(P_{i})~(1\leq i\leq r)$, we have that 
\[
\max_{1\leq i\leq r+1}\Phi_G(\core(P_{i})) \leq \rho^*\cdot \left(1 +\frac{1}{k+1} \right)^{r+1}.
\]
\item The case for executing Lines~\ref{alg:line:update if-2}~and~\ref{alg:line:update if-7} is similar, so we focus on Line~\ref{alg:line:update if-2} here and prove this by applying Lemma~\ref{lem:Refinement B_i}. When executing Line~\ref{alg:line:update if-2}, we know that the \texttt{if}-condition in Line~\ref{alg:line:condition if-1} does not hold, so we have that 
    \[
        \max \left\{ \Phi_G(S^+), \Phi_G\left(\barr{S^+}\right) \right\} > \lp 1 + \frac{1}{k+1}\rp^{r +1} \cdot \rho^* \geq \lp 1 + \frac{1}{k+1}\rp \cdot \Phi_G(\core(P_i)),
    \]
    where the last inequality follows by the inductive hypothesis. Moreover, when executing Line~\ref{alg:line:update if-2}, we also know that the \texttt{if}-condition in Line~\ref{alg:line:condition if-2}  holds, i.e.,
    \[
        \min \left\{ \varphi\left(S^+, \core(P_{i})\right), \varphi\left(\barr{S^+}, \core(P_{i})\right)\right\} \leq \frac{1}{3(k+1)}.
    \]
    Therefore, by applying Lemma~\ref{lem:Refinement B_i} with  $S^+ \subset \mathrm{core}(P_i)$ and $\varepsilon = 1/(k+1)$ and using the inductive hypothesis, we conclude that
    \[
        \min \left\{ \Phi_G(S^+), \Phi_G\left(\barr{S^+}\right)\right\} \leq \Phi_G(\core(P_{i})) \leq\rho^*\cdot  \lp 1 + \frac{1}{k+1} \rp^{r}.
    \]
\end{itemize}
Combining the two cases above, we know that the claim always holds during the entire execution of the algorithm. This completes the proof.
\end{proof}

Next, we will show that the number of partition sets cannot exceed $k$. This proof  is identical to Claim~$3.2$ in \cite{GT14}, and we include the proof  here for completeness.  

\begin{claim}\label{claim:Strong decomp claim 2}
    The total number of clusters returned by the algorithm satisfies that $\ell\leq k$.
\end{claim}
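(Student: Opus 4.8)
The plan is to argue by contradiction. Suppose that at some point during the execution the algorithm has produced $r = k+1$ clusters $\{P_i\}_{i=1}^{k+1}$ with core sets $\{\core(P_i)\}_{i=1}^{k+1}$. First I would note that the number of clusters is incremented only in Lines~\ref{alg:line:update if-1}, \ref{alg:line:update if-3} and \ref{alg:line:update if-6}; in each of these the newly created set $\core(P_{r+1})$ satisfies $\Phi_G(\core(P_{r+1})) \leq \rho^* \cdot \lp 1 + \frac{1}{k+1}\rp^{r+1}$ by the corresponding \texttt{if}-condition, whereas every other update line only refines an existing partition set. Invoking Claim~\ref{claim:Strong decomp claim 1}, which holds throughout the execution, at the moment $r = k+1$ we therefore have
\[
    \max_{1 \leq i \leq k+1} \Phi_G(\core(P_i)) \leq \rho^* \cdot \lp 1 + \frac{1}{k+1}\rp^{k+1} < e \cdot \rho^*.
\]

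Next I would observe that $\core(P_1), \dots, \core(P_{k+1})$ are pairwise disjoint, since they are subsets of the pairwise disjoint sets $P_1, \dots, P_{k+1}$, and that each of them is nonempty by construction. Hence these $k+1$ sets form an admissible family in the definition of the $(k+1)$-way expansion, giving $\rho(k+1) \leq e \cdot \rho^*$. Applying the higher-order Cheeger inequality (Lemma~\ref{lem:Higher Cheeger}) with $k+1$ in place of $k$ yields $\lambda_{k+1} \leq 2\rho(k+1) \leq 2e \cdot \rho^*$. Finally, since $\rho^* = \min\{\lambda_{k+1}/10, \, 30 c_0 (k+1)^5 \sqrt{\lambda_k}\} \leq \lambda_{k+1}/10$, we obtain $\lambda_{k+1} \leq \frac{2e}{10}\lambda_{k+1} = \frac{e}{5}\lambda_{k+1} < \lambda_{k+1}$, contradicting the hypothesis $\lambda_{k+1} > 0$. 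Consequently $r$ never reaches $k+1$, so the algorithm returns a partition with $\ell \leq k$ parts.

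The argument is essentially a short numerical computation resting on Claim~\ref{claim:Strong decomp claim 1} and the higher-order Cheeger inequality; the only point that needs a little care is verifying that each newly introduced core set is nonempty, which follows from the fact that the cuts produced by \SpecPart and used in Lines~\ref{alg:line:update if-1}, \ref{alg:line:update if-3} and \ref{alg:line:update if-6} are nontrivial. I expect no serious obstacle beyond this bookkeeping, and this part is handled exactly as in~\cite{GT14}.
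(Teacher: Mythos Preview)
Your proposal is correct and follows essentially the same route as the paper's proof: both argue by contradiction, invoke Claim~\ref{claim:Strong decomp claim 1} at the moment $r=k+1$ to bound $\max_i \Phi_G(\core(P_i))$ by $e\cdot\rho^*$, use disjointness of the cores to upper bound $\rho(k+1)$, and then contradict the higher-order Cheeger inequality using $\rho^*\leq \lambda_{k+1}/10$. The extra bookkeeping you include (identifying the increment lines and noting nonemptiness of the new cores) is fine and not strictly needed, since Claim~\ref{claim:Strong decomp claim 1} already applies throughout the execution.
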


\begin{proof}

Suppose for contradiction that the number of clusters becomes $r = k + 1$ at some point during the execution of the algorithm. Then, since $\core(P_1),\ldots, \core(P_{k+1})$ are disjoint, by the definition of $\rho(k+1)$ and Claim~\ref{claim:Strong decomp claim 1} we have that 
\[
        \rho(k+1) 
        \leq \max_{1 \leq i \leq k+1} \Phi_G(\mathrm{core}(P_i))
        \leq \lp 1 + \frac{1}{k+1}\rp^{k+1} \cdot \rho^*
        \leq \mathrm{e} \cdot \rho^*
        \leq \mathrm{e} \cdot \frac{\lambda_{k+1}}{10}
        < \frac{\lambda_{k+1}}{2},
    \]
    which contradicts Lemma~\ref{lem:Higher Cheeger}.  Therefore, the total number of clusters at any time satisfies $r < k+1$, which means that the final number of clusters satisfies $\ell \leq k$. This proves the claim.
\end{proof}

Now we are ready to show   that the output  $\{P_i\}_{i=1}^{\ell}$ of Algorithm~\ref{alg:Strong Decomposition} and its core sets $\{\core(P_i) \}_{i=1}^{\ell}$ satisfy Properties~$(A4)$ and $(A5)$, which will be used in proving Lemma~\ref{lem:Improved Decomposition}. 

\begin{claim}\label{claim:Strond decomp claim 3}
Let $\{P_i\}_{i=1}^{\ell}$ be the output of Algorithm~\ref{alg:Strong Decomposition} with the corresponding core sets $\{\core(P_i)\}_{i=1}^{\ell}$.
Then, the following hold for any $1\leq i\leq \ell$:
\begin{enumerate}
    \item $\Phi_G(\core(P_i))\leq \phiOut/(k+1)$;
    \item $\Phi_G(P_i) \leq \phiOut$;
    \item $\Phi_{G[P_i]}\geq \phiIn^2/4 $.
\end{enumerate}
Moreover, assuming that $\mathcal{S}_i$ is the set of critical nodes of $\T_i = \T_{\deg}(G[P_i])$, the following two properties hold for any $1\leq i\leq \ell$:
\begin{enumerate}
    \item[4.] For every critical node $N\in \mathcal{S}_i$ with $\vol(N^+)\leq \vol(\core(P_i))/2$, we have that
    \[
        \varphi(N^+, \core(P_i)) \geq \frac{1}{3(k+1)};
    \]
    \item[5.] 
    For every critical node $N\in \mathcal{S}_i$ with
     $\vol(N^-) \leq \vol(P_i)/2$, we have  that $$w(N^- \rightarrow P_i) \geq w(N^- \rightarrow V \setminus P_i) \cdot \frac{1}{k+1}.$$
\end{enumerate}
\end{claim}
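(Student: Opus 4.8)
The plan is to derive all five properties directly from the configuration of the partition at the moment Algorithm~\ref{alg:Strong Decomposition} halts, using Claims~\ref{claim:Strong decomp claim 1} and~\ref{claim:Strong decomp claim 2} as the only external inputs. The key structural observation is that the \texttt{Return} statement is reached only after the \texttt{while}-loop has exited (hence neither of its two conditions holds for the current partition) and, in addition, none of the three critical-node \texttt{if}-conditions in Lines~\ref{alg:line:condition if-6}, \ref{alg:line:condition if-7}, \ref{alg:line:condition if-8} triggers, since each of these, when triggered, sends control back to Line~\ref{alg:while line}. Therefore at the moment the output $\{P_i\}_{i=1}^{\ell}$ is produced, all five of these conditions are simultaneously false, and by Claim~\ref{claim:Strong decomp claim 2} we also have $\ell\le k$.

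For Property~1 I would apply Claim~\ref{claim:Strong decomp claim 1} with $r=\ell\le k$: since $(1+1/(k+1))^{\ell}\le (1+1/(k+1))^{k+1}<\mathrm{e}<3$ and $3\rho^*\le 3\cdot 30 c_0 (k+1)^5\sqrt{\lambda_k}=\phiOut/(k+1)$ by the definitions of $\rho^*$ and $\phiOut$, we get $\Phi_G(\core(P_i))\le\phiOut/(k+1)$. For Property~2, the failure of the first \texttt{while}-condition gives, for every $i$ and every $j\neq i$, that $w(P_i\setminus\core(P_i)\to P_i)\ge w(P_i\setminus\core(P_i)\to P_j)$, i.e.\ $w(P_i\setminus\core(P_i),\core(P_i))\ge w(P_i\setminus\core(P_i),P_j)$; splitting $w(P_i,V\setminus P_i)=w(\core(P_i),V\setminus P_i)+w(P_i\setminus\core(P_i),V\setminus P_i)$, bounding the first summand by $\Phi_G(\core(P_i))\vol(\core(P_i))$ and the second by $(\ell-1)\,w(P_i\setminus\core(P_i),\core(P_i))\le(\ell-1)\Phi_G(\core(P_i))\vol(\core(P_i))$, and dividing by $\vol(P_i)\ge\vol(\core(P_i))$ yields $\Phi_G(P_i)\le\ell\cdot\Phi_G(\core(P_i))<\ell\phiOut/(k+1)\le\phiOut$.

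For Property~3, the failure of the second \texttt{while}-condition means that running \SpecPart on $G[P_i]$ does not return a cut that is sparse enough after the swap step of the footnote, so the set $S$ it outputs has $\Phi_{G[P_i]}(S)\ge\phiIn$; combined with its Cheeger guarantee $\Phi_{G[P_i]}(S)\le 2\sqrt{\Phi_{G[P_i]}}$ from Lemma~\ref{lem:Cheeger's ineq}, this gives $\Phi_{G[P_i]}\ge\phiIn^2/4$; this mirrors the argument for $(A2)$ in \cite{GT14}, and the only care needed is in reconciling the volume threshold $\vol(S^+)\le\vol(\core(P_i))/2$ in the \texttt{while}-condition with the $\vol(G)/2$ threshold of the swap step. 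Properties~4 and~5 are exactly the negations of the \texttt{if}-conditions in Lines~\ref{alg:line:condition if-7} and~\ref{alg:line:condition if-8}: failure of Line~\ref{alg:line:condition if-7} says that for every critical node $N\in\mathcal{S}_i$ with $\vol(N^+)\le\vol(\core(P_i))/2$ we have $\varphi(N^+,\core(P_i))>1/(3(k+1))$ (the failure of Line~\ref{alg:line:condition if-6} is what guarantees the algorithm actually reaches the return and keeps $r=\ell$, so that Claim~\ref{claim:Strong decomp claim 1} applies), and failure of Line~\ref{alg:line:condition if-8} says that for every such $N$ with $\vol(N^-)\le\vol(P_i)/2$ we have $w(N^-\to P_i)\ge w(N^-\to P_j)$ for all $j\neq i$; summing over the at most $\ell-1\le k+1$ indices $j\neq i$ gives $w(N^-\to V\setminus P_i)\le(k+1)\,w(N^-\to P_i)$, which is Property~5.

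The step I expect to be the actual obstacle is not this correctness argument but showing that Algorithm~\ref{alg:Strong Decomposition} terminates in polynomial time: it interleaves the \texttt{while}-loop of \cite{GT14} with the two new critical-node refinements in Lines~\ref{alg:line:update if-6}--\ref{alg:line:update if-8}, and one must exhibit a potential function — extending the one in \cite{GT14}, which tracks the number of clusters, the volumes of the non-core parts, and the total crossing weight $\sum_{i\neq j}w(P_i,P_j)$ — that strictly decreases (or decreases within a bounded number of steps) under each new operation as well, using in particular Lemma~\ref{lem:Refinement B_i} to control the conductance blow-up of core refinements and Claim~\ref{claim:Strong decomp claim 1} to cap the number of cluster-creating steps. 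I would settle termination first, after which the halting configuration delivers the five properties essentially mechanically as above.
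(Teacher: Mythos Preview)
Your proposal is correct and follows essentially the same approach as the paper: each of the five items is read off from the negation of the corresponding loop or \texttt{if}-condition at termination, together with Claims~\ref{claim:Strong decomp claim 1} and~\ref{claim:Strong decomp claim 2}. Your derivation of Item~2 is in fact slightly cleaner than the paper's (you bound $w(P_i,V\setminus P_i)$ directly by $\ell\,\Phi_G(\core(P_i))\vol(\core(P_i))$, whereas the paper carries an extra subtraction through to reach $k\,\Phi_G(\core(P_i))$), and note that termination is handled separately in the paper as its own claim, so your final paragraph is outside the scope of this statement.
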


\begin{proof}
First of all, by Claim~\ref{claim:Strong decomp claim 1} we have for any $1\leq i\leq \ell$ that 
 \[
    \Phi_G(\core(P_i)) 
    \leq \rho^* \cdot \lp 1 + \frac{1}{k+1}\rp^{\ell}
    \leq \mathrm{e} \cdot \rho^*
    \leq 30 \cdot \mathrm{e} \cdot c_0 \cdot (k+1)^5 \sqrt{\lambda_k}
    \leq \frac{\phiOut}{k+1},
\]
where the second inequality holds by the fact  that $\ell\leq k+1$, the third one holds by the choice of $\rho^*$, and the last one holds by the choice of $\phiOut$. This proves Item~(1).

To prove Item~(2), we notice that the first condition of the \texttt{while}-loop~(Line~\ref{alg:while line}) doesn't hold when the algorithm terminates, hence   we have for any $1\leq i\neq j\leq \ell$ that \[w(P_i \setminus\core(P_i)\rightarrow P_i)\geq w(P_i \setminus \core(P_i) \rightarrow P_j).
\]
By applying the averaging argument, we have that 
\begin{align}
    w(P_i \setminus \core(P_i) \rightarrow \core(P_i)) & = w(P_i \setminus \core(P_i) \rightarrow P_i) \nonumber\\
    &\geq \frac{w(P_i \setminus \core(P_i) \rightarrow V)}{\ell}
    \geq \frac{w(P_i \setminus \core(P_i) \rightarrow V \setminus P_i)}{k}. \label{eq:claim Strong Decomp Claim 3}
\end{align} 
We apply the same analysis used in \cite{GT14}, and have that
\begin{align*}
    \Phi_G(P_i) 
    &= \frac{w(P_i \rightarrow V)}{\vol(P_i)} \\
    &\leq \frac{w \lp \mathrm{core}(P_i) \rightarrow V \rp + w \lp P_i \setminus \mathrm{core}(P_i)\rightarrow V \setminus P_i \rp - w(P_i \setminus \mathrm{core}(P_i) \rightarrow \mathrm{core}(P_i)) }{\vol(\mathrm{core}(P_i))}\\
    &\leq \Phi_G(\mathrm{core}(P_i)) + \frac{(k-1) \cdot w(P_i \setminus \mathrm{core}(P_i) \rightarrow \mathrm{core}(P_i))}{\vol(\mathrm{core}(P_i))}\\
    &\leq k \cdot \Phi_G(\mathrm{core}(P_i))\\
    &\leq \phiOut,
\end{align*}
where the second inequality uses equation~\eqref{eq:claim Strong Decomp Claim 3}. This proves Item~(2).

Next, we analyse Item~(3). Again, we know that   the second condition within the \texttt{while}-loop~(Line~\ref{alg:while line}) does not hold when the algorithm terminates. By the performance of  the \SpecPart algorithm~(i.e., Lemma~\ref{lem:Cheeger's ineq}), it holds for any $1\leq i\leq \ell$ that 
$\Phi_{G[P_i]}\geq  \phiIn^2/4$. With this, we prove that   Item~(3) holds.
    
 Similarly, when the algorithm terminates, we know that for any critical node $N$ the \texttt{if}-condition in Line~\ref{alg:line:condition if-7} does not hold.
Hence, for any $1\leq i\leq \ell$ and any $N\in \mathcal{S}_i$ with $\vol(N^+)\leq \vol(\core(P_i))/2$, we have that
\[
    \varphi(N^+, \core(P_i)) \geq \frac{1}{3(k+1)}.
\]
This shows that Item~(4) holds as well.
    
Finally, since there is no $N$ satisfying the \texttt{if}-condition in Line~\ref{alg:line:condition if-8} of the algorithm, it holds for any $1\leq i\neq j\leq \ell$ and every critical node $N\in\mathcal{S}_i$ that $w(N^{-} \rightarrow P_i) \geq w(N^{-} \rightarrow P_j)$. Therefore, by the same averaging argument we have that
\[
    w(N^- \rightarrow P_i) 
    \geq \frac{w(N^- \rightarrow V)}{\ell}
    \geq \frac{w(N^- \rightarrow V\setminus P_i)}{k+1},
\]
which shows that Item~(5) holds.
\end{proof}

It remains to prove that the algorithm does terminate. To prove this, we first show that, in each iteration of the \texttt{while}-loop (Lines~\ref{alg:line:beginning while loop}--\ref{alg:line:update if-5}), at least one of the \texttt{if}-conditions will be satisfied, and some sets are updated accordingly. This fact, stated as  Claim~\ref{claim:Strong decomp claim 4}, is important, since otherwise the algorithm might end up
in   an infinite loop. The following result will be used in our proof.

\begin{lemma}[Lemma~2.6, \cite{GT14}] \label{lem:Inner Conductance P_i}
    Let $\mathrm{core}(P_i) \subseteq P_i \subseteq V$, and  $S \subset P_i$ be such that $\vol(S^+)~\leq~\vol(\mathrm{core}(P_i))/2$.  Suppose that the following hold for some parameters $\rho$ and $0 < \varepsilon < 1$:
    \begin{enumerate}
        \item $\rho \leq \Phi_G(S^-)$ and $\rho \leq \max \{ \Phi_G(S^+), \Phi_G(\barr{S^+})\}$;
        \item  If $S^- \neq \emptyset$, then $w(S^- \rightarrow P_i) \geq w(S^- \rightarrow V)/k$;
        \item If $S^+ \neq \emptyset$, then $\varphi(S^+, \mathrm{core}(P_i)) \geq \varepsilon/3$ and $\varphi\left(\barr{S^+}, \mathrm{core}(P_i)\right) \geq \varepsilon/3$.
    \end{enumerate}
    Then,  it holds that $$\Phi_{G[P_i]}(S) \geq \varepsilon \cdot \frac{\rho}{14k}.$$
\end{lemma}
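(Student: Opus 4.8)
The plan is to lower-bound the cut weight $w_{G[P_i]}(S, P_i\setminus S)$ in the numerator of $\Phi_{G[P_i]}(S)$ by splitting it along the partitions $S = S^+\sqcup S^-$ and $P_i\setminus S = \barr{S^+}\sqcup\barr{S^-}$, and to compare it against $\vol_{G[P_i]}(S)\le \vol(S^+)+\vol(S^-)$. Since all four of $S^+,\barr{S^+},S^-,\barr{S^-}$ lie inside $P_i$, every edge counted below is an edge of $G[P_i]$, so I may pass freely between $G$ and $G[P_i]$ for these quantities. Writing $a\triangleq w(S^+,\barr{S^+})$, we have $w_{G[P_i]}(S,P_i\setminus S)\ge a + w(S^-,P_i\setminus S)$, so it suffices to show, up to constants, that $a\gtrsim \varepsilon\rho\cdot\vol(S^+)$ and that the $S^-$ part contributes $\gtrsim (\varepsilon\rho/k)\cdot\vol(S^-)$; dividing by the volume then yields the claimed bound, with the constants tuned so that it lands at $\varepsilon\rho/(14k)$.

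First I would bound $a$. Observing $w(S^+\!\to\core(P_i)) = w(\barr{S^+}\!\to\core(P_i)) = a$, hypothesis~(3) together with $\vol(\barr{S^+})\ge\vol(\core(P_i))/2$ (which follows from $\vol(S^+)\le\vol(\core(P_i))/2$) gives $a\ge \tfrac{\varepsilon}{6}\,w(S^+\!\to V\setminus\core(P_i))$ and $a\ge \tfrac{\varepsilon}{3}\tfrac{\vol(S^+)}{\vol(\core(P_i))}\,w(\barr{S^+}\!\to V\setminus\core(P_i))$. By hypothesis~(1), at least one of $\Phi_G(S^+),\Phi_G(\barr{S^+})$ is $\ge\rho$, i.e.\ one of $S^+,\barr{S^+}$ emits at least $\rho$ times its own volume in edges, split between $\core(P_i)$ and $V\setminus\core(P_i)$. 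Combining this emission bound with the appropriate one of the two relative-conductance inequalities — via a threshold split (either $a$ already captures a constant fraction of that emission, or almost all of it escapes $\core(P_i)$ and the relative-conductance inequality takes over) — produces $a\ge c_1\varepsilon\rho\,\vol(S^+)$ for an explicit constant $c_1$. The degenerate case $S^+=\emptyset$ is trivial since then $a=0=\vol(S^+)$.

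For the $S^-$ part, hypotheses~(1)--(2) give $w(S^-\!\to P_i)\ge \tfrac1k w(S^-\!\to V) = \tfrac1k\vol(S^-)\Phi_G(S^-)\ge \tfrac{\rho}{k}\vol(S^-)$, and $w(S^-\!\to P_i)=w(S^-,S^+)+w(S^-,P_i\setminus S)$. Since every $S^-$--$S^+$ edge leaves $\core(P_i)$ on the $S^+$ side, $w(S^-,S^+)\le w(S^+\!\to V\setminus\core(P_i))\le \tfrac{6}{\varepsilon}a$ by the first inequality of the previous step. Hence either the ``extra'' term $w(S^-,S^+)$ is at most half of $\tfrac{\rho}{k}\vol(S^-)$, forcing $w(S^-,P_i\setminus S)\ge \tfrac{\rho}{2k}\vol(S^-)$, or else $a\gtrsim \tfrac{\varepsilon\rho}{k}\vol(S^-)$. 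In the first branch, $w_{G[P_i]}(S,P_i\setminus S)\ge a + w(S^-,P_i\setminus S)\gtrsim \varepsilon\rho\,\vol(S^+) + \tfrac{\rho}{k}\vol(S^-)\gtrsim \tfrac{\varepsilon\rho}{k}\bigl(\vol(S^+)+\vol(S^-)\bigr)$; in the second branch $a$ alone already dominates $\tfrac{\varepsilon\rho}{k}\bigl(\vol(S^+)+\vol(S^-)\bigr)$. Either way $\Phi_{G[P_i]}(S)\ge \tfrac{w_{G[P_i]}(S,P_i\setminus S)}{\vol_{G[P_i]}(S)}\ge \tfrac{\varepsilon\rho}{14k}$ once the split thresholds are chosen to optimise the constant.

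The main obstacle is entirely in the constant bookkeeping of the step bounding $a$: the second relative-conductance inequality carries the factor $\vol(S^+)/\vol(\core(P_i))$, which can be arbitrarily small when $S^+$ is a tiny piece of the core, so one cannot apply that inequality directly and must instead route through the emission bound on $\barr{S^+}$ with a carefully balanced threshold. Choosing all the split thresholds (both here and in the $S^-$ step) so that the resulting constants combine to exactly $1/(14k)$, while only ever invoking $0<\varepsilon<1$ and $k\ge 1$ to absorb slack, is where the bulk of the care goes; conceptually the argument is just the two-front lower bound sketched above.
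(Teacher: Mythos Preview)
The paper does not prove this lemma at all; it is quoted verbatim as Lemma~2.6 of Gharan--Trevisan~\cite{GT14} and used as a black box in the proof of Claim~\ref{claim:Strong decomp claim 4}. So there is no in-paper argument to compare against.

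Your approach is correct and is in fact the natural one (and essentially the one in~\cite{GT14}). A couple of remarks that may save you time on the constant bookkeeping. First, your ``main obstacle'' is not really an obstacle: in the case $\Phi_G(\barr{S^+})\ge\rho$, the small factor $\vol(S^+)/\vol(\core(P_i))$ is harmless because $w(\barr{S^+}\to V\setminus\core(P_i))\ge \rho\,\vol(\barr{S^+})-a\ge \tfrac{\rho}{2}\vol(\core(P_i))-a$, and the $\vol(\core(P_i))$ cancels. Carrying this through (and similarly in the case $\Phi_G(S^+)\ge\rho$, using $a(1+6/\varepsilon)\ge\rho\,\vol(S^+)$) yields the clean bound $a\ge \tfrac{\varepsilon\rho}{7}\vol(S^+)$ rather than $\tfrac{\varepsilon\rho}{12}$. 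Second, to land on exactly $1/(14k)$, set the threshold in the $S^-$ step at $w(S^-,S^+)\le \tfrac{6}{7}\cdot\tfrac{\rho}{k}\vol(S^-)$ instead of one half: in the first branch the cut is at least $\tfrac{\varepsilon\rho}{7}\vol(S^+)+\tfrac{\rho}{7k}\vol(S^-)$; in the second branch $a\ge \tfrac{\varepsilon}{6}w(S^-,S^+)>\tfrac{\varepsilon\rho}{7k}\vol(S^-)$ together with $a\ge\tfrac{\varepsilon\rho}{7}\vol(S^+)$ gives $\vol(S^+)+\vol(S^-)\le \tfrac{7(k+1)a}{\varepsilon\rho}\le \tfrac{14ka}{\varepsilon\rho}$ for $k\ge 1$. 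Either way $\Phi_{G[P_i]}(S)\ge \tfrac{\varepsilon\rho}{14k}$.
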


\begin{claim}\label{claim:Strong decomp claim 4}
    If at least one condition of the \texttt{while}-loop is satisfied, then at least one of the \texttt{if}-conditions~(Lines~\ref{alg:line:condition if-1},\ref{alg:line:condition if-2},\ref{alg:line:condition if-3},\ref{alg:line:condition if-4} or \ref{alg:line:condition if-5}) is satisfied.  
\end{claim}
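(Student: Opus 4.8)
The plan is to assume, for contradiction, that at least one \texttt{while}-condition holds but none of the five \texttt{if}-conditions in Lines~\ref{alg:line:condition if-1}--\ref{alg:line:condition if-5} is satisfied, and then derive a violation of Lemma~\ref{lem:Upperbound eig induced graphs}. First I would split into the two cases corresponding to which \texttt{while}-condition triggers. If only condition~$1$ of the \texttt{while}-loop holds (i.e. $w(P_i \setminus \core(P_i) \to P_i) < w(P_i \setminus \core(P_i) \to P_j)$ for some $j \neq i$), then taking $S = \core(P_i)$ so that $S^- = P_i \setminus \core(P_i)$, the negation of the \texttt{if}-condition in Line~\ref{alg:line:condition if-4} is exactly the statement that this \texttt{while}-condition fails — contradiction. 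So the interesting case is when condition~$2$ of the \texttt{while}-loop holds: \SpecPart returns $S \subseteq P_i$ with $\vol(S^+) \leq \vol(\core(P_i))/2$ and $\max\{\Phi_{G[P_i]}(S), \Phi_{G[P_i]}(P_i \setminus S)\} < \phiIn$.

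Next I would read off what the failure of each of Lines~\ref{alg:line:condition if-1},\ref{alg:line:condition if-2},\ref{alg:line:condition if-3},\ref{alg:line:condition if-5} tells us, and assemble the hypotheses of Lemma~\ref{lem:Inner Conductance P_i} with $\rho = \lp 1 + \tfrac{1}{k+1}\rp^{r+1}\cdot\rho^*$ and $\varepsilon = \tfrac{1}{k+1}$. Concretely: the failure of Line~\ref{alg:line:condition if-1} gives $\max\{\Phi_G(S^+), \Phi_G(\barr{S^+})\} > \rho$; the failure of Line~\ref{alg:line:condition if-3} gives $\Phi_G(S^-) > \rho$; together these supply hypothesis~$1$ of Lemma~\ref{lem:Inner Conductance P_i}. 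The failure of Line~\ref{alg:line:condition if-5} gives $w(S^- \to P_i) \geq w(S^- \to P_j)$ for every $j \neq i$, and then the averaging argument (as in the proof of Item~(2) of Claim~\ref{claim:Strond decomp claim 3}) yields $w(S^- \to P_i) \geq w(S^- \to V)/k$, i.e.\ hypothesis~$2$. Finally the failure of Line~\ref{alg:line:condition if-2} gives $\min\{\varphi(S^+,\core(P_i)), \varphi(\barr{S^+},\core(P_i))\} > \tfrac{1}{3(k+1)} = \varepsilon/3$, which is hypothesis~$3$. Lemma~\ref{lem:Inner Conductance P_i} then concludes $\Phi_{G[P_i]}(S) \geq \varepsilon\cdot\tfrac{\rho}{14k} = \tfrac{\rho^*}{14k(k+1)}\cdot\lp 1+\tfrac{1}{k+1}\rp^{r+1}$, and a symmetric application (swapping $S$ with $P_i \setminus S$, noting the \texttt{if}-conditions are stated symmetrically in $S^+$ and $\barr{S^+}$ for Lines~\ref{alg:line:condition if-1},\ref{alg:line:condition if-2}) gives the same lower bound for $\Phi_{G[P_i]}(P_i \setminus S)$.

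To close the contradiction I would compare this lower bound on $\min\{\Phi_{G[P_i]}(S), \Phi_{G[P_i]}(P_i\setminus S)\}$ against the \SpecPart guarantee $\Phi_{G[P_i]}(S) < \phiIn$ from the \texttt{while}-condition. Since $\rho^* \geq \min\{\lambda_{k+1}/10, 30c_0(k+1)^5\sqrt{\lambda_k}\}$, and using $\phiIn = \lambda_{k+1}/(140(k+1)^2)$, a short numerical comparison shows $\tfrac{\rho^*}{14k(k+1)} \geq \phiIn$, so the two bounds are incompatible — contradiction. Actually there is a subtlety: if $S^+ = \emptyset$ or $S^- = \emptyset$ then some hypotheses of Lemma~\ref{lem:Inner Conductance P_i} are vacuous, so I would handle those degenerate cases separately; when $S^- = \emptyset$ the set $S = S^+ \subseteq \core(P_i)$ and one argues directly from the failure of Line~\ref{alg:line:condition if-1} or uses relative conductance via Line~\ref{alg:line:condition if-2}, while when $S^+ = \emptyset$ one uses the failure of Line~\ref{alg:line:condition if-3} and Line~\ref{alg:line:condition if-5} together with a direct conductance-of-$S^-$ argument. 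The main obstacle I anticipate is precisely this bookkeeping of degenerate cases and making sure the $r$-dependent factor $\lp 1+\tfrac{1}{k+1}\rp^{r+1} \leq \mathrm{e}$ is bounded uniformly (which follows from Claim~\ref{claim:Strong decomp claim 2}, $r \leq k$), so that the final numerical inequality $\rho^*/(14k(k+1)) \geq \phiIn$ holds with room to spare.
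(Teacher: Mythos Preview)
Your overall structure matches the paper's proof: reduce to the case where \texttt{while}-condition~2 holds, negate all five \texttt{if}-conditions, and feed the resulting inequalities into Lemma~\ref{lem:Inner Conductance P_i} with $\varepsilon = 1/(k+1)$ to lower-bound $\Phi_{G[P_i]}(S)$, then contradict the \SpecPart guarantee $\Phi_{G[P_i]}(S) < \phiIn$. The handling of \texttt{while}-condition~1 and the assembly of the hypotheses of Lemma~\ref{lem:Inner Conductance P_i} are fine.

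The genuine gap is in your final step. You take $\rho = \bigl(1+\tfrac{1}{k+1}\bigr)^{r+1}\rho^*$ and claim that ``a short numerical comparison shows $\rho^*/(14k(k+1)) \geq \phiIn$''. This inequality is false in general: $\rho^* = \min\{\lambda_{k+1}/10,\ 30c_0(k+1)^5\sqrt{\lambda_k}\}$, and when the second term is the minimum (in particular when $\lambda_k$ is very small or zero) $\rho^*$ can be arbitrarily small compared to $\phiIn = \lambda_{k+1}/(140(k+1)^2)$. No uniform numerical inequality between $\rho^*$ and $\phiIn$ exists, so your contradiction does not close.

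The paper repairs this by setting $\rho = \max\{\rho^*, \rho(r+1)\}$, not just $\rho^*$: the failures of Lines~\ref{alg:line:condition if-1} and~\ref{alg:line:condition if-3}, combined with Claim~\ref{claim:Strong decomp claim 1}, show that $S^-$ (resp.\ $S^+,\barr{S^+}$) together with the existing core sets give $r+1$ disjoint sets all of low conductance, so $\Phi_G(S^-)\geq\rho(r+1)$ and $\max\{\Phi_G(S^+),\Phi_G(\barr{S^+})\}\geq\rho(r+1)$ as well. Lemma~\ref{lem:Inner Conductance P_i} then yields $\Phi_{G[P_i]}(S)\geq \max\{\rho^*,\rho(r+1)\}/(14(k+1)^2)$. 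The contradiction is closed by a case split: if $r=k$ then $\rho(k+1)\geq\lambda_{k+1}/2$ by higher-order Cheeger and we are done; if $r<k$ one uses Lemma~\ref{lem:Upperbound eig induced graphs} (which you mention at the outset but never invoke) together with the fact that the algorithm selects the partition $P_i$ with \emph{smallest} $\lambda_2(G[P_i])$, to force $\rho^*=\lambda_{k+1}/10$ and again conclude $\Phi_{G[P_i]}(S)\geq\phiIn$. Both of these ingredients --- the $\rho(r+1)$ term and the ordering of the $P_i$ by $\lambda_2$ --- are missing from your plan and are essential.
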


\begin{proof}
First of all, notice that if the first condition of the \texttt{while}-loop  is satisfied, then the \texttt{if}-condition in Line~\ref{alg:line:condition if-4} will be satisfied and the claim holds. Hence,   we assume   that only the second condition of the \texttt{while}-loop is satisfied,  and we   prove the claim by contradiction. That is, we   show that, if none of the \texttt{if}-conditions holds, then the set $S$ returned by  the \SpecPart algorithm would satisfy that $\Phi_{G[P_i]}(S)\geq \phiIn$.
 The proof is structured in the following two steps:
\begin{enumerate}
    \item We  first prove that  $\Phi_{G[P_i]}(S) \geq \frac{\max \{ \rho^*, \rho(r+1)\}}{14(k+1)^2}$;
    \item Using Item~$(1)$ we prove that $\Phi_{G[P_i]}(S)\geq \phiIn$ and reach our desired contradiction.
\end{enumerate} 

 \paragraph{Step~1:}
    We prove this fact by applying Lemma~\ref{lem:Inner Conductance P_i} with parameters
    \[
        \rho \triangleq \max \{\rho^*, \rho(r+1)\}
        \quad \text{and}
        \quad \varepsilon \triangleq \frac{1}{k+1}.
    \]
    Let us show that the conditions of Lemma~\ref{lem:Inner Conductance P_i} are satisfied, beginning with the first one. If $S^{-} = \emptyset$, then we trivially have that $1 = \Phi_G(S^{-}) \geq \rho$; so we assume that $S^{-} \neq \emptyset$. As the \texttt{if}-condition in Line~\ref{alg:line:condition if-3} is not  satisfied, we have that 
    \[
        \Phi_G(S^-)\geq \left(1 + \frac{1}{k+1} \right)^{r+1}\cdot \rho^*,
    \]
    and combining this with Claim~\ref{claim:Strong decomp claim 1} gives us that
    \[
        \Phi_G(S^-) = \max 
        \left\{ \Phi_G(\core(P_1)), \dots, \Phi_G(\core(P_r)), \Phi_G(S^-)\right\} \geq \rho(r+1).
    \]
    Therefore, we have that 
    \begin{equation}\label{eq:Lower Bound Cond S-}
        \Phi_G(S^-) \geq \max\{\rho^*, \rho(r+1)\} = \rho.
    \end{equation}
    Similarly, if $S^+ = \emptyset$, then we trivially have that $1 = \max \{\Phi_G(S^+), \Phi_G(\overline{S^+})\} \geq \rho$; so we assume that $S^+ \neq \emptyset$. Moreover, since
    we have chosen $S^+$ such that $\vol(S^+) \leq \vol(\mathrm{core}(P_i))/2$, we know that $\overline{S^+} = \mathrm{core}(P_i) \setminus S^+ \neq \emptyset$. As the \texttt{if}-condition in Line~\ref{alg:line:condition if-1} is not satisfied, we have that
    \[
    \max\left\{\Phi_G(S^+), \Phi_G\left(\overline{S^+}\right)\right\}\geq \left(1 + \frac{1}{k+1}\right)^{r+1} \cdot\rho^*.
    \]
    Combining this with Claim~\ref{claim:Strong decomp claim 1} gives us that
    \begin{align*}
        \lefteqn{\max\left\{\Phi_G(S^+), \Phi_G\left(\overline{S^+}\right)\right\}}\\
        &=  \max \left\{ \Phi_G(\core(P_1)), \dots, \Phi_G(\core(P_{i-1})), \Phi_G(S^+), \Phi_G\left(\barr{S^+}\right), \Phi_G(\core(P_{i+1})), \dots, \Phi_G(\core(P_{r}))\right\}\\
        &\geq   \rho(r+1).
    \end{align*}
    Therefore we have that 
    \begin{equation}\label{eq:Lower bound S+ and S+bar}
     \max\left\{\Phi_G(S^+), \Phi_G\left(\overline{S^+}\right)\right\} \geq \max\{\rho^*, \rho(r+1)\} = \rho.
    \end{equation}
    Combining  \eqref{eq:Lower Bound Cond S-} and \eqref{eq:Lower bound S+ and S+bar},  we see that the first condition of Lemma~\ref{lem:Inner Conductance P_i} is satisfied.
    Since the \texttt{if}-condition in Line~\ref{alg:line:condition if-5} is not satisfied, it follows by an averaging argument that 
    \[
         w(S^- \rightarrow P_i) \geq  \frac{w(S^- \rightarrow V)}{k}, 
    \]
    which shows that the second condition of Lemma~\ref{lem:Inner Conductance P_i} is satisfied.
    Finally, since the \texttt{if}-condition in Line~\ref{alg:line:condition if-2} is not satisfied, we know that
    \[
        \min \left\{ \varphi(S^+, \core(P_i)), \varphi\left(\barr{S^+}, \core(P_i)\right) \right\} \geq \frac{1}{3(k+1)},
    \]
    which shows that the third condition of Lemma~\ref{lem:Inner Conductance P_i} is satisfied as well. 
Hence, by Lemma~\ref{lem:Inner Conductance P_i} we conclude that
\begin{equation}\label{eq:claim:Strong decomp claim 4 eq1}
    \Phi_{G[P_i]}(S) 
    \geq \frac{\varepsilon \cdot \rho}{14(k+1)}
    = \frac{\max\{ \rho^*, \rho(r+1)\} }{14(k+1)^2},
\end{equation}
which completes the proof of the first step.

\paragraph{Step~2:}We prove this step with a case distinction as follows.

    \emph{Case~1: $r = k$.} By \eqref{eq:claim:Strong decomp claim 4 eq1} and Lemma~\ref{lem:Higher Cheeger}, we have that 
    \[
        \Phi_{G[P_i]}(S) 
        \geq \frac{\rho(r+1)}{14(k+1)^2}
        = \frac{\rho(k+1)}{14(k+1)^2}
        \geq \frac{\lambda_{k+1}}{28(k+1)^2}
        \geq \phiIn,
    \]
    which leads to the desired contradiction.  
    
    \emph{Case~2: $r < k$.} Recall that the  partition  sets $\{P_i\}_{i=1}^{r}$ are labelled such that $\lambda_2(G[P_1]) \leq \ldots \leq \lambda_2(G[P_r])$, and the algorithm has chosen the lowest index $i$ for which the set $S\subset P_i$ returned by the \SpecPart algorithm satisfies the second condition of the \texttt{while}-loop.
     Our proof is based on  a further case distinction depending  on the value of  $i$.

    \emph{Case~2a:} $i=1$~(i.e., the algorithm selects $S \subseteq P_1$).
    We combine  the performance of the \SpecPart  algorithm~(Lemma~\ref{lem:Cheeger's ineq}) with  Lemma~\ref{lem:Upperbound eig induced graphs}, and obtain that
    \begin{equation}\label{eq:claim:Strong decomp claim 4 eq2}
        \Phi_{G[P_1]}(S) \leq  \sqrt{2\lambda_2(G[P_1])} = \min_{1 \leq j \leq r} \sqrt{2 \lambda_2(G[P_j])} \leq \sqrt{4c_0 \cdot k^6 \cdot \lambda_k}.
    \end{equation}
    Combining   \eqref{eq:claim:Strong decomp claim 4 eq1} and \eqref{eq:claim:Strong decomp claim 4 eq2} we have that 
    \[
        \rho^* \leq 28 c_0 \cdot (k+1)^5\sqrt{\lambda_k}.
    \]
    Thus, by the definition of $\rho^*$ we   have that 
    \[
        \rho^* = \frac{\lambda_{k+1}}{10}.
    \]
    We combine this with  \eqref{eq:claim:Strong decomp claim 4 eq1}, and have that 
    \[
        \Phi_{G[P_i]}(S) \geq \frac{\lambda_{k+1}}{140 (k+1)^2} = \phiIn,
    \]
    which gives our desired contradiction.
    
    \emph{Case~2b: $i>1$}~(i.e., the algorithm selects $S \subset P_i$  for some $i \geq 2$).
    Let $S_1 \subset P_1$ be the set obtained by applying the \SpecPart algorithm to the graph $G[P_1]$. Since the algorithm did not select $S_1 \subset P_1$, we know that $\Phi_{G[P_1]}(S_1) \geq \phiIn$. Combining the performance of the \SpecPart algorithm~(Lemma~\ref{lem:Cheeger's ineq}) with Lemma~\ref{lem:Upperbound eig induced graphs}, we have that  
    \[
        \phiIn \leq \Phi_{G[P_1]}(S_1) \leq \min_{1 \leq j \leq r} \sqrt{2 \lambda_{2}(G[P_j])} \leq 2c_0 \cdot k^3 \cdot \sqrt{\lambda_k}.
    \]
       This gives us  that
    \[
         \frac{\lambda_{k+1}}{10} = 14(k+1)^2 \cdot \phiIn < 30c_0 \cdot (k+1)^5 \cdot \sqrt{\lambda_k},
    \]
    and it holds by the definition of $\rho^*$ that
    \[
        \rho^* = \frac{\lambda_{k+1}}{10}.
    \]
    Therefore,  by \eqref{eq:claim:Strong decomp claim 4 eq1} we have that
    \[
        \Phi_{G[P_i]}(S) \geq \frac{\lambda_{k+1}}{140 (k+1)^2} = \phiIn.
    \]
    Combining the two cases above gives us the desired contradiction. With this, we complete the proof of the claim. 
\end{proof}

Next, we will show that the total number of iterations that the algorithm runs, i.e.,  the number of times the instruction ``\texttt{go to Line~\ref{alg:while line}}'' is executed, is finite.

\begin{claim}\label{claim:Strong decomp claim 5}
For any graph $G=(V,E,w)$ with the minimum weight $w_{\min}$ as the input, Algorithm~\ref{alg:Strong Decomposition} terminates after executing the while-loop $O \lp  k \cdot n \cdot \vol(G)/w_{\min}\rp$ times.
\end{claim}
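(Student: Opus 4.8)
The plan is to run a standard potential-function argument. For a potential I would take
\[
    \Psi \triangleq \sum_{i \neq j} w(P_i, P_j) \;+\; \sum_{i=1}^{r} \vol_G(\core(P_i)),
\]
the total weight of the crossing edges plus the total volume of the core sets. Since $\{P_i\}_{i=1}^r$ always partitions $V$ and each $\core(P_i) \subseteq P_i$, the second sum is at most $\vol(G)$, so $0 \leq \Psi \leq \tfrac{3}{2}\vol(G)$ throughout the whole execution. The strategy is to show that every update that does \emph{not} increase the number $r$ of clusters decreases $\Psi$ by at least $w_{\min}$, and that updates increasing $r$ occur only a bounded number of times and raise $\Psi$ by at most $\tfrac{3}{2}\vol(G)$ each.

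First I would observe that $r$ is non-decreasing: the only lines that alter $r$ are Lines~\ref{alg:line:update if-1}, \ref{alg:line:update if-3} and \ref{alg:line:update if-6}, each incrementing it by one; the "merge" steps in Lines~\ref{alg:line:update if-4}, \ref{alg:line:update if-5}, \ref{alg:line:update if-8} move one of $P_i\setminus\core(P_i)$, $S^-$, $N^-$ from one cluster into another, and since these moved sets are always proper subsets of $P_i$ (because $\core(P_i)$ is nonempty), the donor cluster never disappears, so no two clusters are ever fused. Combined with Claim~\ref{claim:Strong decomp claim 2} (which gives $r \leq k$ at all times), this shows these "split" updates happen at most $k-1$ times in total. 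Next I would check each of the remaining updates. For the two core-refinement steps (Lines~\ref{alg:line:update if-2}, \ref{alg:line:update if-7}) the partition $\{P_i\}$ is untouched, so the crossing weight is unchanged, while $\core(P_i)$ is replaced by a \emph{proper} nonempty subset of itself; hence $\Psi$ drops by the volume of the removed vertices, which is at least $w_{\min}$ (assuming w.l.o.g.\ that $G$ has no isolated vertices). For the three merge steps (Lines~\ref{alg:line:update if-4}, \ref{alg:line:update if-5}, \ref{alg:line:update if-8}) the core volumes are untouched (the moved sets are disjoint from all cores), and a short edge-counting computation shows the crossing weight changes by exactly $w(X\to P_i) - w(X\to P_j)$, where $X$ is the moved set and $P_j$ the cluster it joins; this is strictly negative by the corresponding \texttt{if}-condition, and at least $w_{\min}$ in magnitude when the edge weights are viewed as integer multiples of $w_{\min}$. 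Finally, a split update raises $\Psi$ by at most $\tfrac32\vol(G)$ (trivially, since $0\le\Psi\le\tfrac32\vol(G)$ always holds, as noted above).

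Putting this together: the total decrease of $\Psi$ over all non-split updates is at most its initial value plus the total increase caused by split updates, i.e.\ at most $\tfrac32\vol(G) + (k-1)\cdot\tfrac32\vol(G) = O(k\vol(G))$. Hence there are $O(k\vol(G)/w_{\min})$ non-split updates, and together with the at most $k-1$ split updates the algorithm performs $O(k\vol(G)/w_{\min})$ updates in all. By Claim~\ref{claim:Strong decomp claim 4}, every pass through the \texttt{while}-loop that does not terminate the algorithm performs exactly one such update (and the post-loop Lines~\ref{alg:line:condition if-6}--\ref{alg:line:update if-8} either perform one and jump back, or return), so the \texttt{while}-loop is entered $O(k\vol(G)/w_{\min}) = O(k\,n\,\vol(G)/w_{\min})$ times.

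The part I expect to be the main obstacle is the bookkeeping around the split steps and the invariant that $r$ never decreases: one must verify carefully that carving out a new cluster together with re-assigning a core (Lines~\ref{alg:line:update if-1}, \ref{alg:line:update if-3}, \ref{alg:line:update if-6}) keeps $\sum_i\vol(\core(P_i)) \le \vol(G)$ and hence cannot blow up $\Psi$, and that each merge step leaves the donor cluster nonempty (which is exactly where the invariant $\core(P_i)\neq\emptyset$ is needed). A secondary technical point is justifying the uniform per-step decrease of $w_{\min}$, which relies on the granularity of the edge weights and on discarding isolated vertices; this is routine but should be spelled out.
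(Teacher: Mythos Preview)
Your proof is correct and in fact yields the tighter bound $O\bigl(k\,\vol(G)/w_{\min}\bigr)$, which of course implies the stated one. The paper takes a more elementary but looser route: it classifies updates into the same three types you implicitly use (splits creating a new $P_{r+1}$; core contractions; partition-set merges), but then bounds them by a \emph{nested} product rather than a single potential. Concretely, the paper argues: splits happen at most $k$ times (Claim~\ref{claim:Strong decomp claim 2}); for each fixed value of $r$ there are at most $n$ core contractions, since each strictly shrinks some $\core(P_i)$; and for each fixed $r$ and fixed core configuration there are at most $O(\vol(G)/w_{\min})$ merges, since each strictly lowers the crossing weight. Multiplying gives $O\bigl(k\cdot n\cdot\vol(G)/w_{\min}\bigr)$.

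Your potential $\Psi = \sum_{i\neq j}w(P_i,P_j) + \sum_i\vol(\core(P_i))$ cleanly unifies the last two counts: core contractions lower the second summand, merges lower the first, and only the $\le k-1$ splits can raise $\Psi$, each by at most $\tfrac32\vol(G)$. This avoids the over-counting in the paper's product bound and saves a factor of $n$. The paper's argument is marginally simpler to verify (no potential to track across step types) but pays for it in the bound. Your caveats about nonempty cores and the $w_{\min}$ granularity of the merge-step decrease are exactly right and apply equally to the paper's proof.
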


\begin{proof} 
Notice that the algorithm goes back to check the loop conditions~(Line~\ref{alg:while line})
right after any of   Lines~\ref{alg:line:update if-1}, \ref{alg:line:update if-2}, \ref{alg:line:update if-3}, \ref{alg:line:update if-4}, \ref{alg:line:update if-5}, \ref{alg:line:update if-6}, \ref{alg:line:update if-7}~and~\ref{alg:line:update if-8} is executed, and each of these commands changes the current structure of our partition $\{P_i\}_{i=1}^{r}$ with core sets $\core(P_i) \subseteq P_i$. We classify these updates into the following three types:
    \begin{enumerate}
        \item The updates that introduce  a new partition set $P_{r+1}$. These correspond to Lines~\ref{alg:line:update if-1}, \ref{alg:line:update if-3} and \ref{alg:line:update if-6};
        \item The updates  that contract the core sets $\core(P_i)$ to a strictly smaller subset $T \subset \core(P_i)$.  These correspond to Lines~\ref{alg:line:update if-2} and \ref{alg:line:update if-7}; 
        \item The updates that refine the partition sets $\{P_i\}_{i=1}^{r}$ by moving  a subset $ T \subseteq P_i \setminus \core(P_i)$ from the partition set $P_i$ to a different partition set $P_j$, for some $P_i \neq P_j$. These correspond to Lines~\ref{alg:line:update if-4}, \ref{alg:line:update if-5}, and \ref{alg:line:update if-8}. 
    \end{enumerate}
    We prove that these updates can occur only a finite number of times. The first type of updates can occur at most $k$ times, since we know by Claim~\ref{claim:Strong decomp claim 2}  that the algorithm outputs $\ell\leq k$ clusters.  Secondly, for a fixed value of $\ell$, the second type of updates occurs  at most $n$ times, since each update decreases the size of some $\core(P_i)$ by at least one. Finally, for a fixed $\ell$ and a fixed configuration of core sets $\core(P_i) \subseteq P_i$, the third type of updates occurs at most  $O(\vol(G)/w_{\min})$  times. This is due to the fact that, whenever every such update is  executed, the total weight between different partition sets, i.e., $\sum_{i\neq j } w(P_i, P_j)$, decreases by at least $w_{\min}$.   Combining everything together proves the lemma.
\end{proof}

Finally, we bring everything together and prove Lemma~\ref{lem:Improved Decomposition}.

\begin{proof}[Proof of Lemma~\ref{lem:Improved Decomposition}]
We first show that Properties~$(A1), (A2)$ and $(A3)$ hold, and in the end we analyse the runtime of the algorithm. Combining Items~(2) and (3)  of Claim~\ref{claim:Strond decomp claim 3} with the choices of $\phiIn, \phiOut$ in \eqref{eq:PhiIn PhiOut Choices}, we obtain for all $1 \leq i \leq \ell$ that $\Phi_G(P_i) \leq \phiOut = O \lp k^6 \cdot \sqrt{\lambda_k}\rp$ and $\Phi_{G[P_i]} \geq  \phiIn^2/ 4 = \Omega \lp\lambda_{k+1}^2/k^4 \rp$. Hence, Properties $(A1)$ and $(A2)$ hold for every $P_i$.

To analyse Property~$(A3)$, we fix an arbitrary node $N\in\mathcal{S}_i$ that belongs to the partition set $P_i$ with core set $\core(P_i)$.
By definition, we have that 
\[
    w(N, V\setminus P_i) = w(N^+, V\setminus P_i) + w(N^-, V\setminus P_i).
\] 
We study $w(N^+, V\setminus P_i)$ and $ w(N^-, V\setminus P_i)$ separately.

\paragraph{Bounding  the value of  $w(N^+, V \setminus P_i)$:} We analyse $w(N^+, V \setminus P_i)$ by the following case distinction.

\underline{Case~1: $\vol(N^+) \leq \vol(\core(P_i))/2$.}
By Item~(4)  of Claim~\ref{claim:Strond decomp claim 3} we know that 
\[
    \varphi(N^+, \core(P_i)) \geq \frac{1}{3(k+1)},
\]
which is equivalent to 
\[
    3(k+1) \cdot \frac{\vol(\core(P_i))}{\vol(\barr{N^+})} \cdot w(N^+ \rightarrow \core(P_i)) \geq  w(N^+ \rightarrow V \setminus \core(P_i)).
\]
This implies that 
\[
    6(k+1) \cdot w(N^+ \rightarrow \core(P_i)) \geq w(N^+ \rightarrow V \setminus \core(P_i)),
\]
and we have   that
    \begin{align*}
        w(N^+, V \setminus P_i) 
        & 
        \leq w(N^+ \rightarrow V \setminus \core(P_i)) 
         \leq 6(k+1) \cdot w(N^+ \rightarrow \core(P_i))\\
         & 
        \leq 6(k+1) \cdot \vol_{G[P_i]}(N^+).
    \end{align*}
    
    \underline{Case~2: $\vol(N^+) >  \vol(\core(P_i))/2$.}
    We have that 
    \begin{align}
        w(N^+, V \setminus P_i) 
        &\leq w(\core(P_i), V \setminus P_i) 
        \leq w(\core(P_i), V \setminus \core(P_i)) \nonumber
       \\
       & = \vol(\core(P_i)) \cdot \Phi_G(\core(P_i))\leq \vol(\core(P_i)) \cdot \frac{\phiOut}{k+1} \nonumber \\
       & 
        < \frac{2\phiOut}{k+1} \cdot \vol(N^+) \label{eq:claim:cost crossing edges eq1}, 
    \end{align}
     where the third  inequality follows by Item~(1)  of Claim~\ref{claim:Strond decomp claim 3}. Therefore,  we have that 
    \begin{equation}\label{eq:claim:cost crossing edges eq2}
        \vol_{G[P_i]}(N^+) = \vol(N^+) - w(N^+, V \setminus P_i) > \vol(N^+) \lp 1 - \frac{2\phiOut}{k+1} \rp,
    \end{equation}
    where the last inequality follows by \eqref{eq:claim:cost crossing edges eq1}. We further combine   \eqref{eq:claim:cost crossing edges eq1} with  \eqref{eq:claim:cost crossing edges eq2}, and obtain  that 
    \[
        w(N^+, V \setminus P_i) 
        \leq \frac{2\phiOut}{k+1} \cdot \frac{1}{1 - \frac{2\phiOut}{k+1}} \cdot \vol_{G[P_i]}(N^+)
        \leq \frac{2\phiOut}{k} \cdot \vol_{G[P_i]}(N^+),
    \]
    where the last inequality holds by our assumption that $\phiOut < 1/3$.
    
    Therefore,  combining the two cases above gives us that 
    \begin{equation}\label{eq:claim:cost crossing edges eq3}
        w(N^+, V \setminus P_i) \leq   6(k+1)   \cdot \vol_{G[P_i]}\left(N^+\right).
    \end{equation}  
    
    \paragraph{Bounding the value of  $w(N^-, V \setminus P_i)$:}
    We analyse   $w(N^-, V \setminus P_i)$ based on the following two cases. 
    
    \underline{Case~1: $\vol(N^-) \leq \vol(P_i)/ 2$.} By Item~(5) of Claim~\ref{claim:Strond decomp claim 3}, we know that 
    \[
        w(N^- \rightarrow P_i) \geq w(N^-\rightarrow V \setminus P_i) \cdot \frac{1}{(k+1)},
    \]
    which gives us that   
    \[
        w(N^-, V \setminus P_i) \leq (k+1) \cdot w(N^- \rightarrow P_i) \leq (k+1)
        \cdot \vol_{G[P_i]}(N^-).
    \]
    
   \underline{Case 2: $\vol(N^-) >  \vol(P_i)/2$.} In this case, we have that
    \begin{equation}\label{eq:claim:cost crossing edges eq4}
        w(N^-, V \setminus P_i) 
        \leq w(P_i, V \setminus P_i) 
         = \Phi_G(P_i) \cdot \vol(P_i) 
        \leq \phiOut \cdot \vol(P_i)
        \leq 2\phiOut \cdot \vol(N^-),
    \end{equation}
    where the  second  inequality follows by  Item~(2) of Claim~\ref{claim:Strond decomp claim 3}. This implies that 
    \begin{equation}\label{eq:claim:cost crossing edges eq5}
        \vol_{G[P_i]}(N^-) 
        = \vol(N^-) - w(N^-, V \setminus P_i) 
        \geq (1 - 2\phiOut) \cdot \vol(N^-),
    \end{equation}
    where the last inequality follows by  \eqref{eq:claim:cost crossing edges eq4}. Finally,  combining \eqref{eq:claim:cost crossing edges eq4} and \eqref{eq:claim:cost crossing edges eq5} gives us that
    \[
        w(N^-, V \setminus P_i) \leq \frac{2\phiOut}{1 - 2\phiOut} \cdot \vol_{G[P_i]}(N^-) \leq 2 \cdot \vol_{G[P_i]}(N^-),
    \]
    where the last inequality follows by our assumption that $\phiOut < 1/3$. 
    Therefore,   combining the two cases together gives us that  
    \begin{equation}\label{eq:claim:cost crossing edges eq6}
        w(N^-, V \setminus P_i) \leq (k+1) \cdot \vol_{G[P_i]}(N^-).
    \end{equation}
    
    Our claimed property $(A3)$ follows by summing the inequalities in   \eqref{eq:claim:cost crossing edges eq3} and \eqref{eq:claim:cost crossing edges eq6} and the fact that  $\vol_{G[P_i]}(N) = \vol_{G[P_i]}(N^+) + \vol_{G[P_i]}(N^-)$.
    
Finally, we analyse the runtime of the algorithm.
By   Claims~\ref{claim:Strong decomp claim 4} and \ref{claim:Strong decomp claim 5}, we know that the algorithm does terminate, and the total number of iterations of the main \texttt{while}-loop executed by the algorithm is upper bounded by  $O(k\cdot n \cdot\vol(G) / w_{\min})$. Notice that   this quantity is upper bounded by $O(\poly(n))$ given our assumption that $w_{\max}/w_{\min} = O(\poly(n))$.  This completes the proof. 
\end{proof}

\section{Experiments}\label{sec:Experiments}

We experimentally evaluate the performance of our proposed algorithm, and compare it against the three well-known linkage heuristics for computing hierarchical clustering trees, and 
different variants of the algorithm proposed in \cite{CAKMT17}, i.e.,~\texttt{Linkage++}, on both synthetic and real-world data sets. At a high level, \texttt{Linkage++} consists of the following three steps: 
\begin{itemize}
    \item[(i)] Project the input data points into a lower dimensional Euclidean subspace; 
    \item[(ii)] Run the \texttt{Single Linkage} algorithm \cite{cohen2018hierarchical} until $k$ clusters are left; 
    \item[(iii)] Run a \texttt{Density} based linkage algorithm on the $k$ clusters until one cluster is left.
\end{itemize}
  Specifically, our algorithm \texttt{PruneMerge} will be compared against the following $6$ algorithms:
\begin{itemize}
\item  \texttt{Average Linkage}, \texttt{Complete Linkage}, and \texttt{Single Linkage}: the three well-known linkage algorithms studied in the literature. We refer the reader to \cite{cohen2018hierarchical} for a complete description;

\item  \texttt{Linkage++}, \texttt{PCA+} and \texttt{Density}: the algorithm proposed in \cite{CAKMT17}, together with two variants also studied in \cite{CAKMT17}. The algorithm \texttt{PCA+} corresponds to running Steps~(i) and (ii) of \texttt{Linkage++} until one cluster is left (as opposed to $k$ clusters), while \texttt{Density} corresponds to running Steps~(i) and (iii) of \texttt{Linkage++}.
\end{itemize}

All algorithms were implemented in Python 3.8 and the experiments were performed using an Intel(R) Core(TM) i5-6500 CPU @ 3.20GHz processor, with 16 GB RAM. All of the reported costs below are averaged over $5$ independent runs. For the parameter $\phiIn$ used in the proof of  Lemma~\ref{lem:Improved Decomposition} (see 
\eqref{eq:PhiIn PhiOut Choices}), we set in our implementation $\phiIn = \max\left\{\frac{\lambda_{k+1}}{140\cdot (k+1)^2}, 2 \cdot \lambda_k\right\}$. This value is used throughout the experiments reported here.  Our code can be downloaded from
\begin{center}\url{https://github.com/bgmang/hierarchical-clustering-well-clustered-graphs.git}.
\end{center}

\subsection{Experiments on synthetic data sets} We first compare the performance of our algorithm with the aforementioned other algorithms on synthetic data sets.

 \paragraph{Clusters of the same size.}
Our first set of experiments employ input graphs generated according to random stochastic models, where all clusters have the same size. For our first experiment, we look at graphs generated  according to the standard Stochastic Block Model (\textsf{SBM}).
We first set the number of clusters as $k=3$, and the number of vertices in each cluster $\{P_i\}_{i=1}^3$ as $1,000$. We assume that any pair of vertices within each cluster is connected by an edge with probability $p$, and any pair of vertices from different clusters is connected by an edge with probability $q$. We fix the value $q=0.002$, and consider different values of $p\in[0.04, 0.2]$. Our experimental results are illustrated in Figure~\ref{Fig:SBM and HSBM}(a).

For our second experiment, we consider graphs generated according to a hierarchical stochastic block model \textsf{(HSBM)} \cite{cohen2018hierarchical}. This model assumes the existence of a ground-truth hierarchical structure of the clusters. For the specific choice of parameters, we set the number of clusters as $k = 5$, and the number of vertices in each cluster $\{P_i\}_{i=1}^5$ as $600$. For every pair of vertices 
$(u, v) \in P_i \times P_j$,  we assume that $u$ and $v$ are connected by an edge with probability $p$ if $i = j$;  otherwise $u$ and $v$ are connected by an edge with probability $q_{i, j}$ defined as follows: (i) for all $i \in \{1, 2, 3\}$ and $j \in \{4, 5\}, q_{i, j} = q_{j, i} = q_{\mathrm{min}}$; (ii) for $i \in \{1, 2\}$, $q_{i, 3} = q_{3, i} = 2 \cdot q_{\mathrm{min}}$; (iii) $q_{4, 5} = q_{5, 4} = 2 \cdot q_{\mathrm{min}}$; (iv) $q_{1, 2} = q_{2, 1} = 3 \cdot q_{\mathrm{min}}$. We fix the value $q_{\mathrm{min}} = 0.0005$ and consider different values of $p \in [0.04, 0.2]$. Our results are reported in  Figure~\ref{Fig:SBM and HSBM}(b). We remark that this choice of parameters resembles similarities with \cite{CAKMT17}, and this ensures that the underlying graphs exhibit a ground truth hierarchical structure of clusters.

As reported in Figure~\ref{Fig:SBM and HSBM}, our experimental results for both sets of graphs are similar, and the performance of our algorithm is marginally better than \texttt{Linkage++}.
This is well expected, as  \texttt{Linkage++} is specifically designed for the \textsf{HSBM}, in which all the clusters have the same inner density characterised by parameter $p$, and their algorithm achieves a $(1 +o(1))$-approximation for those instances.

\begin{figure}[h]
    \centering
    \hspace{-1cm}
    \begin{subfigure}[h]{.43\textwidth}
        \centering
        \scalebox{.25}{\includegraphics{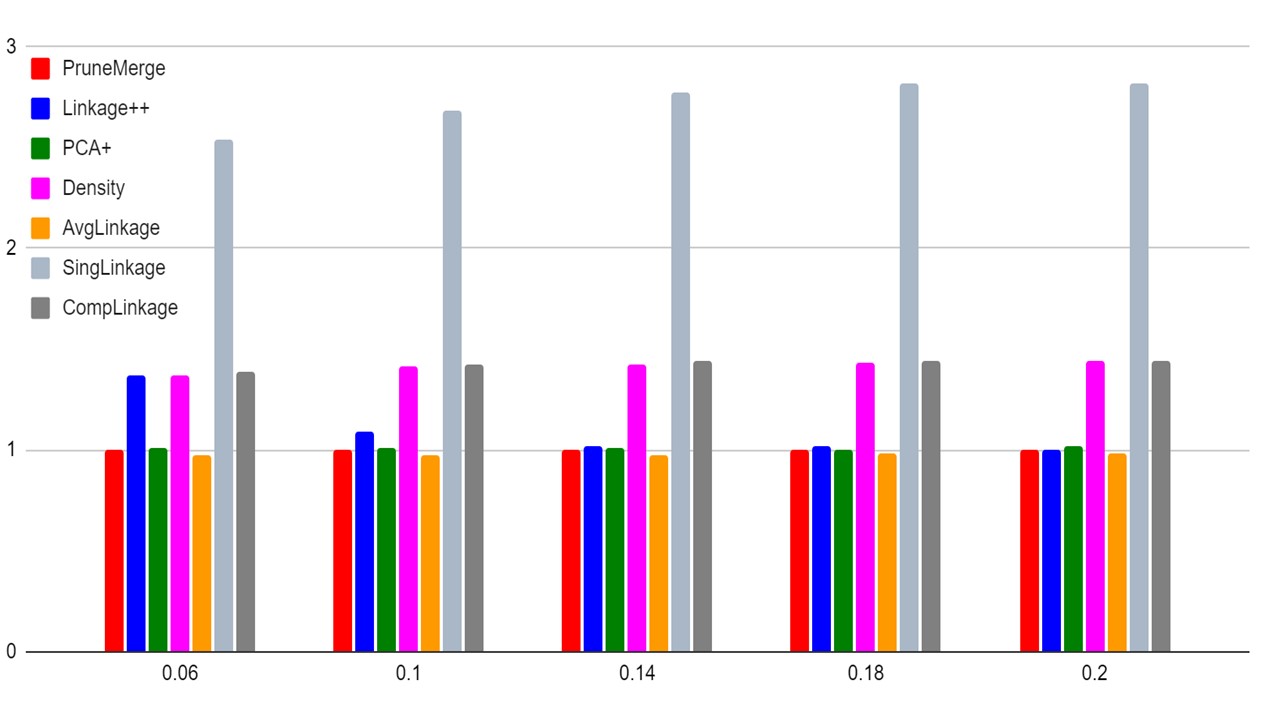}}
    \caption{}
    \label{Fig:SBM}
    \end{subfigure}
    \hspace{0.5cm}
    \begin{subfigure}[h]{.43\textwidth}
        \centering
        \scalebox{.25}{\includegraphics{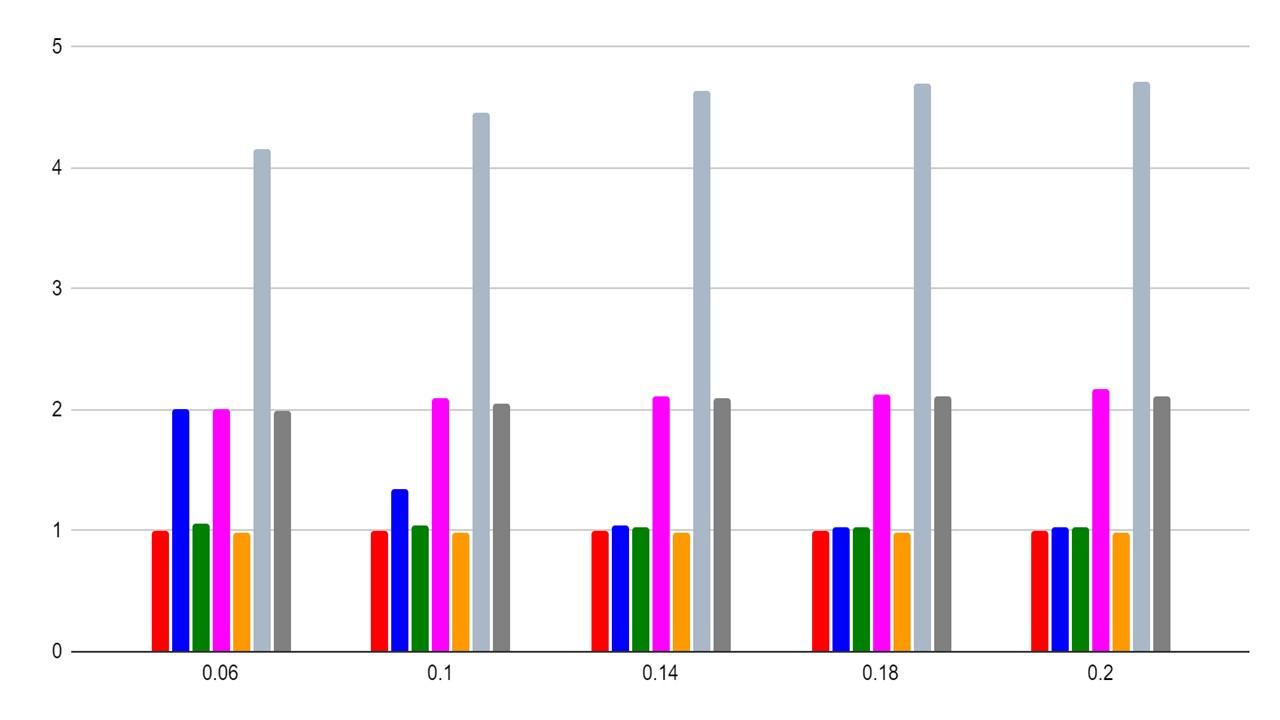}}
    \caption{}
    \label{Fig:HSBM}
    \end{subfigure}
    \caption{\small{Results for clusters of the same size. The $x$-axis represents different values of $p$, while the $y$-axis represents the cost of the algorithms' returned \textsf{HC} trees normalised by  
     the cost of \texttt{PruneMerge}. Figure~(a) corresponds to inputs generated according to the \textsf{SBM}, while Figure~(b) corresponds to those generated according to the \textsf{HSBM}.} }
    \label{Fig:SBM and HSBM}
\end{figure}

\paragraph{Clusters with non-uniform densities.} Next we study graphs in which edges are present \emph{non-uniformly} within each cluster~(e.g., Figure~\ref{Fig:Expande_Clique_Example}(a) discussed earlier). Specifically, we set $k=3, |P_i| = 1000, q=0.002$, $p=0.06$, and  every pair of vertices $(u, v) \in P_i \times P_j$ is connected by an edge with probability $p$ if $i = j$ and probability $q$ otherwise. Moreover,   we choose a random set $S_i\subset P_i$ of size $|S_i| = c_p\cdot |P_i|$ from each cluster, and add edges to connect every pair of vertices in each $S_i$ so that the vertices of each $S_i$ form a clique. By setting different values of $c_p\in[0.05, 0.4]$, the performance of our algorithm is about  $20\%$ -- $50\%$ better than  \texttt{Linkage++}  with respect to the cost value of the constructed tree, see 
Figure~\ref{Fig:SBM Planted Clique}(a) for detailed results. To explain the outperformance of our algorithm, notice that, by adding a clique into some cluster, the cluster structure is usually preserved with respect to   $(\Phi_{\mathrm{in}}, \Phi_{\mathrm{out}})$ or similar eigen-gap  assumptions on well-clustered graphs.  However, the existence of such a clique within some cluster would make the vertices' degrees highly unbalanced; as such many clustering algorithms that involve the matrix perturbation theory in their analysis might not work well.

\begin{figure}[h]
    \centering
    \hspace{-1cm}
    \begin{subfigure}[h]{.43\textwidth}
        \centering
        \scalebox{.25}{\includegraphics{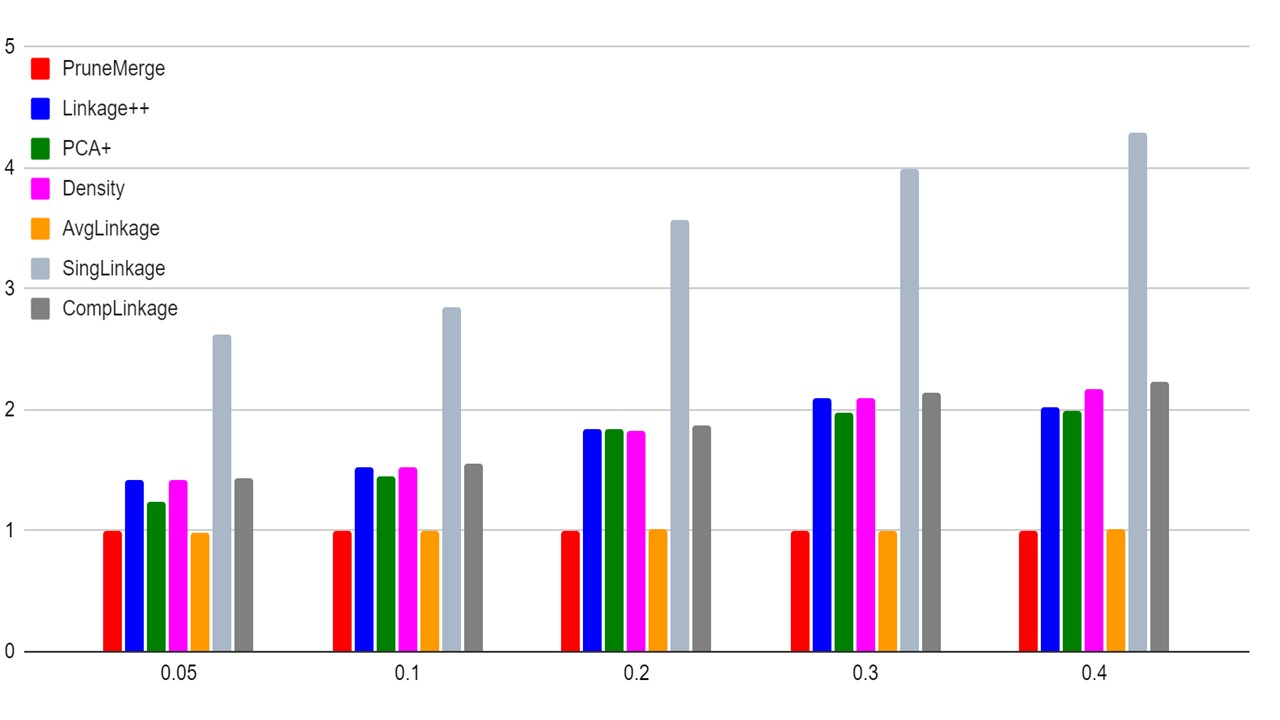}}
    \caption{}
    \label{Fig:SBM Planted Clique Same Sizes}
    \end{subfigure}
\hspace{0.5cm}
    \begin{subfigure}[h]{.43\textwidth}
        \centering
        \scalebox{.25}{\includegraphics{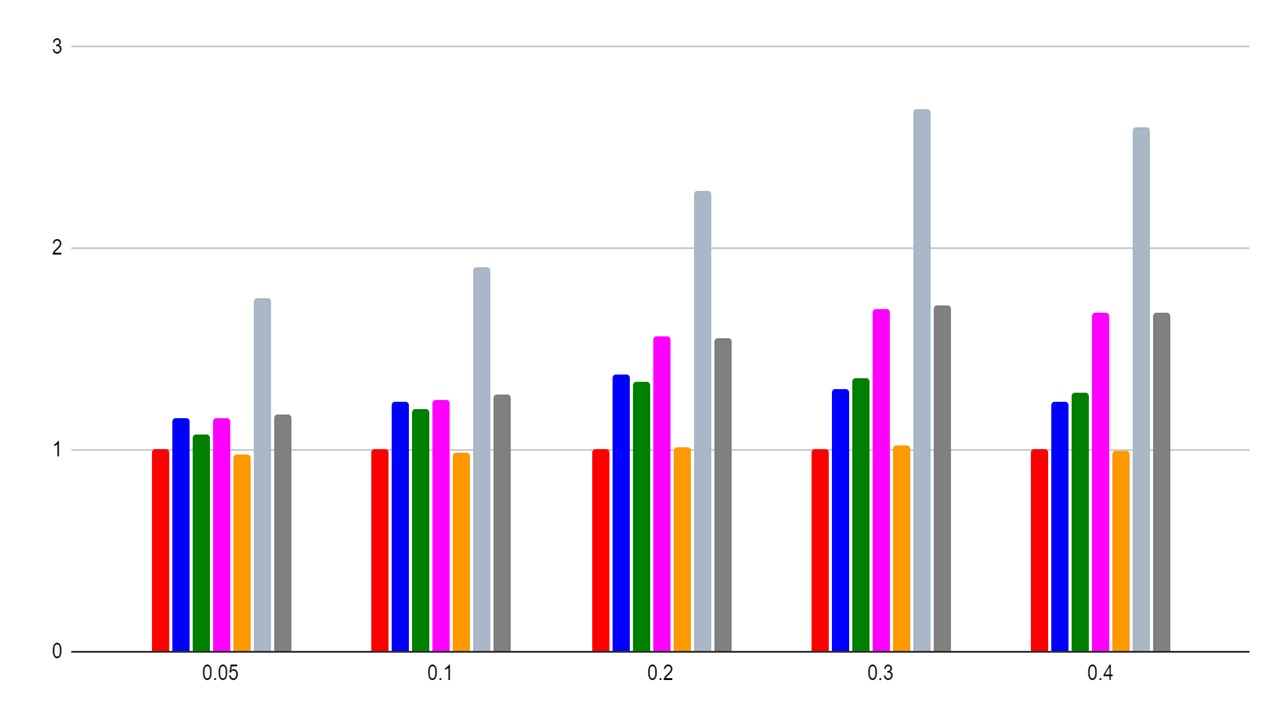}}
    \caption{}
    \label{Fig:SBM Planted Clique Diff Sizes}
    \end{subfigure}
    \caption{\small{Results for graphs with non-uniform densities, or different sizes. The $x$-axis represents the $c_p$-values, while the $y$-axis represents the cost of the algorithms' constructed trees normalised by the cost of the ones constructed by \texttt{PruneMerge}. Figure~(a) corresponds to inputs where all clusters have the same size, while in Figure~(b) the clusters have different sizes. } 
    \label{Fig:SBM Planted Clique}}
\end{figure}

 \paragraph{Clusters of different sizes.} To further highlight the significance of our algorithm on synthetic graphs of non-symmetric structures among the clusters, we study the graphs  in which the clusters have different sizes. We choose the same set of $k$ and $q$ values as before~($k=3, q = 0.002$), but set the sizes of the clusters to be $|P_1| = 1900, |P_2| = 900$ and $|P_3| = 200$. Every pair of vertices $u, v \in P_i$, for $i \in \{1, 2\}$ is connected by an edge with probability $p_1 = 0.06$, while pairs of vertices $u, v \in P_3$ are connected with probability\footnote{Such  choice of $p_2$ is to compensate for the small size of cluster $P_3$, and this ensures that the outer conductance $\Phi_G(P_3)$ is low.} $p_2 = 5 \cdot p_1 = 0.3$. We further plant a clique $S_i \subset P_i$ of size $|S_i| = c_p \cdot P_i$ for each cluster $P_i$, as in the previous set of experiments.   By choosing different values of $c_p$ from $ [0.05, 0.4]$, our results are reported in Figure~\ref{Fig:SBM Planted Clique}(b), demonstrating that our algorithm performs better than the ones in   \cite{CAKMT17}.

\subsection{Experiments on real-world data sets}

To evaluate the performance of our algorithm on real-world data sets, we follow the sequence of recent work on hierarchical clustering~\cite{abboud2019subquadratic,CAKMT17, menon2019online,roy2017hierarchical}, all of which are based on the following 5 data sets from the Scikit-learn library \cite{scikit-learn} as well as the UCI ML repository~\cite{UCIML}: Iris, Wine, Cancer, Boston and Newsgroup\footnote{Due to the very large size of this data set, we consider only a subset consisting of ``comp.graphics'', ``comp.os.ms-windows.misc'', ``comp.sys.ibm.pc.hardware'', ``comp.sys.mac.hardware'', ``rec.sport.baseball'', and ``rec.sport.hockey''.}. Similar with \cite{roy2017hierarchical}, for each data set we construct the similarity graph based on the Gaussian kernel, in which the $\sigma$-value is chosen according to the standard heuristic~\cite{ng2001spectral}. 
 As reported in Figure~\ref{figure:real}, our algorithm performs marginally worse than \texttt{Linkage++} and significantly better than \texttt{PCA+}.

\begin{figure}[h]
    \centering
    \scalebox{.25}{\includegraphics{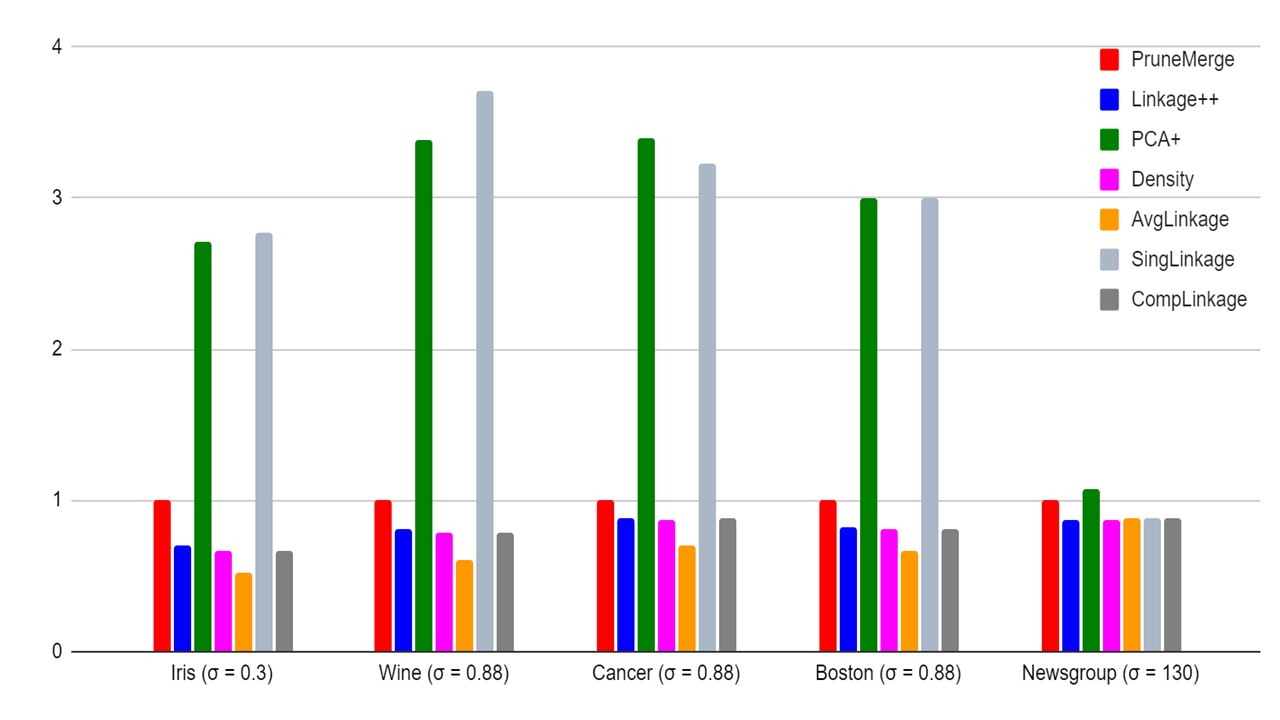}}
    \caption{\small{Results on real-world data sets. The $x$-axis represents the various data sets and our choice of the $\sigma$-value used for constructing the similarity graphs. The $y$-axis corresponds to the cost of the algorithms' output normalised by the cost of  \texttt{PruneMerge}.\label{figure:real}}}
\end{figure}
% \newpage

\section{Conclusion}\label{sec:Conclusion}

Our experimental results on synthetic data sets demonstrate that our presented algorithm \texttt{PruneMerge} not only has excellent theoretical guarantees, but also produces output of lower cost than the previous algorithm \texttt{Linkage++}. In particular, the outperformance of our algorithm is best illustrated on graphs whose clusters have asymmetric internal structure and non-uniform densities. On the other side, our experimental results on real-world data sets show that the performance of \texttt{PruneMerge} is inferior to \texttt{Linkage++} and especially to \texttt{Average Linkage}. We believe that developing more efficient algorithms for well-clustered graphs is a very meaningful direction for future work.

Finally, our  experimental results   indicate that the \texttt{Average Linkage} algorithm performs extremely well on all tested instances, when compared to \texttt{PruneMerge} and \texttt{Linkage++}. This leads to the open question whether \texttt{Average Linkage} achieves an $O(1)$-approximation for well-clustered graphs, although  it fails to achieve this for general graphs~\cite{cohen2018hierarchical}. 
In our point of view, the answer to this question could
help us design more efficient algorithms for hierarchical clustering that not only work in practice, but also have rigorous theoretical guarantees.

\bibliographystyle{alpha}
\bibliography{reference}

% \newpage
\appendix

\end{document}